\newtheorem{theorem}{Theorem}[section]
\newtheorem{proposition}[theorem]{Proposition}
\newtheorem{lemma}[theorem]{Lemma}
\newtheorem{remark}[theorem]{Remark}
\newcommand{\ud}{\mathrm{d}}
\newcommand{\tr}{\mathrm{tr}}
\newcommand{\vp}{\varphi}
\newcommand{\vpt}{\varphi_{t}}
\newcommand{\vpta}{\varphi_{t}^{(\alpha)}}
\newcommand{\vps}{\varphi_{s}}
\newcommand{\vpsa}{\varphi_{s}^{(\alpha)}}
\newcommand{\vptan}{\varphi_{t}^{(\alpha_N)}}
\newcommand{\ovptan}{\overline{\varphi}_t^{(\alpha_N)}}
\newcommand{\bR}{{\mathbb R}}
\newcommand{\bN}{{\mathbb N}}
\newcommand{\bC}{{\mathbb C}}
\newcommand{\hoart}{H^{1}_{A}(\mathbb{R}^{3})}
\newcommand{\hoa}{H^{1}_{A}}
\newcommand{\pt}{\phi_t}
\newcommand{\opt}{\overline{\phi}_t}
\newcommand{\pn}{\psi_{N}}
\newcommand{\pnt}{\psi_{N, t}}
\newcommand{\gnt}{\gamma_{N, t}}
\newcommand{\gntk}{\gamma_{N, t}^{(k)}}
\newcommand{\gnto}{\gamma_{N, t}^{(1)}}
\newcommand{\xk}{\mathbf{x}_{k}}
\newcommand{\xkd}{\mathbf{x}_{k}'}
\newcommand{\xNk}{\mathbf{x}_{N-k}}
\newcommand{\cha}{{\cal{H}}_{A}}
\newcommand{\chna}{{\cal{H}}_{N}^{\alpha}}
\newcommand{\cF}{{\cal{F}}}
\newcommand{\cN}{{\cal{N}}}
\newcommand{\cU}{{\cal{U}}}
\newcommand{\cL}{{\cal{L}}}
\newcommand{\cK}{{\cal{K}}}
\newcommand{\cW}{{\cal{W}}}
\newcommand{\cM}{{\cal{M}}}
\newcommand{\im}{\mbox{Im}}
\begin{document}
\title{Mean-field quantum dynamics with magnetic fields}
\author{Jonas L\"uhrmann \\
\\
Department of Mathematics, ETH Zurich, \\
R\"amistrasse 101, CH-8092 Zurich, Switzerland}

\maketitle

\begin{abstract}
 We consider a system of $N$ bosons in three dimensions interacting through a mean-field Coulomb potential in an external magnetic field. For initially factorized states we show that the one-particle density matrix associated with the solution of the $N$-body Schr\"odinger equation converges to the projection onto the solution of the magnetic Hartree equation in trace norm and in energy as $N \rightarrow \infty$. Estimates on the rate of convergence are provided.
\end{abstract}

\section{Introduction}
\setcounter{equation}{0}

We investigate the mean-field quantum dynamics of a system of $N$ identical and spinless bosons in three dimensions subject to an external magnetic field. The state of the system is described by a symmetric wave function $\pn \in L^2(\bR^{3N})$ with $\|\pn\|_2 = 1$. We consider two-particle Coulomb interactions. The external magnetic field is generated by a magnetic vector potential $A: \bR^3 \rightarrow \bR^3$. The Hamiltonian of the system is then given by
\begin{equation} \label{equ:hamiltonian}
 H_{N} \, = \, \sum_{j=1}^{N}{\, (-i \nabla_{x_{j}} + A(x_{j}))^{2}} + \frac{1}{N} \sum_{i<j}^{N}{\, \frac{\lambda}{|x_{i}-x_{j}|}},
\end{equation}
where $x_j \in \bR^3$ denotes the position of the $j$-th particle and $\lambda \in \bR$ is a coupling constant. The factor $\frac{1}{N}$ in front of the interaction potential ensures that the kinetic and potential energy have the same scaling behavior in $N$ and corresponds to very weak interactions between the particles. 

\medskip

The time evolution of the system is governed by the Schr\"odinger equation 
\begin{equation} \label{equ:schroed_equ}
 i \partial_{t} \pnt \, = \, H_{N} \pnt
\end{equation}
with initial datum $\psi_{N, t=0} \, = \, \pn$, where $\pnt$ denotes the wave function of the system at time $t$. Here and henceforth, the subscript $t$ to a quantity denotes its time-dependence. We consider factorized initial states $\pn = \vp^{\otimes N}$ for some $\vp \in L^2(\bR^3)$. Under appropriate assumptions about the magnetic vector potential $A$, the Hamiltonian \eqref{equ:hamiltonian} can be self-adjointly realized on $L^2(\bR^{3N})$. By Stone's Theorem, the unique solution to \eqref{equ:schroed_equ} is then given by $\pnt = e^{-iH_N t} \pn$. 

\medskip

When applying this model to real-world physical systems we are facing numbers of particles of several powers of ten. Owing to the large number of particles it is practically impossible to obtain any qualitative information about the behavior of the system from the solution $\pnt = e^{-i H_N t} \pn$. However, in the mean-field regime that we consider it is possible to derive effective evolution equations which are on the one hand at least numerically solvable and which on the other hand give a good approximate description of the macroscopic behavior of the system. 

\medskip

Due to the weak interactions between the particles one expects that the wave function $\pnt$ also stays factorized at later times $t > 0$, i.e. $\pnt \simeq \vpt^{\otimes N}$ in a sense to be made precise. A simple heuristic argument shows that in the limit $N \rightarrow \infty$ the one-particle wave function $\vpt$ is expected to satisfy the magnetic Hartree equation 
\begin{equation} \label{equ:hartree_equ}
          i \partial_{t} \vp_{t} \, = \, (-i \nabla + A)^{2} \vp_{t} + (\frac{\lambda}{|\cdot|} \ast |\vp_{t}|^{2}) \vp_{t}
\end{equation}
with initial datum $\vp_{t=0} = \vp$. 

\medskip

We define the density matrix $\gnt$ associated with the state $\pnt$ as the orthogonal projection onto $\pnt$, i.e. 
\begin{equation*}
 \gnt \, = |\pnt \rangle \langle \pnt|.
\end{equation*}
The operator $\gnt$ is a positive trace class operator on $L^2(\bR^{3N})$ with unit trace. For every $k \in \{1, \ldots, N\}$ we also define the corresponding $k$-particle marginal density $\gntk$ through its integral kernel
\begin{equation*}
 \gntk(\xk; \xkd) \, = \, \int_{\mathbb{R}^{3(N-k)}}{\ud \mathbf{x}_{N-k} \, \pnt(\xk, \xNk) \overline{\pnt(\xkd, \xNk)}},
\end{equation*}
where $\xk = (x_{1}, \ldots, x_{k})$, $\xkd = (x_{1}', \ldots, x_{k}') \in \mathbb{R}^{3k}$ and $\xNk = (x_{k+1}, \ldots, x_{N}) \in \mathbb{R}^{3(N-k)}$ with $x_{j}, x_{j}' \in \bR^3$ for $j \in \{ 1, \ldots, N\}$. It follows that $\gntk$ is a positive trace class operator with unit trace on $L^2(\bR^{3k})$.

\medskip

It turns out that on the level of marginal densities one can show that the limiting dynamics as $N \rightarrow \infty$ of the many-body linear Schr\"odinger equation \eqref{equ:schroed_equ} is given by the solutions to the one-body nonlinear Schr\"odinger equation \eqref{equ:hartree_equ},
\begin{equation} \label{trace_norm_mf_limit}
 \lim_{N \rightarrow \infty}{ \, \tr{\, \Bigl| \gntk - |\vpt \rangle \langle \vpt|^{\otimes k} \Bigr|}} \, = \, 0
\end{equation}
for every fixed $k \in \bN$ and every fixed $t \in \bR$. \\
Furthermore, it is of interest to show that the convergence to the limiting Hartree dynamics also holds in energy, i.e.
\begin{equation} \label{equ:energy_convergence}
 \lim_{N \rightarrow \infty}{\, \tr{\, \Bigl| (-i\nabla+A)^{\otimes k} \Bigl(\gntk - |\vpt \rangle \langle \vpt|^{\otimes k}\Bigr) (-i\nabla+A)^{\otimes k} \Bigr|}} \, = \, 0
\end{equation}
for every fixed $k \in \bN$ and every fixed $t \in \bR$. 

\medskip

The study of mean-field quantum dynamics has a relatively long history. Unless stated otherwise, the following results refer to non-relativistic systems with two-particle interactions given by an interaction potential $V$ and without an external magnetic field. 

The first results establishing a relation between the many-body Schr\"odinger evolution and the nonlinear Hartree dynamics for smooth interaction potentials $V$ were obtained by Hepp in \cite{H74}. Ginibre and Velo generalized his results to singular potentials in \cite{GV}. The first proof of the convergence \eqref{trace_norm_mf_limit} for bounded potentials $V$ was given by Spohn \cite{S80}. His method is based on expanding the BBGKY hierarchy of evolution equations for marginals. Since then progress has been made mainly in two directions: First, to show the convergence \eqref{trace_norm_mf_limit} for more singular potentials and second, to obtain estimates on the rate of convergence of \eqref{trace_norm_mf_limit}. 

In \cite{EY01}, Erd\H os and Yau generalized and extended Spohn's method to the Coulomb potential $V(x) = \frac{\lambda}{|x|}$, $\lambda \in \mathbb{R}$. Partial results in this direction had been obtained before by Bardos, Golse and Mauser (see \cite{BEGMY02} and \cite{BGM00}). The method was extended by Elgart and Schlein in \cite{ElS07} to the case of semi-relativistic systems with Coulomb interactions. See also \cite{FGS07} and \cite{FKS09} for further results.

Rodnianski and Schlein \cite{RS09} proved the convergence \eqref{trace_norm_mf_limit} for Coulomb-type interactions using an idea of Hepp \cite{H74}. They obtained an estimate on the rate of convergence of the type
\begin{equation*}
 \tr{\, \left|\gntk - |\vpt \rangle \langle \vpt|^{\otimes k}\right|} \, \leq \, \frac{C(k)}{\sqrt{N}} e^{K(k) t},
\end{equation*}
where $C(k), K(k) > 0$ are $k$-dependent constants. \\
In \cite{KP}, Knowles and Pickl obtained estimates on the rate of convergence for more singular potentials for non-relativistic systems and for Coulomb interactions for semi-relativistic systems.\\
Chen, Lee and Schlein \cite{CLS} derived optimal estimates on the rate of convergence \eqref{trace_norm_mf_limit} for one-particle marginals for non-relativistic systems with Coulomb interactions 
\begin{equation*}
 \tr{\, \Bigl| \gamma_{N, t}^{(1)} - |\vpt \rangle \langle \vpt| \Bigr| } \, \leq \, \frac{C e^{Kt}}{N},
\end{equation*}
where $C, K > 0$ are constants. \\
Michelangeli and Schlein \cite{MS} obtained the first result that the convergence to the limiting Hartree dynamics also holds in energy. For semi-relativistic systems with Coulomb interactions they proved the corresponding convergence \eqref{equ:energy_convergence} for one-particle marginals together with an estimate on the rate of convergence. 

A similar analysis has been carried out for systems with two-particle interactions that have a singular scaling in $N$ and tend to a delta-interaction as $N \rightarrow \infty$. The many-body quantum dynamics is then approximated by the Gross-Pitaevskii equation (see \cite{ESY10} and references therein). 

\medskip

In this work we extend results from \cite{KP} and \cite{MS} to the case of an external magnetic field. Throughout we will make the following assumption regarding the magnetic vector potential $A: \bR^3 \rightarrow \bR^3$ and the generated magnetic field $B = \nabla \times A$: 

\medskip

\noindent \textbf{Assumption (A).} Let $A \in C^{\infty}(\mathbb{R}^{3}; \mathbb{R}^{3})$ and define $B = \nabla \times A$. Assume that there exists $\varepsilon > 0$ such that
\begin{align*}
|\partial^{\alpha} B(x)| \, & \leq \, C_{\alpha} (1+|x|)^{-(1+\varepsilon)} && \forall \, |\alpha| \geq 1, \forall \, x \in \mathbb{R}^{3}, \\
|\partial^{\alpha} A(x)| \, & \leq \, C_{\alpha} && \forall \, |\alpha| \geq 1, \forall \, x \in \mathbb{R}^{3},
\end{align*}
where $C_{\alpha}$ are constants depending only on the multi-index $\alpha$. 

\medskip

Note that the vector potential $A(x) = \frac{1}{2} B_0 \times x$ generating a constant magnetic field $B_0$ fulfills this assumption. Also, smooth compactly supported perturbations of linear magnetic vector potentials satisfy the hypothesis. 

\medskip

In order to state our main results we need to introduce some notation. Denote $D_j \equiv (-i\partial_j + A_j)$ for $j \in \{ 1, 2, 3\}$. We define the $k$-th order magnetic Sobolev space $H_A^k(\bR^3)$ for $k \in \bN$ by
\begin{equation*}
 H^k_A(\bR^3) \, := \, \Bigl\{ \vp \in L^2(\bR^3) \, \Big| \, \| D_1^{\alpha_1} D_2^{\alpha_2} D_3^{\alpha_3} \vp \|_2 < \infty \, \, \mbox{for all} \, \alpha \in \bN_0^3 \, \mbox{with} \, |\alpha| = \sum_{j=1}^{3} \alpha_j \leq k \Bigr\}
\end{equation*}
with the norm
\begin{equation*}
 \|\vp\|_{H^k_A}^2 \, := \, \sum_{|\alpha| \leq k}{\, \| D_1^{\alpha_1} D_2^{\alpha_2} D_3^{\alpha_3} \vp \|_2^2}.
\end{equation*}

\begin{theorem} \label{thm:trace_convergence}
Let A $\in C^{\infty}(\bR^{3}; \bR^{3})$ satisfy assumption (A) and let $\vp \in H^{1}_{A}(\bR^{3})$ with $\|\vp\|_{2} = 1$. Set $\pn = \vp^{\otimes N}$. Let $\lambda \in \bR$ and let $\pnt = e^{-iH_{N}t} \pn$ be the evolution of the initial wave function $\pn$ with respect to the Hamiltonian \eqref{equ:hamiltonian}. Denote by $\gntk$ the $k$-particle marginals associated with $\pnt$ and denote by $\vpt$ the solution to the initial value problem for the magnetic Hartree equation \eqref{equ:hartree_equ} with initial datum $\vp_{t=0} = \vp$. Then there exists a constant $C > 0$ such that, for $k \in \bN$ and $t \in \bR$,
\begin{equation}
 \tr{ \, \Bigl| \gntk - |\vpt \rangle \langle \vpt|^{\otimes k} \Bigr| } \, \, \leq \, \, \sqrt{8} \, \sqrt{\frac{k}{N}} \, e^{C t}
\end{equation}
holds for all $N \geq k$. In particular, this implies for every fixed $k \in \mathbb{N}$ and every fixed $t \in \mathbb{R}$
\begin{equation}
 \lim_{N \rightarrow \infty}{ \tr{ \, \Bigl| \gntk - |\vpt \rangle \langle \vpt|^{\otimes k} \Bigr| } } \, = \, 0.
\end{equation}
\end{theorem}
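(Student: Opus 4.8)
The plan is to adapt the counting-functional method of Knowles and Pickl \cite{KP} to the magnetic setting. Write $p_t := |\vpt\rangle\langle\vpt|$ for the orthogonal projection onto the solution $\vpt$ of the magnetic Hartree equation \eqref{equ:hartree_equ} and $q_t := 1 - p_t$ on $L^2(\bR^3)$, and let $p_j$, $q_j$ denote the operators on $L^2(\bR^{3N})$ acting as $p_t$, resp.\ $q_t$, on the $j$-th variable. For a weight $n\colon\{0,1,\dots,N\}\to[0,\infty)$ let $\widehat n$ act as multiplication by $n(k)$ on the eigenspace of $\sum_{j=1}^N q_j$ with eigenvalue $k$; with $n(k) = \sqrt{k/N}$ the quantity
\[
 \alpha_N(t) \; := \; \langle \pnt,\, \widehat n\, \pnt\rangle
\]
measures, in a weighted sense, the fraction of particles not in the state $\vpt$. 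The proof then reduces to two assertions: (i) a comparison bounding the marginal trace norms by $\alpha_N$, and (ii) a Gronwall estimate for $\alpha_N$.

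For (i), standard estimates relating the trace norm of the difference of a density matrix and a rank-one projection to the corresponding overlap, combined with the operator inequality $1 - p_1\cdots p_k \le \sum_{j=1}^k q_j$ (the $p_j$ commute, so this follows by joint diagonalization), the permutation symmetry of $\pnt$, and $k/N\le\sqrt{k/N}$, give a bound of the form $\tr\bigl| \gntk - |\vpt\rangle\langle\vpt|^{\otimes k}\bigr| \le C\sqrt{k\,\alpha_N(t)}$ for all $N\ge k$; the magnetic field plays no role here, and tracking the constants produces the factor $\sqrt8$ (cf.\ \cite{RS09}). Since the initial state is $\vp^{\otimes N}$ and $\vp_0=\vp$, we have $q_j\,\pn=0$ for every $j$, hence $\alpha_N(0)=0$, and it remains to prove the differential inequality of (ii).

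For (ii), set $h_t := (-i\nabla+A)^2 + (\tfrac{\lambda}{|\cdot|}\ast|\vpt|^2)$, so that \eqref{equ:hartree_equ} reads $i\partial_t\vpt = h_t\vpt$, and let $\widetilde H_N := \sum_{j=1}^N h_t^{(j)}$, where $h_t^{(j)}$ denotes $h_t$ acting on the $j$-th variable. Because $p_t$ is transported by the one-body flow generated by $h_t$, so that $q_t$ and hence $\widehat n$ depend on $t$ only through $h_t$, a short computation gives $\partial_t\widehat n = -i[\widetilde H_N,\widehat n]$; together with \eqref{equ:schroed_equ} this yields the exact identity
\[
 \frac{d}{dt}\alpha_N(t) \; = \; \bigl\langle \pnt,\; i\,[\,H_N - \widetilde H_N,\; \widehat n\,]\;\pnt\bigr\rangle .
\]
The crucial point is that $H_N - \widetilde H_N = \tfrac{1}{N}\sum_{i<j}\tfrac{\lambda}{|x_i-x_j|} - \sum_{j=1}^N(\tfrac{\lambda}{|\cdot|}\ast|\vpt|^2)(x_j)$, so \emph{all} magnetic kinetic terms $\sum_j(-i\nabla_{x_j}+A(x_j))^2$ cancel and only the interaction enters $\tfrac{d}{dt}\alpha_N$. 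Estimating the commutator by inserting resolutions $1 = p_i + q_i$ — only terms changing the number of ``bad'' particles by one or two survive, with a gain of $N^{-1}$ beyond the first such particle — and using (a) the bound $\|\tfrac{\lambda}{|\cdot|}\ast|\vpt|^2\|_\infty \le C\|\vpt\|_{H^1_A}^2$ (diamagnetic and Sobolev inequalities), (b) the relative form bound $\tfrac{1}{|x-y|} \le \varepsilon\,(-i\nabla_x+A(x))^2 + C_\varepsilon$, which follows from Hardy's inequality and the diamagnetic inequality $\|\nabla|u|\|_2\le\|(-i\nabla+A)u\|_2$, and (c) the uniform bound $\langle\pnt,\sum_j(-i\nabla_{x_j}+A(x_j))^2\pnt\rangle \le CN$, a consequence of the conservation of $\langle H_N\rangle$, the initial bound $\langle H_N\rangle_{\pn}\le CN$ for $\vp\in H^1_A$, and (b) summed over pairs with relative bound $<1$ — one arrives, as in \cite{KP}, at $\bigl|\tfrac{d}{dt}\alpha_N(t)\bigr| \le C(t)(\alpha_N(t) + N^{-1})$, where $C(t)$ depends only on $\|\vpt\|_{H^1_A}$ and on the energy per particle. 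Since $\alpha_N(0)=0$, Gronwall's lemma gives $\alpha_N(t) \le N^{-1}e^{Ct}$ after absorbing the growth of $\|\vpt\|_{H^1_A}$ into $C$, and combining this with (i) yields the claimed bound $\sqrt{8}\,\sqrt{k/N}\,e^{Ct}$.

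The main obstacle is the unbounded magnetic vector potential $A$, which intervenes in two ways. First, the cancellation producing the identity for $\tfrac{d}{dt}\alpha_N$ relies on $p_t$ being transported exactly by the one-body magnetic Hartree flow; making this rigorous requires the global well-posedness of \eqref{equ:hartree_equ} in $H^1_A(\bR^3)$ together with an a priori bound on $\|\vpt\|_{H^1_A}$ — this is precisely where Assumption (A) enters, guaranteeing essential self-adjointness of $(-i\nabla+A)^2$, local Lipschitz continuity of the nonlinearity on $H^1_A$, and the Kato-type estimates propagating the magnetic Sobolev norm. Second, every interaction estimate that in the field-free case is carried out with $-\Delta$ and ordinary Sobolev spaces must be redone for the magnetic Laplacian; the diamagnetic inequality is the bridge, but one must check that the lower-order terms in $(-i\nabla+A)^2 = -\Delta + i(\nabla\cdot A) + 2iA\cdot\nabla + |A|^2$, in which $|A|$ may grow linearly, do not spoil the relative form bounds — again controlled by the derivative bounds on $A$ in Assumption (A). Once these points are settled, the combinatorial estimate of (i) and the Gronwall argument of (ii) proceed essentially as in the non-magnetic case.
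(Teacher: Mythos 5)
Your proposal is essentially the paper's argument: the paper proves this theorem by verifying the hypotheses (A1)--(A4) of the abstract Knowles--Pickl theorem --- self-adjointness of $(-i\nabla+A)^2$ and of $H_N$ (Leinfelder--Simader, Avron--Herbst--Simon plus Kato--Rellich), the bound $\| |\cdot|^{-2}\ast|\vp|^2\|_\infty \lesssim \|\vp\|_{H^1_A}^2$ via the magnetic Hardy and diamagnetic inequalities, and global well-posedness of the magnetic Hartree equation in $H^1_A$ with a uniform bound (Proposition \ref{prop:gwp_hartree}, proved via Yajima's magnetic Strichartz estimates) --- and then quotes the resulting bound $\sqrt{8}\sqrt{k/N}\,e^{\phi(t)/2}$. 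You arrive at the same conclusion using exactly these magnetic ingredients, merely inlining a sketch of the Knowles--Pickl counting-functional proof instead of citing it, so the approach is the same.
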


\medskip

Moreover, we show that on the level of the one-particle marginals the convergence of the many-body linear dynamics to the Hartree dynamics also holds in energy as $N \rightarrow \infty$. Due to technical reasons we have to introduce a regularization of the Coulomb interaction potential that vanishes in the limit $N \rightarrow \infty$. For a sequence $\alpha = (\alpha_N)_{N \in \bN}$ with $\alpha_N > 0$ for all $N \in \bN$ and $\alpha_N \rightarrow 0$ as $N \rightarrow \infty$, we define the regularized Hamiltonian
\begin{equation} \label{equ:regularised_hamiltonian}
 H_N^{\alpha} \, = \, \sum_{j=1}^N (-i\nabla_{x_j} + A(x_j))^2 + \frac{1}{N} \sum_{i<j}^N \frac{\lambda}{|x_i - x_j| + \alpha_N}.
\end{equation}

\begin{theorem} \label{thm:energy_convergence}
Let $A \in C^{\infty}(\bR^{3}; \bR^{3})$ satisfy assumption (A) and let $\vp \in H_A^3(\bR^{3})$ with $\|\vp\|_{2} = 1$. Set $\pn = \vp^{\otimes N}$. Consider an arbitrary sequence $(\alpha_N)_{N \in \bN}$ with $\alpha_N > 0$ for all $N \in \bN$ and such that $N^{\beta} \alpha_N \rightarrow \infty$ as $N \rightarrow \infty$ for some $\beta > 0$. Let $\lambda \in \bR$ and let $\pnt = e^{-i H_N^{\alpha} t} \pn$ be the evolution of the initial wave function $\pn$ with respect to the regularized Hamiltonian \eqref{equ:regularised_hamiltonian}. Let $\gamma_{N, t}^{(1)}$ be the one-particle marginal associated with $\pnt$. \\
Denote by $\vpt$ the solution to the initial value problem for the magnetic Hartree equation \eqref{equ:hartree_equ} with initial datum $\vp_{t=0} = \vp$. Fix $T > 0$. Then there exists a constant $C \equiv C(T, \|\vp\|_{H^3_A})$ such that
\begin{equation}
 \tr{ \, \Bigl| (-i\nabla + A) \bigl( \gamma_{N, t}^{(1)} - |\vpt\rangle\langle\vpt| \bigr) (-i\nabla + A) \Bigr| } \, \leq \, C \Bigl( \frac{1}{N^{1/4}} + \alpha_N^{1/4} \Bigr)
\end{equation}
for all $t \in \bR$ with $|t| \leq T$, and for all $N$ sufficiently large. In particular, it follows for fixed $t \in \bR$ that $\gamma_{N, t}^{(1)} \rightarrow |\vpt\rangle\langle\vpt|$ in energy norm as $N \rightarrow \infty$.
\end{theorem}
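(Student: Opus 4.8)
The plan is to deduce the energy-norm convergence for the one-particle marginal from two inputs: the trace-norm convergence for the regularized dynamics (the analogue of Theorem~\ref{thm:trace_convergence} for $e^{-iH_N^{\alpha}t}$), and the convergence of the kinetic energy per particle, $\tr\bigl((-i\nabla+A)^{2}\gnto\bigr)\to\langle\vpt,(-i\nabla+A)^{2}\vpt\rangle$. Writing $P_{t}:=|\vpt\rangle\langle\vpt|$ and decomposing $\gnto-P_{t}$ into its $P_{t}$--$P_{t}$, $P_{t}$--$(1-P_{t})$ and $(1-P_{t})$--$(1-P_{t})$ blocks, as in the treatment of Michelangeli--Schlein and Chen--Lee--Schlein, the key step is an estimate of the form
\[
 \tr\bigl|(-i\nabla+A)(\gnto-P_{t})(-i\nabla+A)\bigr|\ \le\ C\,\sqrt{\tr|\gnto-P_{t}|}\ +\ C\,\bigl|\tr\bigl((-i\nabla+A)^{2}\gnto\bigr)-\langle\vpt,(-i\nabla+A)^{2}\vpt\rangle\bigr|,
\]
where the constant $C$ is finite thanks to uniform-in-$N$ magnetic Sobolev bounds discussed below. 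Since $\tr|\gnto-P_{t}|\le C(N^{-1/2}+\alpha_{N}^{1/2})$ by the regularized version of Theorem~\ref{thm:trace_convergence}, taking the square root already produces the claimed rate $N^{-1/4}+\alpha_{N}^{1/4}$.

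Next I would establish the convergence of the kinetic energy. Using conservation of the $N$-body energy $\langle\pnt,H_{N}^{\alpha}\pnt\rangle=\langle\pn,H_{N}^{\alpha}\pn\rangle$ and of the magnetic Hartree energy, the difference $\tr\bigl((-i\nabla+A)^{2}\gnto\bigr)-\langle\vpt,(-i\nabla+A)^{2}\vpt\rangle$ becomes a sum of: the initial-data regularization error $\tfrac12\langle\vp^{\otimes2},(\tfrac{\lambda}{|x_{1}-x_{2}|+\alpha_{N}}-\tfrac{\lambda}{|x_{1}-x_{2}|})\vp^{\otimes2}\rangle$, which is $O(\alpha_{N}^{1/2})$ since $\bigl|\tfrac1{|x|}-\tfrac1{|x|+\alpha}\bigr|\le\alpha^{1/2}|x|^{-3/2}$ and $|x|^{-3/2}$ is form-bounded by $(-i\nabla+A)^{2}$ (diamagnetic fractional Hardy) with $\vp\in\hoart$; terms of order $N^{-1}$; and the genuine potential-energy difference $\tfrac12\bigl(\langle\vpt^{\otimes2},\tfrac{\lambda}{|x_{1}-x_{2}|}\vpt^{\otimes2}\rangle-\tr\bigl(\tfrac{\lambda}{|x_{1}-x_{2}|+\alpha_{N}}\gamma_{N,t}^{(2)}\bigr)\bigr)$. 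For the last term I would replace the regularized potential by $\tfrac{\lambda}{|x_{1}-x_{2}|}$ at the cost of another $O(\alpha_{N}^{1/2})$, and then estimate $\bigl|\tr\bigl(\tfrac{\lambda}{|x_{1}-x_{2}|}(\gamma_{N,t}^{(2)}-P_{t}^{\otimes2})\bigr)\bigr|$ by splitting $\gamma_{N,t}^{(2)}-P_{t}^{\otimes2}$ into positive and negative parts and using the diamagnetic Hardy inequality $\tfrac1{|x_{1}-x_{2}|}\le 2\eta\,(-i\nabla_{x_{1}}+A(x_{1}))^{2}+\tfrac1{2\eta}$ together with the uniform bounds $\tr\bigl((-i\nabla_{x_{1}}+A(x_{1}))^{2}\gamma_{N,t}^{(2)}\bigr)=\langle\pnt,(-i\nabla_{x_{1}}+A(x_{1}))^{2}\pnt\rangle\le C$ (from $N$-body energy conservation and semiboundedness of $H_{N}^{\alpha}$, the Coulomb term being form-bounded by $\sum_{j}(-i\nabla_{x_{j}}+A(x_{j}))^{2}$) and $\|(-i\nabla+A)\vpt\|_{2}^{2}\le C$. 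Optimizing over $\eta$ gives $\bigl|\tr(\dots)\bigr|\le C\sqrt{\tr|\gamma_{N,t}^{(2)}-P_{t}^{\otimes2}|}\le C(N^{-1/4}+\alpha_{N}^{1/4})$, so the kinetic-energy difference is $O(N^{-1/4}+\alpha_{N}^{1/4})$ as well, and combining with the first paragraph finishes the proof.

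The remaining, and main, work is to supply the ingredients used above. (i) The trace-norm convergence for $e^{-iH_{N}^{\alpha}t}$ extends the proof of Theorem~\ref{thm:trace_convergence} with essentially no change, since $\tfrac1{|x|+\alpha_{N}}\le\tfrac1{|x|}$ makes the regularized interaction only easier to control. (ii) Propagation of magnetic Sobolev regularity. For the Hartree flow I would first prove local well-posedness of \eqref{equ:hartree_equ} in $\hoart$ with an a priori $\hoart$-bound from mass and energy conservation (the Coulomb Hartree nonlinearity is $\dot H^{1/2}$-subcritical), and then bootstrap to $\|\vpt\|_{H^{3}_{A}}\le C(T,\|\vp\|_{H^{3}_{A}})$ on $[-T,T]$ via a nonlinear Gronwall argument in which Assumption~(A) controls the commutators $[D_{1}^{\alpha_{1}}D_{2}^{\alpha_{2}}D_{3}^{\alpha_{3}},(-i\nabla+A)^{2}]$ and the quantities $B$, $\nabla A$, $\nabla^{2}A$; this is where $\vp\in H^{3}_{A}$ is used on the Hartree side, in particular to justify energy conservation and to bound $\|(-i\nabla+A)^{2}\vpt\|_{2}$, which enters the block estimate. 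For the $N$-body side I would derive uniform-in-$N$ bounds such as $\langle\pnt,(-i\nabla_{x_{1}}+A(x_{1}))^{2}(-i\nabla_{x_{2}}+A(x_{2}))^{2}\pnt\rangle\le C$ from the conserved quantities $\langle\pnt,(H_{N}^{\alpha})^{m}\pnt\rangle=\langle\pn,(H_{N}^{\alpha})^{m}\pn\rangle$ (finite for $\pn=\vp^{\otimes N}$ precisely because $\vp\in H^{3}_{A}$) combined with relative form-boundedness of the regularized interaction; here the regularization and the hypothesis $N^{\beta}\alpha_{N}\to\infty$ are needed, so that the $\alpha_{N}^{-1}$-losses produced when $-i\nabla$ or $\Delta$ hits the regularized potential (whose derivatives are still of size $\alpha_{N}^{-1}|x|^{-1}$, $\alpha_{N}^{-2}|x|^{-1}$ near the origin) are absorbed by the accompanying powers of $N^{-1}$ from the mean-field scaling. (iii) Conservation of the Hartree energy, which follows once $\vpt\in H^{2}_{A}$.

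The hardest part will be (ii). Because Assumption~(A) allows $A$ itself to grow linearly, one cannot reduce to ordinary Sobolev spaces and must carry out all the regularity estimates intrinsically in the magnetic Sobolev spaces $H^{k}_{A}$, using diamagnetic inequalities systematically; and the singular derivatives of the regularized Coulomb potential force the delicate balance with the mean-field factor $N^{-1}$ that the condition $N^{\beta}\alpha_{N}\to\infty$ is designed to guarantee. Once these uniform bounds are available, the rest is the energy- and Hardy-inequality bookkeeping sketched above.
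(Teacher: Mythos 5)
Your reduction stands or falls on the first displayed inequality, and that inequality is false as stated: trace-norm closeness of $\gnto$ to $P_t=|\vpt\rangle\langle\vpt|$ together with closeness of the total kinetic energies does \emph{not} control the off-diagonal blocks $(-i\nabla+A)P_t(\gnto-P_t)(1-P_t)(-i\nabla+A)$ at the claimed rate. A concrete counterexample: fix $\vp\in H^2_A$, pick a normalized $\chi\perp\vp$ with $\|(-i\nabla+A)\chi\|_2=\eta/\delta$, and set $\gamma=|\tilde\vp\rangle\langle\tilde\vp|$ with $\tilde\vp=(\vp+\delta\chi)/\sqrt{1+\delta^2}$. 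Then
\begin{equation*}
\tr|\gamma-P|\simeq 2\delta,\qquad \bigl|\tr\bigl((-i\nabla+A)^2\gamma\bigr)-\langle\vp,(-i\nabla+A)^2\vp\rangle\bigr|=O(\delta+\eta^2),\qquad \tr\bigl|(-i\nabla+A)(\gamma-P)(-i\nabla+A)\bigr|\simeq 2\eta\,\|(-i\nabla+A)\vp\|_2,
\end{equation*}
so choosing $\delta=\eta^4$ makes your right-hand side $O(\eta^2)$ while the left-hand side is of order $\eta$. The only way to exclude such states is to have an additional uniform-in-$N$ bound of second-order type on the marginal itself (e.g.\ on $\tr\bigl((-i\nabla+A)^2\gnto(-i\nabla+A)^2\bigr)=\langle\pnt,((-i\nabla_{x_1}+A(x_1))^2)^2\pnt\rangle$, not the mixed quantity $\langle\pnt,(-i\nabla_{x_1}+A(x_1))^2(-i\nabla_{x_2}+A(x_2))^2\pnt\rangle\leq C$ you propose, which does not see this obstruction), or, as the paper does, to prove directly a Hilbert--Schmidt bound on $(-i\nabla+A)(\gnto-|\vptan\rangle\langle\vptan|)(-i\nabla+A)$. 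This is exactly the content the paper extracts from the Fock-space fluctuation dynamics: Hepp's identity \eqref{equ:hepp_observation} expresses the kernel of this operator through $\|(-i\nabla_x+A(x))a_x\,\cU_N^{\theta}(t;0)\Omega\|$, and the heart of the proof is Proposition \ref{prop:control_growth_kinetic_energy}, the uniform bound $\langle\cU_N(t;0)\Omega,\cK\,\cU_N(t;0)\Omega\rangle\leq C$, obtained by comparison with a truncated dynamics $\cW_N$; it is there that the regularization $\alpha_N$, the hypothesis $N^{\beta}\alpha_N\to\infty$, and $\vp\in H^3_A$ (via Proposition \ref{prop:hoa_bound_time_derivative}) actually enter. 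Your proposal contains no substitute for this ingredient, and the hard part you defer to item (ii) would at best deliver mixed two-particle energy bounds, which do not close the gap.

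Two secondary remarks. First, even your kinetic-energy-convergence step has a circularity: bounding $\tr\bigl(h_1(\gamma^{(2)}_{N,t}-P_t^{\otimes2})_{\pm}\bigr)$ after the Hardy splitting is not implied by the ingredients you list (conjugating by the negative spectral projection does not preserve the operator inequality against $P_t^{\otimes2}$ in the energy form), so the claimed $O(N^{-1/4}+\alpha_N^{1/4})$ rate for the kinetic energy is not established; in the paper the energy-difference term is only used at the very end, to upgrade the already-proved Hilbert--Schmidt bound to a trace-norm bound (the analogue of (5.12)--(5.16) in \cite{MS}), not as an independent source of smallness. Second, the regularized analogue of Theorem \ref{thm:trace_convergence} that you invoke is indeed unproblematic, but the paper's proof of Theorem \ref{thm:energy_convergence} never needs it. So the overall strategy must be repaired along the lines of the paper (or by proving a genuine uniform second-order bound on $\gnto$), not merely completed.
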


\medskip

This paper is organized as follows. In Section \ref{sec:magnetic_hartree_equation} we show global well-posedness of the magnetic Hartree equation \eqref{equ:hartree_equ} for all magnetic vector potentials satisfying assumption (A). To this end we use magnetic Strichartz estimates by Yajima \cite{Yaj} that require the assumption (A). Furthermore, we prove several properties of the solutions to \eqref{equ:hartree_equ} that will be needed in the proof of Theorem \ref{thm:energy_convergence}. In Section \ref{sec:trace_convergence} we prove Theorem \ref{thm:trace_convergence} using a result from \cite{KP}. In Section \ref{sec:energy_convergence} we derive Theorem \ref{thm:energy_convergence} adapting the method in \cite{MS} to the magnetic case. 

\medskip

{\it Acknowledgments.} The author is indebted to L. Erd\H os and A. Michelangeli for many helpful discussions. He would like  to thank M. Struwe for stimulating discussions related to the magnetic Hartree equation.

\section{The magnetic Hartree equation} \label{sec:magnetic_hartree_equation}
\setcounter{equation}{0}

The well-posedness in $H^1_A$ of the magnetic Hartree equation with the nonlinearity $(V \ast |\vpt|^2)\vpt$, where $V \in L^p + L^{\infty}$, $p \geq 1$, was studied by Cazenave and Esteban in \cite{Caz} for an explicit linear magnetic vector potential. The proof relies on the fact that magnetic Strichartz estimates for the propagator $e^{-it(-i\nabla+A)^2}$ can be derived from an explicit formula for the propagator in the case of linear magnetic vector potentials. In this section we extend their results to the class of magnetic vector potentials satisfying assumption (A). To this end we employ short-time magnetic Strichartz estimates by Yajima \cite{Yaj}.

\begin{proposition} \label{prop:gwp_hartree}
Let $V \in L^{3/2}(\bR^{3}) + L^{\infty}(\bR^{3})$ be real-valued and even. Let A $\in C^{\infty}(\bR^{3}; \bR^{3})$ satisfy assumption (A). Choose $\vp \in H^{1}_{A}(\bR^{3})$. Then the initial value problem 
\begin{equation} \label{equ:initial_value_problem}
 \left\{ \begin{aligned}
          i \partial_{t} \vp_{t} \, & = \, (-i \nabla + A)^{2} \vp_{t} + (V \ast |\vp_{t}|^{2}) \vp_{t}, \\
          \vp_{t=0} \, & = \, \vp,
         \end{aligned}
\right. 
\end{equation}
is globally well-posed in $H^{1}_{A}$, i.e. it has a unique solution $\vpt \in C(\bR; \hoa) \, \cap \, C^1(\bR; H^{-1}_A)$ and the solution depends continuously on the initial data. Moreover, the mass $M(\vp) \, = \, \|\vp\|_2$ and the energy
\begin{equation*}
 E(\vp) \, = \, \frac{1}{2} \|(-i\nabla+A) \vp\|_2^2 + \frac{1}{4} \int_{\bR^3}{\, (V \ast |\vp|^2) |\vp|^2}
\end{equation*}
are conserved.
\end{proposition}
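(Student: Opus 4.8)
The plan is the standard two-step scheme for an energy-subcritical nonlinear Schr\"odinger equation: first establish local well-posedness in $\hoa$ by a contraction argument built on magnetic Strichartz estimates, and then upgrade to global well-posedness using conservation of mass and energy together with an a priori bound on $\|\vpt\|_{\hoa}$.

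\emph{Magnetic Strichartz estimates.} Yajima's results in \cite{Yaj} provide, under assumption (A), short-time dispersive estimates for the magnetic propagator $e^{-it(-i\nabla+A)^2}$; patching these over a partition of an arbitrary bounded interval $[-T,T]$ into $O(T)$ subintervals yields the full set of magnetic Strichartz estimates $\|e^{-it(-i\nabla+A)^2}f\|_{L^q_tL^r_x([-T,T])}\lesssim_T\|f\|_2$ and the corresponding inhomogeneous estimate for all admissible pairs $(q,r)$. To run the contraction at the $\hoa$ level I also record that $\|f\|_{\hoa}^2=\|f\|_2^2+\|(-i\nabla+A)f\|_2^2$ equals $\|f\|_2^2+\langle f,(-i\nabla+A)^2f\rangle$ and is therefore preserved by the free magnetic flow, and that the individual operators $D_j$ commute with the flow only up to a zeroth-order error whose coefficients are controlled by $B=\nabla\times A$ and the first derivatives of $A$, both bounded by assumption (A); this lets me obtain Strichartz bounds also for $D_j\,e^{-it(-i\nabla+A)^2}f$. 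Finally, the diamagnetic inequality $|\nabla|f||\le|(-i\nabla+A)f|$ combined with the ordinary Sobolev embedding gives $\hoart\hookrightarrow L^p(\bR^3)$ for $2\le p\le 6$. This Strichartz package — in particular the passage from Yajima's short-time estimates to estimates on every fixed interval and the control of the non-commutativity of the $D_j$ with the propagator — is the genuinely non-routine ingredient, and it is exactly where assumption (A) is used quantitatively.

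\emph{Nonlinearity and local theory.} Write $V=V_1+V_2$ with $V_1\in L^{3/2}(\bR^3)$ and $V_2\in L^\infty(\bR^3)$. Young's and H\"older's inequalities with the embedding above give $\|V\ast|f|^2\|_\infty\lesssim(\|V_1\|_{3/2}+\|V_2\|_\infty)\|f\|_{\hoa}^2$, and using the gauge-invariant identity $\partial_j|f|^2=-2\,\mathrm{Im}(\overline f\,D_jf)$ one computes $D_j\bigl((V\ast|f|^2)f\bigr)=(V\ast|f|^2)D_jf-i\bigl(V\ast\partial_j|f|^2\bigr)f$ with $\|(V\ast\partial_j|f|^2)f\|_2\lesssim\|V\|_{L^{3/2}+L^\infty}\|f\|_{\hoa}^3$; hence $f\mapsto(V\ast|f|^2)f$ is locally Lipschitz from $\hoart$ to $\hoart$. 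A contraction argument in $C([-T,T];\hoart)$ intersected with a magnetic Strichartz space encoding one derivative, for $T$ depending only on $\|\vp\|_{\hoa}$, then produces a unique local solution that depends continuously on the data, and $\vpt\in C^1(\bR;H^{-1}_A)$ follows by reading $\partial_t\vpt$ off the equation.

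\emph{Conservation laws and globalization.} Taking the imaginary part of the $L^2$-pairing of the equation with $\vpt$ gives $\frac{d}{dt}\|\vpt\|_2^2=0$, and pairing with $\partial_t\vpt$ and taking real parts (justified by approximating $\vp$ in $\hoa$ by Schwartz functions, for which the computation is classical) gives $\frac{d}{dt}E(\vpt)=0$. It remains to bound $\|\vpt\|_{\hoa}$ uniformly in time. From energy conservation, $\tfrac12\|(-i\nabla+A)\vpt\|_2^2=E(\vp)-\tfrac14\int(V\ast|\vpt|^2)|\vpt|^2$, and by H\"older and interpolation $\bigl|\int(V\ast|f|^2)|f|^2\bigr|\le\|V_1\|_{3/2}\|f\|_3^4+\|V_2\|_\infty\|f\|_2^4\le C\|V_1\|_{3/2}\|f\|_2^2\|(-i\nabla+A)f\|_2^2+\|V_2\|_\infty\|f\|_2^4$, where I used $\|f\|_3^4\le\|f\|_2^2\|f\|_6^2$ and $\|f\|_6\lesssim\|(-i\nabla+A)f\|_2$. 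Since $\|\vpt\|_2=\|\vp\|_2$ is conserved, I may choose the splitting $V=V_1+V_2$ (cutting $V$ at a sufficiently high level, so that $\|V_1\|_{3/2}$ is small by dominated convergence) so that $\tfrac{C}{4}\|V_1\|_{3/2}\|\vp\|_2^2\le\tfrac14$; this absorbs the bad term into the left-hand side and yields a uniform bound on $\|(-i\nabla+A)\vpt\|_2$, hence $\sup_{t\in\bR}\|\vpt\|_{\hoa}<\infty$. Because the local existence time depends only on the $\hoa$-norm, a standard continuation argument then upgrades the local solution to a global one, and continuous dependence propagates likewise. The remaining work beyond the Strichartz theory is thus a routine adaptation of the non-magnetic Hartree theory, with the diamagnetic inequality in place of the usual Sobolev embedding.
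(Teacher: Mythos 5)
Your proposal is correct and follows essentially the same route as the paper: local well-posedness built on Yajima's short-time magnetic Strichartz estimates under assumption (A) together with standard Cazenave-type arguments, and global well-posedness from mass/energy conservation plus the splitting $V=V_1+V_2$ with $\|V_1\|_{3/2}$ chosen small, the diamagnetic/Sobolev bound $\|\vpt\|_6\lesssim\|(-i\nabla+A)\vpt\|_2$, and absorption of the potential term into the kinetic energy, which is exactly the paper's uniform $\hoa$ bound combined with the blow-up alternative. One minor wording fix: the commutator $[D_j,(-i\nabla+A)^2]$ is a first-order (not zeroth-order) operator, cf.\ \eqref{equ:DjDlsquared_commutator}, but since its coefficients are bounded under assumption (A) it maps $\hoa$ boundedly into $L^2$, which is all your Duhamel argument for $D_j e^{-it(-i\nabla+A)^2}$ requires.
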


Recently, Cao \cite{Cao} showed global well-posedness in $\hoa$ for the magnetic Hartree equation \eqref{equ:hartree_equ} with a repulsive Coulomb interaction potential for all $A \in L^2_{loc}(\bR^3; \bR^3)$ such that $(-i\nabla + A)^2$ is self-adjoint on $L^2(\bR^3)$. The proof is based on establishing the local Lipschitz continuity in $\hoa$ of the corresponding Hartree nonlinearity. Compared with \cite{Cao}, Proposition \ref{prop:gwp_hartree} yields global well-posedness of the magnetic Hartree equation for less general magnetic vector potentials $A$, but for more general interaction potentials $V$.

\medskip

We will be repeatedly using the following properties of $\hoart$. Let $A \in L^{2}_{loc}(\bR^3; \bR^3)$ and let $\vp \in \hoart$. Then $|\vp| \in H^{1}(\bR^3)$ and the diamagnetic inequality
\begin{equation} \label{equ:diamagnetic_inequality}
 \left| \nabla |\vp| (x) \right| \, \leq \, \left| (-i\nabla + A)\vp(x) \right|
\end{equation}
holds pointwise for almost every $x \in \bR^3$ (see e.g. \cite{LL}). Thus, we have the embedding $\hoa(\bR^3) \hookrightarrow L^6(\bR^3)$ by the Sobolev inequality. Using \eqref{equ:diamagnetic_inequality} and the Hardy inequality, we obtain the so-called magnetic Hardy inequality for all $\vp \in \hoart$,
\begin{equation} \label{equ:magnetic_hardy_inequality}
 \frac{1}{4} \, \int_{\bR^3}{\ud x \, \frac{|\vp(x)|^2}{|x|^2}} \, \leq \, \int_{\bR^3}{\ud x \, |\nabla |\vp(x)||^2} \, \leq \, \int_{\bR^3}{\ud x \, |(-i\nabla + A)\vp(x)|^{2} }. 
\end{equation}

\medskip

In what follows, $a \lesssim b$ denotes $a \leq C b$, where $C$ is a positive constant that can depend on fixed parameters.

\begin{proof}[Proof of Proposition \ref{prop:gwp_hartree}] 
\textit{Local well-posedness:} \\
Local well-posedness of \eqref{equ:initial_value_problem} follows with standard techniques for nonlinear Schr\"odinger equations (see e.g. \cite{Caz}). The crucial ingredient to apply the methods from \cite{Caz} is to show a priori uniqueness to the initial value problem \eqref{equ:initial_value_problem}. To this end we use short-time Strichartz estimates for the propagator $e^{-it(-i\nabla+A)^2}$ that were established in \cite{Yaj} under the assumption (A) about the magnetic vector potential $A$. \\
\textit{Global well-posedness:} \\
Let $\vpt$ be the maximal solution to \eqref{equ:initial_value_problem} defined on the interval $I_{max} \ni 0$. We now show that the $\hoa$-norm of $\vpt$ is uniformly bounded on $I_{max}$. By the blow-up alternative, $\vpt$ must then exist globally in time. For $t \in I_{max}$ we have
\begin{equation} \label{equ:estimate_energy_below}
 E(\vpt) \, = \, \frac{1}{2} \|(-i\nabla + A)\vpt\|_2^2 + \frac{1}{4} \int_{\bR^3}{\, (V \ast |\vpt|^2) |\vpt|^2} \, \geq \, \frac{1}{2} \|(-i\nabla + A)\vpt\|_2^2 - \frac{1}{4} \Bigl\| V \ast |\vpt|^2 \Bigr\|_{\infty} \|\vpt\|_2^2.
\end{equation}
Let $V \equiv V_1 + V_2$ with $V_1 \in L^{3/2}(\bR^3)$ and $V_2 \in L^{\infty}(\bR^3)$. Then, by the Sobolev inequality and \eqref{equ:diamagnetic_inequality}, we obtain for $x \in \bR^3$,
\begin{equation*}
 \Bigl| \Bigl( V \ast |\vpt|^2 \Bigr)(x) \Bigr| \, \leq \, \|V_1\|_{3/2} \|\vpt\|_6^2 + \|V_2\|_{\infty} \|\vpt\|_2^2 \, \lesssim \|V_1\|_{3/2} \|(-i\nabla + A)\vpt\|_2^2 + \|V_2\|_{\infty} \|\vpt\|_2^2.
\end{equation*}
It is easy to see that we can decompose $V$ in such a way that $\|V_1\|_{3/2}$ can be chosen arbitrarily small. Thus, from \eqref{equ:estimate_energy_below} and the conservation of mass and energy we get for all $t \in I_{max}$
\begin{equation*}
 \|(-i\nabla+A) \vpt\|_2^2 \, \leq \,4 ( E(\vp) + \|V_2\|_{\infty} \|\vp\|_2^4),
\end{equation*}
which proves the claim.
\end{proof}

In the proof of Theorem \ref{thm:energy_convergence} we have to consider the regularized magnetic Hartree equation
\begin{equation} \label{equ:regularised_hartree_equ}
          i \partial_{t} \vpta \, = \, (-i \nabla + A)^{2} \vpta + (\frac{\lambda}{|\cdot| + \alpha} \ast |\vpta|^{2}) \vpta
\end{equation}
for $\alpha \geq 0$ with initial datum $\vp_{t=0}^{(\alpha)} = \vp \in \hoa(\bR^3)$. Below we derive properties of its solutions that will be needed in the proof of Theorem \ref{thm:energy_convergence}.

\begin{remark}
 It follows immediately from Proposition \ref{prop:gwp_hartree} that the magnetic Hartree equation \eqref{equ:hartree_equ} and the regularized magnetic Hartree equation \eqref{equ:regularised_hartree_equ} are globally well-posed in $\hoa$. 
\end{remark}

\begin{proposition} \label{prop:closeness_hartree_dynamics}
Choose $\vp \in \hoa(\bR^3)$ with $\|\vp\|_{2} = 1$ and let $\vpt$ denote the solution to the magnetic Hartree equation \eqref{equ:hartree_equ} with initial datum $\vp_{t=0} = \vp$. For $\alpha \geq 0$, let $\vpta$ denote the solution to the regularized magnetic Hartree equation \eqref{equ:regularised_hartree_equ} with initial datum $\vp_{t=0}^{(\alpha)} = \vp$. Let $T > 0$. Then we have:
\begin{itemize}
 \item[(i)] There exists a constant $C_{1} \equiv C_{1}(\|\vp\|_{\hoa})$ such that
  \begin{equation} \label{equ:uniform_hoa_bound}
   \|\vpta\|_{\hoa} \, \leq \, C_{1} \quad for \: all \: t \in \bR \: and \: all \: 0 \leq \alpha \leq 1.
  \end{equation}
 \item[(ii)] There exists a constant $C_{2} \equiv C_{2}(T, \|\vp\|_{\hoa})$ such that
  \begin{equation} \label{equ:l2_closeness}
   \|\vpt - \vpta\|_{2} \, \leq \, C_{2} \, \alpha \quad  for \: all \: |t| \leq T \: and \: all \: 0 \leq \alpha \leq 1.
  \end{equation}
 \item[(iii)] There exists a constant $C_{3} \equiv C_{3}(T, \|\vp\|_{\hoa})$ such that
  \begin{equation} \label{hoa_closeness}
   \| \vpt - \vpta \|_{\hoa} \, \leq \, C_{3} \, \alpha^{1/4} \quad for \: all \: |t| \leq T \: and \: all \: 0 \leq \alpha \leq 1.
  \end{equation}
\end{itemize}
\end{proposition}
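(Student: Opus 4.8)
The plan is to treat (i) via the conservation laws for the regularized flow, exactly as in the proof of Proposition~\ref{prop:gwp_hartree}, and to obtain (ii) and (iii) by Gr\"onwall estimates for $\|\vpt-\vpta\|_2^2$ and $\|(-i\nabla+A)(\vpt-\vpta)\|_2^2$, where $w_t:=\vpt-\vpta$ solves
\[ i\partial_t w_t = (-i\nabla+A)^2 w_t + R_t, \qquad R_t := \Bigl(\tfrac{\lambda}{|\cdot|}\ast|\vpt|^2\Bigr)\vpt - \Bigl(\tfrac{\lambda}{|\cdot|+\alpha}\ast|\vpta|^2\Bigr)\vpta, \qquad w_0=0. \]
For (i), I would use that \eqref{equ:regularised_hartree_equ} conserves $\|\vpta\|_2=1$ and the energy $\tfrac12\|(-i\nabla+A)\vpta\|_2^2+\tfrac14\int(\tfrac{\lambda}{|\cdot|+\alpha}\ast|\vpta|^2)|\vpta|^2$. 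Since $\tfrac{1}{|x|+\alpha}\le\tfrac1{|x|}$ for $\alpha\ge0$, decomposing the Coulomb kernel into a small $L^{3/2}$-part plus an $L^\infty$-part and using the Sobolev embedding and the diamagnetic inequality \eqref{equ:diamagnetic_inequality} exactly as in the proof of Proposition~\ref{prop:gwp_hartree}, one gets $\|\tfrac{\lambda}{|\cdot|+\alpha}\ast|\vpta|^2\|_\infty\le\varepsilon\|(-i\nabla+A)\vpta\|_2^2+C_\varepsilon$, uniformly in $0\le\alpha\le1$; choosing $\varepsilon$ small and invoking conservation of mass and energy then bounds $\|(-i\nabla+A)\vpta\|_2^2$, and hence $\|\vpta\|_{\hoa}$, uniformly in $t\in\bR$ and $\alpha\in[0,1]$, which is \eqref{equ:uniform_hoa_bound}.

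For (ii), the kinetic term contributes a real quantity to $\langle w_t, i\partial_t w_t\rangle$, so $\tfrac{d}{dt}\|w_t\|_2^2 = 2\,\mathrm{Im}\langle w_t, R_t\rangle$. I would split $R_t = R_t^{\mathrm{reg}} + R_t^{\mathrm{Lip}}$ with $R_t^{\mathrm{reg}} = \bigl((\tfrac{\lambda}{|\cdot|}-\tfrac{\lambda}{|\cdot|+\alpha})\ast|\vpt|^2\bigr)\vpt$ and $R_t^{\mathrm{Lip}} = \bigl(\tfrac{\lambda}{|\cdot|+\alpha}\ast|\vpt|^2\bigr)w_t + \bigl(\tfrac{\lambda}{|\cdot|+\alpha}\ast(|\vpt|^2-|\vpta|^2)\bigr)\vpta$. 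For $R_t^{\mathrm{reg}}$, the pointwise bound $0\le\tfrac1{|x|}-\tfrac1{|x|+\alpha}=\tfrac{\alpha}{|x|(|x|+\alpha)}\le\tfrac{\alpha}{|x|^2}$, the Hardy--Littlewood--Sobolev inequality and the uniform bounds from (i) give $|\langle w_t, R_t^{\mathrm{reg}}\rangle|\le C\alpha\|w_t\|_2$. For $R_t^{\mathrm{Lip}}$, bounding $\|\tfrac{\lambda}{|\cdot|+\alpha}\ast|\vpt|^2\|_\infty\le C$ and using the Hardy--Littlewood--Sobolev inequality after writing $|\vpt|^2-|\vpta|^2=\vpt\overline{w_t}+w_t\overline{\vpta}$ gives $|\langle w_t, R_t^{\mathrm{Lip}}\rangle|\le C\|w_t\|_2^2$. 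Thus $\tfrac{d}{dt}\|w_t\|_2^2\le C\|w_t\|_2^2+C\alpha^2$, and since $w_0=0$, Gr\"onwall yields \eqref{equ:l2_closeness} for $|t|\le T$.

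For (iii), using $\langle w_t,(-i\nabla+A)^2 w_t\rangle\in\bR$ and integrating by parts I would get $\tfrac{d}{dt}\|(-i\nabla+A)w_t\|_2^2 = 2\,\mathrm{Im}\langle (-i\nabla+A)w_t, (-i\nabla+A)R_t\rangle$, so it remains to estimate $\|(-i\nabla+A)R_t\|_2$. Using the same splitting and applying $(-i\nabla+A)$, the magnetic gradient falls either on a convolution potential, on a density $|\vpt|^2$ or $|\vpta|^2$, or on the outer factor. Most resulting terms are harmless: for $(-i\nabla+A)R_t^{\mathrm{reg}}$ one moves the derivative onto $|\vpt|^2$ (using the diamagnetic inequality to control $\nabla|\vpt|^2$ in $L^1\cap L^{3/2}$) and applies Young's inequality together with $\bigl\|\tfrac1{|\cdot|}-\tfrac1{|\cdot|+\alpha}\bigr\|_{L^p}\sim\alpha^{3/p-1}$ for $p\in(3/2,3)$, giving $O(\alpha^{1/2})$; the term where the derivative hits $\nabla(\tfrac1{|\cdot|+\alpha})$ against $\vpt$ alone is controlled by $|\nabla\tfrac1{|x|+\alpha}|\le\tfrac1{|x|^2}$ and the Hardy inequality. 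The two terms $\bigl(\tfrac{\lambda}{|\cdot|+\alpha}\ast|\vpt|^2\bigr)(-i\nabla+A)w_t$ and $\bigl((\nabla\tfrac{\lambda}{|\cdot|+\alpha})\ast(|\vpt|^2-|\vpta|^2)\bigr)\vpta$ are only bounded by $C\|(-i\nabla+A)w_t\|_2$ with an $O(1)$ coefficient, but this is acceptable since they enter the Gr\"onwall inequality linearly in $\|(-i\nabla+A)w_t\|_2^2$. The delicate term is $\bigl(\tfrac{\lambda}{|\cdot|+\alpha}\ast(|\vpt|^2-|\vpta|^2)\bigr)(-i\nabla+A)\vpta$: splitting the Coulomb kernel at a radius $R$, optimizing in $R$, and using $\||\vpt|^2-|\vpta|^2\|_1\le 2\|w_t\|_2$ and $\||\vpt|^2-|\vpta|^2\|_3\le C$ gives $\|\tfrac1{|\cdot|}\ast(|\vpt|^2-|\vpta|^2)\|_\infty\le C\|w_t\|_2^{1/2}$, hence a contribution $\le C\|w_t\|_2^{1/2}\le C\alpha^{1/2}$ by (ii). Altogether $\|(-i\nabla+A)R_t\|_2\le C\|(-i\nabla+A)w_t\|_2+C\alpha^{1/2}$, so $\tfrac{d}{dt}\|(-i\nabla+A)w_t\|_2^2\le C\|(-i\nabla+A)w_t\|_2^2+C\alpha^{1/2}$; Gr\"onwall with vanishing initial data gives $\|(-i\nabla+A)w_t\|_2^2\le C\alpha^{1/2}$ for $|t|\le T$, which together with (ii) is \eqref{hoa_closeness}.

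I expect the bookkeeping in (iii) to be the main obstacle: one must carefully separate the terms that merely reproduce $\|(-i\nabla+A)w_t\|_2^2$ on the right-hand side of the Gr\"onwall inequality from those forming the genuine source, and verify that the worst source term is only $O(\alpha^{1/2})$ --- it is precisely the appearance of $\|w_t\|_2^{1/2}$ rather than $\|w_t\|_2$ there that degrades the rate to $\alpha^{1/4}$. A secondary technical point is that the time-differentiations above (in (ii), and especially in (iii)) are only formally justified for $\hoa$-solutions; I would make them rigorous by first proving the estimates for smooth, rapidly decaying initial data --- for which the solutions are smooth and all manipulations, including the magnetic commutators that are controlled by Assumption~(A), are licit --- and then passing to general $\vp\in\hoa$ via the continuous dependence of both flows on the initial datum in $\hoa$, uniformly in $\alpha\in[0,1]$.
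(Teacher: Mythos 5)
Your proposal is correct and shares the paper's basic skeleton---(i) from conservation of mass and energy for the regularized flow, (ii)--(iii) from Gr\"onwall applied to the same three-term splitting of the difference of Hartree nonlinearities---but the implementation genuinely differs in two places, so a comparison is worthwhile. First, you run Gr\"onwall on the differential identities $\frac{\ud}{\ud t}\|w_t\|_2^2=2\,\mathrm{Im}\langle w_t,R_t\rangle$ and $\frac{\ud}{\ud t}\|(-i\nabla+A)w_t\|_2^2=2\,\mathrm{Im}\langle (-i\nabla+A)w_t,(-i\nabla+A)R_t\rangle$ (the second is indeed commutator-free, since the kinetic contribution pairs $\cha w_t$ with itself), which forces the regularization/density argument you sketch at the end; the paper instead runs Gr\"onwall on the Duhamel formulas, which needs only $\hoa$ regularity because $\|(-i\nabla+A)e^{-i\tau(-i\nabla+A)^2}f\|_2=\|(-i\nabla+A)f\|_2$, so the justification issue never arises (and your fallback is available, since the paper's Lemma on propagation of $H^2_A$-regularity covers $0\le\alpha\le1$ including $\alpha=0$). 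Second, the borderline terms in (iii) are handled differently: the paper's bottleneck is the derivative-on-kernel term $\bigl((-i\nabla)\frac{1}{|\cdot|+\alpha}\ast(|\vps|^2-|\vpsa|^2)\bigr)\vpsa$, bounded by Cauchy--Schwarz inside the convolution plus Hardy--Littlewood--Sobolev at the cost of $\alpha^{-3/4}$, while the sup-norm term is bounded by $\lesssim\|\vps-\vpsa\|_2\lesssim\alpha$ via Cauchy--Schwarz and the magnetic Hardy inequality; you instead make the sup-norm term the bottleneck, losing half a power through the $L^1$/$L^3$ kernel splitting ($\lesssim\|w_t\|_2^{1/2}\lesssim\alpha^{1/2}$), and absorb the derivative-on-kernel term into the Gr\"onwall coefficient. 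That absorption is only asserted in your write-up, but it does work: integrating by parts and using $\nabla|\vp|^2=2\,\mathrm{Re}\bigl(i\overline{\vp}\,(-i\nabla+A)\vp\bigr)$ (the $A$-terms cancel), then Cauchy--Schwarz and the magnetic Hardy inequality, gives a bound $\lesssim\|(-i\nabla+A)w_t\|_2+\alpha$. Both routes deliver the stated $\alpha^{1/4}$ (yours even leaves room to spare after Young's inequality in the Gr\"onwall step); a cosmetic point is that your $L^\infty$ bounds on $\frac{1}{|\cdot|^2}\ast|\vpt|^2$ and $\frac{1}{|\cdot|+\alpha}\ast(|\vpt|^2-|\vpta|^2)$ really come from Cauchy--Schwarz plus the (magnetic) Hardy inequality, as in the paper, rather than from Hardy--Littlewood--Sobolev.
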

\begin{proof}
We follow the argument of the proof of Proposition 2.2 in \cite{MS} and adapt it to the magnetic case.

\medskip

\noindent (i) follows from the conservation of energy both for the magnetic Hartree equation \eqref{equ:hartree_equ} and the regularized magnetic Hartree equation \eqref{equ:regularised_hartree_equ} and an inspection of the corresponding energy functionals. 

\medskip

\noindent (ii) We write $\vpt$ and $\vpta$ in their Duhamel expansions
  \begin{equation} \label{equ:duhamel_expansion_vpt}
   \vpt \, = \, e^{-i t (-i\nabla + A)^{2}} \vp - i \lambda \int_{0}^{t}{\ud s \, e^{-i (t-s) (-i\nabla + A)^{2}} } (\frac{1}{|\cdot|} \ast |\vp_{s}|^{2}) \vp_{s}
  \end{equation}
  and
  \begin{equation} \label{equ:duhamel_expansion_vpta}
   \vpta \, = \, e^{-i t (-i\nabla + A)^{2}} \vp - i \lambda \int_{0}^{t}{\ud s \, e^{-i (t-s) (-i\nabla + A)^{2}} } (\frac{1}{|\cdot| + \alpha} \ast |\vp_{s}^{(\alpha)}|^{2}) \vp_{s}^{(\alpha)}.
  \end{equation}
 Using the magnetic Hardy inequality \eqref{equ:magnetic_hardy_inequality} and the uniform $\hoa$-norm control \eqref{equ:uniform_hoa_bound} repeatedly, we then obtain 
 \begin{equation}
  \begin{split}
  \|\vpt-\vpta\|_2 \, &\leq \, |\lambda| \, \int_0^t{\ud s \, \Bigl\| \Bigl( \frac{1}{|\cdot|} \ast |\vps|^2 \Bigr) \vps - \Bigl( \frac{1}{|\cdot|+\alpha} \ast |\vpsa|^2 \Bigr) \vpsa \Bigr\|_2} \\
  &\leq \, |\lambda| \, \int_0^t \ud s \, \Bigl\{ \Bigl\| \Bigl(\frac{1}{|\cdot|} \ast |\vps|^2 \Bigr) (\vps - \vpsa)\Bigr\|_2 + \Bigl\| \Bigl(\frac{\alpha}{|\cdot| (|\cdot|+\alpha)} \ast |\vps|^2\Bigr) \vpsa \Bigr\|_2  \\
  & \quad \quad \quad \quad \quad \quad \quad \quad + \Bigl\| \Bigl( \frac{1}{|\cdot| + \alpha} \ast (|\vps|^2 - |\vpsa|^2) \Bigr) \vpsa \Bigr\|_2 \Bigr\} \\
  &\lesssim \, |\lambda| \, \int_0^t \ud s \, \Bigl\{ \|\vps\|_{\hoa}^2 \|\vps - \vpsa\|_2 + \alpha \, \|\vps\|_{\hoa}^2 \|\vpsa\|_2 \\
  & \quad \quad \quad \quad \quad \quad \quad \quad + (\|\vps\|_{\hoa} + \|\vpsa\|_{\hoa}) \|\vps - \vpsa\|_2 \|\vpsa\|_2 \Bigr\} \\
  &\lesssim \, |\lambda| \, \int_0^t \ud s \, \Bigl\{ \alpha + \|\vps - \vpsa\|_2 \Bigr\}.
  \end{split}
 \end{equation}
By Gronwall's lemma we find $C_2 \equiv C_2(T, \|\vp\|_{\hoa})$ such that
\begin{equation}
 \|\vpt-\vpta\|_2 \, \leq C \, \alpha \quad  for \: all \: |t| \leq T \: and \: all \: 0 \leq \alpha \leq 1.
\end{equation}

\noindent (iii) It is enough to show that there exists a constant $C \equiv C(T, \|\vp\|_{\hoa})$ such that
  \begin{equation}
    \| (-i\nabla + A) (\vpt - \vpta) \|_{2} \, \leq \, C \, \alpha^{1/4} \quad for \: all \: |t| \leq T \: and \: all \: 0 \leq \alpha \leq 1.
   \end{equation}
  From the Duhamel expansions \eqref{equ:duhamel_expansion_vpt}, \eqref{equ:duhamel_expansion_vpta} for $\vpt$ and $\vpta$, we obtain
  \begin{equation} \label{energy_closeness_decomp}
   \begin{split}
    \| (-i\nabla + A) (\vpt - \vpta) \|_{2} \, &\leq \, |\lambda| \int_{0}^{t}{\ud s \, \left\{ \Bigl\| (-i\nabla + A) \Bigl( \frac{1}{|\cdot|} \ast |\vp_{s}|^{2} \Bigr) (\vp_{s} - \vp_{s}^{(\alpha)}) \Bigr\|_{2} \right. } \\ 
    & \quad \quad \quad \quad \quad \quad + \Bigl\| (-i\nabla + A) \Bigl( \frac{\alpha}{|\cdot| (|\cdot| + \alpha)} \ast |\vp_{s}|^{2} \Bigr) \vp_{s}^{(\alpha)} \Bigr\|_{2} \\
    & \quad \quad \quad \quad \quad \quad + \left. \Bigl\| (-i\nabla + A) \Bigl( \frac{1}{|\cdot| + \alpha} \ast (|\vp_{s}|^{2} - |\vp_{s}^{(\alpha)}|^{2}) \Bigr) \vp_{s}^{(\alpha)} \Bigr\|_{2} \right\}.
  \end{split}
  \end{equation}
  In what follows, we will be tacitly using the magnetic Hardy inequality \eqref{equ:magnetic_hardy_inequality} and the uniform $\hoa$-norm control \eqref{equ:uniform_hoa_bound}. The first term in the parenthesis on the r.h.s. of \eqref{energy_closeness_decomp} is bounded by
  \begin{equation} \label{equ:first_term}
   \begin{split}
    & \Bigl\| (-i\nabla + A) \Bigl( \frac{1}{|\cdot|} \ast |\vp_{s}|^{2} \Bigr) (\vp_{s} - \vp_{s}^{(\alpha)}) \Bigr\|_{2} \\
    \leq & \: \Bigl\| (-i \nabla) \Bigl( \frac{1}{|\cdot|} \ast |\vps|^{2} \Bigr) \Bigr\|_{\infty} \, \| \vps - \vpsa \|_{2} + \Bigl\| \frac{1}{|\cdot|} \ast |\vps|^{2} \Bigr\|_{\infty} \, \| (-i \nabla + A) (\vps - \vpsa) \|_{2} \\
    \lesssim & \: \|\vps\|_{\hoa}^2 \| \vps - \vpsa \|_{2} + \|\vps\|_{\hoa}^2 \, \| (-i \nabla + A) (\vps - \vpsa) \|_{2} \, \lesssim \: \alpha + \| (-i \nabla + A) (\vps - \vpsa) \|_{2},
   \end{split}
  \end{equation}
  where we used \eqref{equ:l2_closeness} and 
  \begin{equation*}
   \sup_{x \in \bR^{3}}{ \Bigl| \int{\ud y \, (-i \nabla_{x}) \frac{1}{|x-y|} |\vps(y)|^{2} }\Bigr| } \, \leq \, \sup_{x \in \bR^{3}}{ \int{\ud y \, \frac{1}{|x-y|^{2}} |\vps(y)|^{2} }} \, \lesssim \|\vps\|_{\hoa}^{2}.
  \end{equation*}
  
  The second term in the parenthesis on the r.h.s of \eqref{energy_closeness_decomp} is controlled by
  \begin{equation} \label{equ:second_term}
   \begin{split}
    & \Bigl\| (-i\nabla + A) \Bigl( \frac{\alpha}{|\cdot| (|\cdot| + \alpha)} \ast |\vp_{s}|^{2} \Bigr) \vp_{s}^{(\alpha)} \Bigr\|_{2} \\
    \lesssim & \: \Bigl\| (-i\nabla) \frac{\alpha}{|\cdot| (|\cdot| + \alpha)} \ast |\vps|^{2} \Bigr\|_{3} \|\vpsa\|_{6} + \alpha \, \Bigl\| \frac{1}{|\cdot|^{2}} \ast |\vps|^{2} \Bigr\|_{\infty}  \|(-i\nabla+A) \vpsa\|_2 \\ 
    \lesssim & \: \alpha \, \Bigl\| \frac{1}{|\cdot|^{2}} \ast (\nabla |\vps|^{2}) \Bigr\|_{3} \|\vpsa\|_{\hoa} + \alpha \, \|\vps\|_{\hoa}^2 \|\vpsa\|_{\hoa} \, \lesssim \, \alpha \, \| \nabla (|\vps|^{2}) \|_{3/2} + \alpha \\
    \lesssim & \: \alpha \, \|\vps\|_{6} \|\nabla |\vps| \|_{2} + \alpha \, \lesssim \, \alpha \, \|\vps\|_{\hoa}^2 + \alpha \, \lesssim \, \alpha. 
   \end{split}
  \end{equation}
  Here we used the Hardy-Littlewood-Sobolev inequality in the third estimate.

  The third term in the parenthesis on the r.h.s of \eqref{energy_closeness_decomp} is estimated as follows
  \begin{equation} \label{third_term}
   \begin{split}
    & \Bigl\| (-i\nabla + A) \Bigl( \frac{1}{|\cdot| + \alpha} \ast (|\vp_{s}|^{2} - |\vp_{s}^{(\alpha)}|^{2}) \Bigr) \vp_{s}^{(\alpha)} \Bigr\|_{2} \\
     \leq &\: \Bigl\| \Bigl( (-i\nabla) \frac{1}{|\cdot| + \alpha} \ast (|\vp_{s}|^{2} - |\vp_{s}^{(\alpha)}|^{2}) \Bigr) \vpsa \Bigr\|_{2} + \Bigl\| \frac{1}{|\cdot| + \alpha} \ast (|\vp_{s}|^{2} - |\vp_{s}^{(\alpha)}|^{2}) \Bigr\|_{\infty} \, \| (-i \nabla + A) \vpsa \|_{2}.
   \end{split}
   \end{equation}
  The first term on the r.h.s. of \eqref{third_term} is bounded by
  \begin{equation} \label{third_term_two}
   \begin{split}
    & \Bigl\| \Bigl( (-i\nabla) \frac{1}{|\cdot| + \alpha} \ast (|\vp_{s}|^{2} - |\vp_{s}^{(\alpha)}|^{2}) \Bigr) \vpsa \Bigr\|_{2}\\ 
    \leq & \: \Bigl\| (-i\nabla) \frac{1}{|\cdot| + \alpha} \ast (|\vps|^2 - |\vpsa|^2) \Bigr\|_3 \, \|\vpsa\|_6 \\
    \lesssim & \: \Bigl\| \int{\ud y \, \frac{1}{(|\cdot - y| + \alpha)^2} (|\vps(y)| + |\vpsa(y)|) |\vps(y) - \vpsa(y)|} \Bigr\|_3 \, \|\vpsa\|_{\hoa} \\
    \lesssim & \: \Bigl\| \Bigl( \int{\ud y \, \frac{1}{(|\cdot - y| + \alpha)^4} (|\vps(y)|^2 + |\vpsa(y)|^2)} \Bigr)^{1/2} \Bigr\|_3 \, \|\vps - \vpsa\|_2 \\
    \lesssim & \: \Bigl( \Bigl\| \frac{1}{|\cdot|^{5/2}} \ast |\vps|^2 \Bigr\|_{3/2}^2 + \Bigl\| \frac{1}{|\cdot|^{5/2}} \ast |\vpsa|^2 \Bigr\|_{3/2}^2\Bigr) \, \alpha^{1/4} \, \lesssim \, \Bigl( \|\vps\|_{12/5}^4 + \|\vpsa\|_{12/5}^4 \Bigr) \, \alpha^{1/4} \, \lesssim \, \alpha^{1/4}.
   \end{split}
  \end{equation}
  Here, the fourth estimate followed from \eqref{equ:l2_closeness}. In the fifth estimate, we used the Hardy-Littlewood-Sobolev inequality. The last inequality followed from $L^{p}$-interpolation between $L^{2}$ and $L^{6}$. The $L^{2}$-norm of $\vpsa$ is equal to one by mass conservation, the $L^{6}$-norm of $\vpsa$ is bounded using the Sobolev inequality and the uniform $\hoa$-norm control \eqref{equ:uniform_hoa_bound}. \\
  In order to estimate the second term on the r.h.s. of \eqref{third_term}, we observe that by \eqref{equ:l2_closeness},
  \begin{equation} \label{third_term_one}   
    \Bigl\| \frac{1}{|\cdot|+\alpha} \ast (|\vps|^2 - |\vpsa|^2) \Bigr\|_{\infty} \, \lesssim (\|\vps\|_{\hoa} + \|\vpsa\|_{\hoa}) \|\vps - \vpsa\|_2 \, \lesssim \, \alpha.
  \end{equation}
  Inserting \eqref{third_term_one} and \eqref{third_term_two} into \eqref{third_term} and using \eqref{equ:uniform_hoa_bound} yields
  \begin{equation} \label{equ:third_term}
  \Bigl\| (-i\nabla + A) \Bigl( \frac{1}{|\cdot| + \alpha} \ast (|\vp_{s}|^{2} - |\vp_{s}^{(\alpha)}|^{2}) \Bigr) \vp_{s}^{(\alpha)} \Bigr\|_{2} \, \lesssim \, \alpha^{1/4}.
  \end{equation}

  Combining \eqref{equ:first_term}, \eqref{equ:second_term} and \eqref{equ:third_term} gives
  \begin{equation*}
   \| (-i\nabla + A) (\vpt - \vpta) \|_{2} \, \lesssim \, \int_{0}^{t}{\ud s \, \Bigl\{ \| (-i\nabla + A) (\vps - \vpsa) \|_{2} + \alpha^{1/4} \Bigr\} }.
  \end{equation*}
  The claim now follows with Gronwall's lemma.
\end{proof}

\medskip

Moreover, we derive uniform estimates on the $\hoa$-norm of the time derivative of solutions to the regularized magnetic Hartree equation \eqref{equ:regularised_hartree_equ}. These are needed in the proof of Theorem \ref{thm:energy_convergence}. For $j, k, l \in \{1,2,3\}$, write $\cha \, \equiv \, (-i\nabla + A)^2 \, = \, \sum_l{D_l^2}$ and $B_{jk} \, \equiv \, (\partial_j A_k) - (\partial_k A_j)$. We will be using the commutators
\begin{align}
 [D_j, D_k] \, &= \, -i B_{jk}, \\
 [D_j, D_l^2] \, &= \, - 2 i B_{jl} D_l - (\partial_l B_{jl}),  \label{equ:DjDlsquared_commutator} \\
 [D_j D_k, D_l^2] \, &= \, -2 (\partial_j B_{kl}) D_l - 2i B_{kl} D_j D_l + i (\partial_j \partial_l B_{kl}) - (\partial_l B_{kl}) D_j - 2i B_{jl} D_l D_k - (\partial_l B_{jl}) D_k.   \label{DjDkDlsquared_commutator}
\end{align}

\begin{proposition} \label{prop:hoa_bound_time_derivative}
 Let $A \in C^{\infty}(\bR^3; \bR^3)$ satisfy assumption (A). Let $\vp \in H^3_A(\bR^3)$ with $\|\vp\|_2 = 1$. Denote by $\vpta$ the solution to the regularized magnetic Hartree equation with $\alpha \geq 0$ and initial datum $\vp_{t=0}^{(\alpha)} = \vp$. Let $T > 0$. Then there exists a constant $C \equiv C(T, \|\vp\|_{H^3_A})$ such that 
 \begin{equation} \label{equ:hoa_bound_time_derivative}
  \|\partial_t \vpta\|_{\hoa} \, \leq \, C \quad for \: all \: |t| \leq T \: and \: all \: 0 \leq \alpha \leq 1.
 \end{equation} 
\end{proposition}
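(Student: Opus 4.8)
The plan is to differentiate the regularized magnetic Hartree equation in time and estimate the resulting expression in $\hoa$. Writing $w_t \equiv \partial_t \vpta$, one differentiates \eqref{equ:regularised_hartree_equ} to get
\begin{equation*}
 i \partial_t w_t \, = \, (-i\nabla + A)^2 w_t + \Bigl( \frac{\lambda}{|\cdot| + \alpha} \ast |\vpta|^2 \Bigr) w_t + \Bigl( \frac{\lambda}{|\cdot| + \alpha} \ast (2 \, \im(\overline{\vpta} w_t)) \Bigr) \vpta.
\end{equation*}
This is a linear magnetic Schr\"odinger equation for $w_t$ with time-dependent potential and a source-type term. The first step is to control $\|w_t\|_2$: pairing the equation with $w_t$, the $L^2$-norm is conserved up to the last term, which by the magnetic Hardy inequality \eqref{equ:magnetic_hardy_inequality} and the uniform $\hoa$-bound \eqref{equ:uniform_hoa_bound} on $\vpta$ is bounded by $C \|w_t\|_2$, so Gronwall gives $\|w_t\|_2 \leq C$ on $[-T, T]$, provided $\|w_0\|_2 = \|(-i\nabla+A)^2 \vp + (\ldots)\vp\|_2$ is finite, which holds since $\vp \in H^3_A$. (Alternatively one observes directly from the equation that $\|\partial_t \vpta\|_2 = \|\cha \vpta + (\ldots)\vpta\|_2$ is bounded once one knows $\vpta$ is bounded in $H^2_A$.)

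The second and main step is to control $\|(-i\nabla + A) w_t\|_2$, equivalently $\sum_j \|D_j w_t\|_2$. Here I would apply $D_j$ to the evolution equation for $w_t$ and pair with $D_j w_t$. Because $D_j$ does not commute with $\cha$, this generates commutator terms $[D_j, \cha] w_t$, which by \eqref{equ:DjDlsquared_commutator} equal $-2i B_{jl} D_l w_t - (\partial_l B_{jl}) w_t$; by assumption (A) the coefficients $B_{jl}$ and $\partial_l B_{jl}$ are bounded, so these are absorbed into a term $C \sum_l \|D_l w_t\|_2$ plus $C\|w_t\|_2$. The nonlinear terms are handled exactly as in the proof of Proposition \ref{prop:closeness_hartree_dynamics}: distributing $D_j = -i\partial_j + A_j$, the derivative either hits the convolution potential — producing a $\frac{1}{|\cdot|^2} \ast (\ldots)$ factor controlled in $L^\infty$ via the magnetic Hardy inequality — or hits $w_t$ or $\vpta$, in which case one uses H\"older with the Sobolev embedding $\hoa \hookrightarrow L^6$ and the uniform bounds \eqref{equ:uniform_hoa_bound}. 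The source term $\bigl( \frac{\lambda}{|\cdot|+\alpha} \ast 2\im(\overline{\vpta} w_t) \bigr) \vpta$, after applying $D_j$, is the one requiring the most care: the worst piece is $\bigl( (-i\nabla)\frac{1}{|\cdot|+\alpha} \ast \im(\overline{\vpta} w_t) \bigr) \vpta$, which I would bound by $\bigl\| \frac{1}{|\cdot|^2} \ast (|\vpta||w_t|) \bigr\|_? \|\vpta\|_?$ using Hardy–Littlewood–Sobolev and the boundedness of $\|\vpta\|_{\hoa}$ and $\|w_t\|_2$; note that here one genuinely needs $\vpta$ bounded in $H^2_A$ (hence $\vp \in H^3_A$, via a prior bound analogous to but one order higher than \eqref{equ:uniform_hoa_bound}) so that $\|w_t\|_2$ is finite and so that $\nabla|\vpta|$-type quantities are controlled. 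Collecting everything yields
\begin{equation*}
 \frac{\ud}{\ud t} \|(-i\nabla+A) w_t\|_2^2 \, \leq \, C \bigl( \|(-i\nabla+A) w_t\|_2^2 + 1 \bigr),
\end{equation*}
uniformly in $0 \leq \alpha \leq 1$, and Gronwall on $[-T,T]$ closes the argument once the initial value $\|(-i\nabla+A) w_0\|_2 \leq C\|\vp\|_{H^3_A}$ is checked — this is where all three derivatives on $\vp$ are consumed, since $w_0 = -i(\cha \vp + (\frac{\lambda}{|\cdot|+\alpha}\ast|\vp|^2)\vp)$ and applying $(-i\nabla+A)$ to $\cha\vp$ produces $H^3_A$-type terms (again modulo bounded commutator contributions from assumption (A)).

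The main obstacle I expect is bookkeeping rather than conceptual: one must keep the estimates uniform in $\alpha \in [0,1]$ (so every appearance of $\frac{1}{|\cdot|+\alpha}$ or its derivative must be dominated by $\alpha$-independent quantities such as $\frac{1}{|\cdot|}$, $\frac{1}{|\cdot|^2}$), and one must handle the noncommutativity of $D_j$ with $\cha$ carefully — the commutator identities \eqref{equ:DjDlsquared_commutator} and \eqref{DjDkDlsquared_commutator} are exactly what is needed, and assumption (A) guarantees all commutator coefficients are bounded so they never spoil the Gronwall estimate. A secondary technical point is establishing beforehand a uniform $H^2_A$-bound on $\vpta$ (propagating $H^2_A$-regularity of the initial datum through the equation, again via the commutators and the magnetic Hardy/Sobolev inequalities), which is the natural analogue of \eqref{equ:uniform_hoa_bound} one order up and is what makes the $L^2$-bound on $\partial_t \vpta$ available.
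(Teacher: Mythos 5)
Your proposal follows essentially the same route as the paper's proof: differentiate the regularized Hartree equation in time, obtain the $L^2$-bound on $\dot{\varphi}_t^{(\alpha)}$ from the equation and a prior uniform $H^2_A$-bound on $\varphi_t^{(\alpha)}$ (the paper isolates this as Lemma~\ref{lemma:h2a_regularity}, which you correctly flag as the necessary auxiliary result), then apply $D_j$ to the time-differentiated equation, handle the commutator $[D_j,\cha]$ via \eqref{equ:DjDlsquared_commutator} and assumption (A), control the Hartree terms through magnetic Hardy, Hardy--Littlewood--Sobolev, and the uniform bounds, and close with Gronwall, with $\varphi\in H^3_A$ entering precisely to make $\|D_j\dot\phi_{t=0}\|_2$ finite. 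A tiny slip: $\partial_t|\varphi_t^{(\alpha)}|^2 = 2\,\mathrm{Re}(\overline{\varphi_t^{(\alpha)}}\,\partial_t\varphi_t^{(\alpha)})$, not $2\,\mathrm{Im}(\cdot)$, but this does not affect the structure of the argument.
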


The proof of Proposition \ref{prop:hoa_bound_time_derivative} relies on the following higher regularity result for the regularized magnetic Hartree equation \eqref{equ:regularised_hartree_equ}.

\begin{lemma}[Propagation of $H^2_A$-regularity] \label{lemma:h2a_regularity}
Let $A \in C^{\infty}(\bR^3; \bR^3)$ satisfy assumption (A) and let $\vp \in H^2_A(\bR^3)$ with $\|\vp\|_2 = 1$. Denote by $\vpta$ the solution to the regularized magnetic Hartree equation \eqref{equ:regularised_hartree_equ} with initial datum $\vp_{t=0}^{(\alpha)} = \vp$. Let $T > 0$. Then there exists a constant $C \equiv C(T, \|\vp\|_{H^2_A})$ such that for all $j, k \in \{1,2,3\}$,
 \begin{equation}
  \| D_j D_k \vpta \|_2 \, \leq \, C \quad for \: all \: |t| \leq T \: and \: all \: 0 \leq \alpha \leq 1.
 \end{equation}
\end{lemma}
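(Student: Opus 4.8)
The plan is to bound the graph norm $\|\cha\vpta\|_2$ uniformly by an energy estimate, and then to transfer this to the claimed bound on $\|D_jD_k\vpta\|_2$ by a commutator argument.

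\textbf{Step 1 (reduction to the magnetic Laplacian).} For a sufficiently regular $\psi$, integrating by parts (using that each $D_j$ is symmetric) and commuting $D_j^2$ past $D_k$ gives, from the commutator identity \eqref{equ:DjDlsquared_commutator},
\[
 \sum_{j,k=1}^3 \|D_jD_k\psi\|_2^2 \, = \, \|\cha\psi\|_2^2 + \sum_{j,k=1}^3 \langle D_k\psi, [D_j^2,D_k]\psi\rangle \, \leq \, \|\cha\psi\|_2^2 + C\,\|\psi\|_{\hoa}^2,
\]
because $[D_j^2,D_k] = 2iB_{kj}D_j + (\partial_j B_{kj})$ has coefficients bounded by Assumption (A) (note $B_{kj}=\partial_k A_j-\partial_j A_k$ is bounded and $\partial B$ decays). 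Combined with the uniform $\hoa$-bound \eqref{equ:uniform_hoa_bound} of Proposition~\ref{prop:closeness_hartree_dynamics}(i), it therefore suffices to prove $\|\cha\vpta\|_2 \leq C(T,\|\vp\|_{H^2_A})$ for $|t|\leq T$ and $0\leq\alpha\leq1$.

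\textbf{Step 2 (energy estimate for $\cha\vpta$).} Set $V_\alpha := \frac{\lambda}{|\cdot|+\alpha}\ast|\vpta|^2$ and $v:=\cha\vpta$. Differentiating \eqref{equ:regularised_hartree_equ} formally yields $i\partial_t v = \cha v + \cha(V_\alpha\vpta)$, and since $\cha$ is self-adjoint, $\frac{d}{dt}\|v\|_2^2 = 2\,\im\langle\cha(V_\alpha\vpta),v\rangle$. The covariant Leibniz rule gives $\cha(V_\alpha\vpta) = (-\Delta V_\alpha)\vpta - 2i(\nabla V_\alpha)\cdot D\vpta + V_\alpha v$, where $D\vpta=(D_1\vpta,D_2\vpta,D_3\vpta)$. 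The last two terms are harmless: $\|V_\alpha\|_\infty$ and $\|\nabla V_\alpha\|_\infty$ are $\lesssim\|\vpta\|_{\hoa}^2$ by the magnetic Hardy inequality \eqref{equ:magnetic_hardy_inequality} (just as after \eqref{equ:first_term}), so they contribute $\lesssim \|v\|_2^2 + \|v\|_2$. For the first term the key input is the elementary identity $-\Delta\frac{1}{|x|+\alpha} = \frac{2\alpha}{|x|(|x|+\alpha)^3}$, whose $L^1(\bR^3)$-norm equals $4\pi$ \emph{uniformly in} $\alpha>0$ (and which collapses to $4\pi\delta_0$ as $\alpha\to0$); hence by Young's inequality $\|\Delta V_\alpha\|_3 \leq 4\pi|\lambda|\,\|\vpta\|_6^2 \lesssim \|\vpta\|_{\hoa}^2$, so that $\|(-\Delta V_\alpha)\vpta\|_2 \leq \|\Delta V_\alpha\|_3\|\vpta\|_6 \lesssim \|\vpta\|_{\hoa}^3$. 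Inserting \eqref{equ:uniform_hoa_bound} everywhere and using the arithmetic–geometric mean inequality we arrive at $\frac{d}{dt}\|v\|_2^2 \lesssim 1 + \|v\|_2^2$; Gronwall's lemma together with $\|v|_{t=0}\|_2 = \|\cha\vp\|_2 \leq \|\vp\|_{H^2_A}$ then gives the desired uniform bound on $\|\cha\vpta\|_2$, and Step~1 concludes the proof.

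\textbf{Main obstacle.} The identity $i\partial_t v = \cha v + \cha(V_\alpha\vpta)$ used in Step~2 is only formal as written, since a priori $\vpta\in\hoa$ and $\cha^2\vpta$ need not be defined; the genuine work is to make it rigorous. One natural route is to first establish local-in-time $H^2_A$ well-posedness of \eqref{equ:regularised_hartree_equ} — via the same short-time magnetic Strichartz estimates of \cite{Yaj} underlying Proposition~\ref{prop:gwp_hartree}, together with the mapping properties of the regularized Hartree nonlinearity in $H^2_A$ — and then to run the a priori estimate above, which shows the $H^2_A$-norm cannot blow up and hence, by the blow-up alternative, gives global existence and the uniform bound; alternatively one regularizes the data (and $\alpha$), works in a class where the computation is legitimate, and passes to the limit using continuous dependence. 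In either case the point is that the $\alpha$-uniformity rests solely on the $\alpha$-independent $L^1$-bound for $-\Delta\tfrac{1}{|\cdot|+\alpha}$, which lets one avoid Calderón–Zygmund theory altogether. (One could instead work directly with $u=D_jD_k\vpta$, using the commutator \eqref{DjDkDlsquared_commutator} to produce $i\partial_t u = \cha u + [D_jD_k,\cha]\vpta + D_jD_k(V_\alpha\vpta)$; expanding the last term generates the Hessian $\partial_j\partial_k V_\alpha$, and one then invokes the $L^3$-boundedness of the Riesz transform composition, uniformly in $\alpha$. This is equally valid but slightly less self-contained, and is why I would prefer the graph-norm route above.)
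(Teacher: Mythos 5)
Your argument is correct, and it reaches the conclusion by a somewhat different decomposition than the paper. The paper runs Gronwall directly on $\sum_{j,k}\|D_jD_k\vpta\|_2^2$: it differentiates in time, controls the commutator $[D_jD_k,\cha]$ via \eqref{DjDkDlsquared_commutator} and assumption (A), and expands $D_jD_k$ on the nonlinearity by the covariant Leibniz rule, so that it must estimate the mixed second derivatives $\partial_j\partial_k\bigl(\tfrac{1}{|\cdot|+\alpha}\ast|\vpta|^2\bigr)$ (done ``as before'' in Section 2, i.e.\ by moving one derivative onto $|\vpta|^2$ and using Hardy--Littlewood--Sobolev together with the uniform $\hoa$-bound). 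You instead Gronwall the graph norm $\|\cha\vpta\|_2$, which by the Leibniz identity only produces the full Laplacian of the potential, and you exploit the clean identity $-\Delta\tfrac{1}{|x|+\alpha}=\tfrac{2\alpha}{|x|(|x|+\alpha)^3}$ with $L^1$-norm $4\pi$ uniformly in $\alpha$ (your computation of this is correct, and it degenerates consistently to $4\pi\delta_0$ at $\alpha=0$); the mixed derivatives $D_jD_k\vpta$ are then recovered a posteriori from the one-time commutator inequality of your Step 1, which is valid under assumption (A) since $B$ and $\nabla B$ are bounded. What your route buys is that the only $\alpha$-dependent kernel estimate is the explicit $L^1$ identity, avoiding the Hessian of the kernel entirely; what it costs is the elliptic-type equivalence $\sum_{j,k}\|D_jD_k\psi\|_2^2\lesssim\|\cha\psi\|_2^2+\|\psi\|_{\hoa}^2$, which itself requires a density/regularity justification. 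Both proofs share the same formal-regularity caveat (differentiating a quantity that a priori need not be finite for an $\hoa$-solution); the paper handles this implicitly by Gronwalling a quantity finite at $t=0$ and independent of $\alpha\in[0,1]$, while you flag it explicitly and sketch the standard remedies (local $H^2_A$ well-posedness plus blow-up alternative, or regularize-and-pass-to-the-limit), either of which is adequate. The constant you obtain depends on $T$ and $\|\vp\|_{H^2_A}$, as required.
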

\begin{proof}
From \eqref{equ:regularised_hartree_equ} we obtain
\begin{equation}
  \frac{\mathrm{d}}{\mathrm{d}t} \, \|D_j D_k \vpta \|_2^2 \, = \, 2 \, \im \, \Bigl\langle D_j D_k \vpta, \Bigl( [D_j D_k, \cha] \vpta + D_j D_k \Bigl( \frac{\lambda}{|\cdot|+\alpha} \ast |\vpta|^2 \Bigr) \vpta \Bigr) \Bigr\rangle.
\end{equation}
Taking the absolute value, we find
\begin{equation} \label{equ:time_derivative_phi_t_estimate}
 \Bigl| \frac{\mathrm{d}}{\mathrm{dt}}  \, \|D_j D_k \vpta \|_2^2 \Bigr| \, \leq \, 2 \|D_j D_k \vpta\|_2^2 + \|[D_j D_k, \cha]\vpta\|_2^2 + \Bigl\|D_j D_k \Bigl( \frac{\lambda}{|\cdot| + \alpha} \ast |\vpta|^2 \Bigr) \vpta \Bigr\|_2^2. 
\end{equation}
In order to bound the second term on the r.h.s of \eqref{equ:time_derivative_phi_t_estimate}, we use \eqref{DjDkDlsquared_commutator} and obtain
\begin{equation} \label{equ:DjDkcha_commutator_estimate}
 \begin{split}
  & \| [D_jD_k, \cha] \vpta \|_2 \, \leq \, \sum_l{\, \|[D_j D_k, D_l^2] \vpta\|_2} \\
  &\lesssim \, \sum_l \, \Bigl\{ \|\partial_j B_{kl}\|_{\infty} \|D_l \vpta\|_2 + \|B_{kl}\|_{\infty} \|D_j D_l \vpta\|_2 + \|\partial_j \partial_l B_{kl}\|_{\infty} \|\vpta\|_2 \\
  & \quad \quad \quad + \|\partial_l B_{kl}\|_{\infty} \|D_j \vpta\|_2 + \|B_{jl}\|_{\infty} \|D_l D_k \vpta\|_2 + \|\partial_l B_{jl}\|_{\infty} \|D_k \vpta\|_2 \Bigr\} \\
  &\lesssim \, \sum_l \, \Bigl\{ 1 + \|D_j D_l \vpta\|_2 + \|D_l D_k \vpta\|_2 \Bigr\} \, \lesssim \, \sum_{l, m} \, \Bigl\{ 1 + \|D_l D_m \vpta\|_2 \Bigr\}.
 \end{split}
\end{equation}
Here we used the boundedness of all derivatives of the vector potential $A$ by assumption (A).

To estimate the third term on the r.h.s of \eqref{equ:time_derivative_phi_t_estimate} we note that in general,
\begin{equation}
 D_j D_k (f g) \, = \, (-\partial_j \partial_k f) g + (-i \partial_k f) (D_j g) + (-i \partial_j f) (D_k g) + f (D_j D_k g). 
\end{equation}
Thus, $D_j D_k$ acts on the Hartree nonlinearity by
\begin{equation} \label{equ:DjDk_on_nonlinearity}
\begin{split} 
 & \Bigl\| D_j D_k \Bigl( \frac{\lambda}{|\cdot| + \alpha} \ast |\vpta|^2 \Bigr) \vpta \Bigr\|_2 \, \\
 &\leq  \, \Bigl\|\Bigl( - \partial_j \partial_k \frac{\lambda}{|\cdot| + \alpha} \ast |\vpta|^2 \Bigr) \vpta \Bigr\|_2 + \Bigr\|\Bigl(-i\partial_k \frac{\lambda}{|\cdot| + \alpha} \ast |\vpta|^2 \Bigr) D_j \vpta \Bigr\|_2 \\
 & \quad\quad + \Bigl\|\Bigl(-i\partial_j \frac{\lambda}{|\cdot| + \alpha} \ast |\vpta|^2 \Bigr) D_k \vpta \Bigr\|_2 + \Bigl\|\Bigl(\frac{\lambda}{|\cdot| + \alpha} \ast |\vpta|^2 \Bigr) D_j D_k \vpta \Bigr\|_2.
\end{split}
\end{equation}
Performing similar estimates as before in this section, the first three terms on the r.h.s. of \eqref{equ:DjDk_on_nonlinearity} can be controlled using the uniform $\hoa$-norm control \eqref{equ:uniform_hoa_bound}, hence
\begin{equation} \label{equ:DjDk_on_nonlinearity_estimate}
 \Bigl\| D_j D_k \Bigl( \frac{\lambda}{|\cdot| + \alpha} \ast |\vpta|^2 \Bigr) \vpta \Bigr\|_2 \, \lesssim \, 1 + \|D_j D_k \vpta\|_2.
\end{equation} 

Inserting \eqref{equ:DjDkcha_commutator_estimate} and \eqref{equ:DjDk_on_nonlinearity_estimate} into \eqref{equ:time_derivative_phi_t_estimate}, we obtain
\begin{equation*}
  \Bigl| \frac{\mathrm{d}}{\mathrm{dt}}  \, \|D_j D_k \vpta \|_2^2 \Bigr| \, \lesssim \, 1 + \sum_{l, m} \, \|D_l D_m \vpta\|_2^2
\end{equation*}
and therefore
\begin{equation} \label{equ:gronwall_estimate_h2a}
  \Bigl| \frac{\mathrm{d}}{\mathrm{d}t}  \, \sum_{j, k} \, \|D_j D_k \vpta \|_2^2 \Bigr| \, \lesssim \, 1 + \sum_{j, k} \, \|D_j D_k \vpta\|_2^2.
\end{equation}
Since the constants in \eqref{equ:gronwall_estimate_h2a} are independent of $0 \leq \alpha \leq 1$ and since $\sum_{j,k} \|D_j D_k \vp\|_2 < \infty$ by assumption, the claim follows with Gronwall's lemma.
\end{proof}

In what follows we use the shorthand notations $\phi_t \equiv \vpta$ and $\dot{\phi}_t \equiv \partial_t \phi_t$.
\begin{proof}[Proof of Proposition \ref{prop:hoa_bound_time_derivative}]
Fix $T > 0$. From \eqref{equ:regularised_hartree_equ} it follows that
\begin{equation*}
 \|\dot{\pt}\|_2 \, \lesssim \, \|\cha\pt\|_2 + \Bigl\| \frac{\lambda}{|\cdot| + \alpha} \ast |\pt|^2 \Bigr\|_{\infty} \|\pt\|_2 \, \lesssim \, \|\cha\pt\|_2 + \|\pt\|_{\hoa}^2 \|\pt\|_2 \, \lesssim \|\cha\pt\|_2 + 1.
\end{equation*}
Applying Lemma \ref{lemma:h2a_regularity}, we obtain a constant $C \equiv C(T, \|\vp\|_{H^2_A})$ such that
\begin{equation} \label{equ:l2_bound_time_derivative}
 \|\partial_t \vpta\|_2 \, \leq \, C \quad for \: all \: |t| \leq T \: and \: all \: 0 \leq \alpha \leq 1.   
\end{equation}

It remains to estimate $\|(-i\nabla + A) \partial_t \vpta \|_2$. From \eqref{equ:regularised_hartree_equ} we compute for $j \in \{1,2,3\}$,
\begin{equation}
 \begin{split}
  \frac{\ud}{\ud t} \, \| D_j \dot{\pt} \|_2^2 \, &= \, 2 \, \im \, \Bigl( \Bigl\langle D_j \dot{\pt}, [D_j, \cha] \dot{\pt} \Bigr\rangle + \Bigl\langle D_j \dot{\pt}, D_j \Bigl( \frac{\lambda}{|\cdot|+\alpha} \ast (\dot{\pt} \overline{\pt} + \pt \dot{\overline{\pt}}) \Bigr) \pt \Bigr\rangle \\
  &\quad \quad \quad \quad \quad+ \Bigl\langle D_j \dot{\pt}, D_j \Bigl( \frac{\lambda}{|\cdot|+\alpha} \ast |\pt|^2 \Bigr) \dot{\pt} \Bigr\rangle \Bigr). 
 \end{split}
\end{equation}
Taking the absolute value, we get
\begin{equation} \label{equ:Dj_time_derivative_phi_t}
 \begin{split}
  \Bigl| \frac{\ud}{\ud t} \, \|D_j \dot{\pt}\|_2^2 \Bigr| \, &\lesssim \, \|D_j \dot{\pt}\|_2^2 + \Bigl\|[D_j, \cha] \dot{\pt}\Bigr\|_2^2 + \Bigl\| D_j \Bigl( \frac{\lambda}{|\cdot|+\alpha} \ast (\dot{\pt} \overline{\pt} + \pt \dot{\overline{\pt}}) \Bigr) \pt \Bigr\|_2^2 \\
  &\quad \quad + \Bigl\| D_j \Bigl( \frac{\lambda}{|\cdot|+\alpha} \ast |\pt|^2 \Bigr) \dot{\pt} \Bigr\|_2^2.
 \end{split}
\end{equation}
The second term on the r.h.s. of \eqref{equ:Dj_time_derivative_phi_t} is bounded by
\begin{equation} \label{equ:second_term_nabla_A_bound}
 \begin{split}
  \Bigl\| [D_j, \cha] \dot{\pt} \Bigr\|_2 \, &\leq \, \sum_l \, \Bigl\{ 2 \|B_{jl}\|_{\infty} \|D_l \dot{\pt}\|_2 + \|\partial_l B_{jl}\|_{\infty} \|\dot{\pt}\|_2 \Bigr\} \\
  &\lesssim \, \sum_l \, \Bigl\{ \|D_l \dot{\pt}\|_2 + \|\dot{\pt}\|_2 \Bigr\} \lesssim \, 1 + \sum_l \, \|D_l \dot{\pt}\|_2,
 \end{split}
\end{equation}
where we used \eqref{equ:DjDlsquared_commutator}, \eqref{equ:l2_bound_time_derivative} and the boundedness of all derivatives of the magnetic vector potential by assumption (A). \\
The third term on the r.h.s. of \eqref{equ:Dj_time_derivative_phi_t} is controlled as follows
\begin{equation} \label{equ:third_term_nabla_A_bound}
 \begin{split}
  \Bigl\| D_j \Bigl( \frac{\lambda}{|\cdot|+\alpha} \ast (\dot{\pt} \overline{\pt} + \pt \dot{\overline{\pt}}) \Bigr) \pt \Bigr\|_2 \, &\lesssim \, \Bigl\| \frac{1}{|\cdot|^2} \ast (|\dot{\pt}| |\pt|) \Bigr\|_3 \|\pt\|_6 + \Bigl\|\frac{1}{|\cdot|} \ast (|\dot{\pt}| |\pt|) \Bigr\|_6 \|D_j \pt\|_3 \\
  &\lesssim \, \Bigl\| |\dot{\pt}| |\pt| \Bigr\|_{3/2} \|\pt\|_{\hoa} + \Bigl\| |\dot{\pt}| |\pt| \Bigr\|_{6/5} \|D_j \pt\|_3 \\
  &\lesssim \, \|\dot{\pt}\|_2 \|\pt\|_6 \|\pt\|_{\hoa} + \|\dot{\pt}\|_2 \|\pt\|_3 \|\pt\|_{H^2_A} \, \lesssim \, 1.
 \end{split}
\end{equation}
Here, we used the Hardy-Littlewood-Sobolev inequality in the second estimate. The $H^2_A$-regularity from Lemma \ref{lemma:h2a_regularity} and \eqref{equ:l2_bound_time_derivative} crucially entered the last estimate. \\
Using \eqref{equ:l2_bound_time_derivative}, the fourth term on the r.h.s. of \eqref{equ:Dj_time_derivative_phi_t} is estimated by
\begin{equation} \label{equ:fourth_term_nabla_A_bound}
 \begin{split}
  \Bigl\| D_j \Bigl( \frac{\lambda}{|\cdot|+\alpha} \ast |\pt|^2 \Bigr) \dot{\pt} \Bigr\|_2 \, \lesssim \, \Bigl\| \frac{1}{|\cdot|^2} \ast |\pt|^2 \Bigr\|_{\infty} \|\dot{\pt}\|_2 + \Bigl\| \frac{1}{|\cdot|} \ast |\pt|^2 \Bigr\|_{\infty} \|D_j \dot{\pt}\|_2 \, \lesssim \, 1 + \|D_j \dot{\pt}\|_2.
 \end{split}
\end{equation}

Inserting \eqref{equ:second_term_nabla_A_bound}, \eqref{equ:third_term_nabla_A_bound} and \eqref{equ:fourth_term_nabla_A_bound} into  \eqref{equ:Dj_time_derivative_phi_t} gives
\begin{equation*}
 \Bigl| \frac{\ud}{\ud t} \, \|D_j \dot{\pt}\|_2^2 \Bigr| \, \lesssim \, 1 + \sum_l \, \|D_l \dot{\pt}\|_2^2
\end{equation*}
and thus,
\begin{equation} \label{equ:nabla_A_bound_time_derivative_gronwall}
 \Bigl| \frac{\ud}{\ud t} \sum_j \, \|D_j \dot{\pt} \|_2^2 \Bigr| \, \lesssim \, 1 + \sum_j \, \|D_j \dot{\pt}\|_2^2.
\end{equation}
The assumptions about the initial datum imply $\|D_j \dot{\phi}_{t=0}\|_2 < \infty$. Since the constants in \eqref{equ:nabla_A_bound_time_derivative_gronwall} are independent of $0 \leq \alpha \leq 1$, Gronwall's lemma then yields the assertion. 
\end{proof}

\section{Convergence in trace norm} \label{sec:trace_convergence}
\setcounter{equation}{0}

\begin{proof}[Proof of Theorem \ref{thm:trace_convergence}]
We apply results from \cite{KP} to our mean-field system with an external magnetic field. For the convenience of the reader we state below the version of Theorem 3.1 and Corollary 3.2 in \cite{KP} that we will use.

\begin{theorem}[Knowles-Pickl, \cite{KP}] \label{thm:kp}
 Consider the mean-field Hamiltonian 
 \begin{equation} \label{equ:kp_hamiltonian}
  H_N \, = \, \sum_{j=1}^N{\,h_j} + \frac{1}{N} \sum_{i<j}{\, V(x_i - x_j)} \, \equiv \, H_N^0 + H_N^V
 \end{equation}
 on $L^2(\bR^{3N})$, where $h$ is a one-particle operator and $V$ is an interaction potential. We make the following assumptions.
 \begin{itemize}
 \item[(A1)] The one-particle Hamiltonian $h$ is self-adjoint and bounded from below. Without loss of generality we assume that $h \geq 0$. We define the Hilbert space $X_{N} = \mathcal{Q}(H_{N}^{0})$ as the form domain of $H_{N}^{0}$ with norm
 \begin{equation*}
  \| \psi \|_{X_{N}} \, := \, \|(\mathds{1} + H_{N}^{0})^{1/2} \psi \|_{2}.
 \end{equation*}
 \item[(A2)] The Hamiltonian \eqref{equ:kp_hamiltonian} is self-adjoint and bounded from below. We also assume that $\mathcal{Q}(H_{N}) \subset X_{N}$.
 \item[(A3)] The interaction potential $V$ is a real and even function satisfying
  \begin{equation*} \label{A3'}
   \| V^{2} \ast |\vp|^{2} \|_{\infty} \, \leq \, K \, \| \vp \|_{X_{1}}^{2}
  \end{equation*}
  for some constant $K > 0$. Without loss of generality we assume that $K \geq 1$.
 \item[(A4)] Let $\vp \in X_1$ with $\|\vp\|_2 = 1$. The solution $\vpt$ of the initial value problem for the Hartree equation
 \begin{equation*} \label{equ:kp_hartree}
  i \partial_t \vpt \, = \, h \vpt + (V \ast |\vpt|^2) \vpt
 \end{equation*}
 with initial datum $\vp_{t=0} = \vp$ satisfies
  \begin{equation*}
   \vpt \in C(\mathbb{R}; X_{1}) \cap C^{1}(\mathbb{R}; X_{1}^{*}).
  \end{equation*}
  Here, $X_1^*$ denotes the dual space of $X_1$, i.e. the closure of $L^2$ under the norm $\|\vp\|_{X_1^*} = \|(\mathds{1}+h)^{-1/2} \vp\|_2$.
 \end{itemize}
 Set $\pn = \vp^{\otimes N}$ and let $\pnt = e^{-i H_N t} \pn$. Denote by $\gntk$ the $k$-particle marginal densities associated with $\pnt$. Then
 \begin{equation*}
  \tr{\, \Bigl| \gntk - |\vpt\rangle\langle\vpt|^{\otimes k}\Bigr|} \, \leq \, \sqrt{8} \sqrt{\frac{k}{N}} e^{\phi(t)/2}
 \end{equation*}
 with $\phi(t) = 32 K \int_0^t{\ud s \, \|\vp(s)\|_{X_1}^2 }$. 
\end{theorem}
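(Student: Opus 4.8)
The plan is to run Pickl's counting method in the form of \cite{KP}. For a unit vector $\chi \in L^2(\bR^3)$ and $j \in \{1,\dots,N\}$, let $p_j^{\chi}$ be the orthogonal projection onto $\chi$ in the $j$-th variable, $q_j^{\chi} = \mathds{1} - p_j^{\chi}$, and $\hat n^{\chi} := \frac1N\sum_{j=1}^N q_j^{\chi}$. The central quantity is
\[
 \alpha_N(t) \, := \, \langle \pnt, \hat n^{\vpt}\pnt\rangle \, = \, \langle \pnt, q_1^{\vpt}\pnt\rangle \, = \, 1 - \langle \vpt, \gnto\,\vpt\rangle ,
\]
the second equality being the permutation symmetry of $\pnt$. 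Since $\pn = \vp^{\otimes N}$ is exactly factorized, $q_1^{\vp}\pn = 0$, so $\alpha_N(0) = 0$; assumptions (A1)--(A2) keep $\pnt$ in the form domain $X_N$ and (A4) lets $\vpt$ be differentiated in $X_1^{*}$, which is what makes the pairings below finite. The whole statement will follow once we show $\alpha_N(t) \leq N^{-1}e^{\phi(t)}$, with $\phi(t) = 32K\int_0^t\|\vps\|_{X_1}^2\,\ud s$.

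First I would record the (dynamics-free) passage from $\alpha_N$ to the marginals. A telescoping estimate on reduced densities gives the subadditivity bound $1 - \langle\vpt^{\otimes k}, \gntk\,\vpt^{\otimes k}\rangle \leq k\bigl(1 - \langle\vpt,\gnto\,\vpt\rangle\bigr) = k\,\alpha_N(t)$. Moreover, for a positive trace-class $\gamma$ on $L^2(\bR^{3k})$ with $\tr\gamma = 1$ and a unit vector $\chi$, writing $P = |\chi\rangle\langle\chi|$ and decomposing $\gamma - P$ into its four blocks, the two diagonal blocks $P(\cdot)P - P$ and $(\mathds{1}-P)(\cdot)(\mathds{1}-P)$ have trace norm $1 - \langle\chi,\gamma\chi\rangle$ each, and the off-diagonal block is bounded by $\|P\gamma(\mathds{1}-P)\|_{\tr} \leq \|P\gamma^{1/2}\|_{HS}\|\gamma^{1/2}(\mathds{1}-P)\|_{HS} = \sqrt{\langle\chi,\gamma\chi\rangle\,(1-\langle\chi,\gamma\chi\rangle)}$; using $\sqrt{u}+\sqrt{1-u}\leq\sqrt2$ this yields $\tr|\gamma - P| \leq \sqrt{8}\,\sqrt{1-\langle\chi,\gamma\chi\rangle}$. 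Taking $\gamma = \gntk$, $\chi = \vpt^{\otimes k}$ and combining, $\tr\bigl|\gntk - |\vpt\rangle\langle\vpt|^{\otimes k}\bigr| \leq \sqrt{8k\,\alpha_N(t)}$.

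Second, the Gronwall estimate. Differentiating $\alpha_N(t)$ using $i\partial_t\pnt = H_N\pnt$ and $i\partial_t\vpt = h\vpt + (V\ast|\vpt|^2)\vpt$, the one-body contributions cancel identically --- the $H_N^0 = \sum_i h_i$ commutator meets the $h$-part of $\partial_t q_i^{\vpt} = -i[h_i + (V\ast|\vpt|^2)(x_i), q_i^{\vpt}]$ with opposite sign --- leaving
\[
 \frac{\ud}{\ud t}\alpha_N(t) \, = \, -i\,\Bigl\langle\pnt,\ \Bigl[\hat n^{\vpt},\ \tfrac1N\textstyle\sum_{i<j}V(x_i-x_j)\ -\ \textstyle\sum_i (V\ast|\vpt|^2)(x_i)\Bigr]\pnt\Bigr\rangle .
\]
I would insert $\mathds{1} = (p_i^{\vpt}+q_i^{\vpt})(p_j^{\vpt}+q_j^{\vpt})$ on both sides of every $V(x_i-x_j)$ and use the identity $p_j^{\vpt}\bigl(V(x_i-x_j) - (V\ast|\vpt|^2)(x_i)\bigr)p_j^{\vpt} = 0$ (valid because $V$ is even), so the fully $p$-projected part cancels the mean-field term. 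The surviving pieces are: ``one-excitation'' terms carrying one extra $q^{\vpt}$ besides the one inside $\hat n^{\vpt}$, which I would control by Cauchy--Schwarz --- moving the $q$'s onto $\pnt$ to produce factors $\|q_j^{\vpt}\pnt\| = \sqrt{\alpha_N(t)}$ and absorbing the singular potential via $\|p_j^{\vpt}V(x_i-x_j)^2 p_j^{\vpt}\|_{\mathrm{op}} = \|(V^2\ast|\vpt|^2)(x_i)\|_\infty \leq K\|\vpt\|_{X_1}^2$, which is exactly assumption (A3) --- and a ``two-excitation'' term $q_i^{\vpt}q_j^{\vpt}V(x_i-x_j)p_i^{\vpt}p_j^{\vpt}$ plus its adjoint. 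For the latter I would use Pickl's weighted counting operators: $\hat n^{\vpt} = \widehat m$ with weight $m(k) = k/N$ ($P_k$ the spectral projection onto ``exactly $k$ excited particles''), the shifted weights $\widehat m_d = \sum_k m(k+d)P_k$, and the intertwining relations between $\widehat m$, $\widehat m_d$ and the $q_j^{\vpt}$ together with (A3), to extract the cancellations that make this term small. Keeping track of the constants yields $\bigl|\frac{\ud}{\ud t}\alpha_N(t)\bigr| \leq 32K\|\vpt\|_{X_1}^2\bigl(\alpha_N(t) + \tfrac1N\bigr)$.

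Since $\phi'(t) = 32K\|\vpt\|_{X_1}^2$, this reads $\frac{\ud}{\ud t}\bigl(\alpha_N + \tfrac1N\bigr) \leq \phi'(t)\bigl(\alpha_N + \tfrac1N\bigr)$, so Gronwall and $\alpha_N(0) = 0$ give $\alpha_N(t) \leq \tfrac1N e^{\phi(t)}$, whence $\tr\bigl|\gntk - |\vpt\rangle\langle\vpt|^{\otimes k}\bigr| \leq \sqrt{8k\,\alpha_N(t)} \leq \sqrt{8k/N}\,e^{\phi(t)/2} = \sqrt{8}\,\sqrt{k/N}\,e^{\phi(t)/2}$, which is the assertion. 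The step I expect to be the real obstacle is the two-excitation term: a naive Cauchy--Schwarz there only gives a right-hand side of order $\sqrt{\alpha_N(t)}$, which Gronwall turns into an $O(1)$ bound, useless for the stated $N^{-1/2}$ rate. Getting the inhomogeneity down to $\tfrac1N$ is precisely where Pickl's weighted counting operators and their shift identities are indispensable, and where the strengthened hypothesis (A3) --- a bound on $V^2\ast|\vp|^2$, not merely on $V\ast|\vp|^2$ --- enters essentially, as it is what permits the operator-norm Cauchy--Schwarz step with a constant proportional to $\|\vpt\|_{X_1}^2$.
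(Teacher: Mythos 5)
The paper does not prove this theorem: it states it as a verbatim citation of Theorem~3.1 and Corollary~3.2 of Knowles--Pickl \cite{KP} (``for the convenience of the reader we state below the version\dots that we will use''), so there is no in-paper proof to compare against. That said, your reconstruction is a faithful outline of the argument actually used in \cite{KP}: the counting functional $\alpha_N(t)=\langle\pnt,q_1^{\vpt}\pnt\rangle$, the rank-one trace-norm lemma giving $\tr|\gamma-P|\le\sqrt{8(1-\langle\chi,\gamma\chi\rangle)}$, the telescoping to pass from $\alpha_N$ to $k$-particle marginals, the cancellation of the one-body and fully-$p$-projected pieces in $\frac{\ud}{\ud t}\alpha_N$, the Cauchy--Schwarz step that turns (A3) into an operator-norm bound $K\|\vpt\|_{X_1}^2$ on the potential sandwiched between $p$-projections, and the use of Pickl's shifted counting operators $\widehat m_d$ to bring the two-excitation term down to $O(1/N)$ rather than $O(\sqrt{\alpha_N})$, followed by Gronwall. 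You correctly identify the two-excitation term as the place where a naive estimate fails and the weight-shift identities are essential, and your bookkeeping of the constant (via $\phi'(t)=32K\|\vpt\|_{X_1}^2$ and $\alpha_N(0)=0$) matches the stated bound $\sqrt{8}\sqrt{k/N}\,e^{\phi(t)/2}$. The only incompleteness is that you do not carry out the shift-operator computation that produces the numerical constant $32$, which you acknowledge; that is a matter of bookkeeping rather than a gap in the method.
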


\medskip

We now verify (A1)-(A4) of Theorem \ref{thm:kp}. Note that the form domain $X_{1}$ is the magnetic Sobolev space $\hoa(\bR^3)$.
\begin{itemize}
 \item[(A1)] Under the assumption (A), the one-particle operator $h = (-i\nabla + A)^{2}$ is positive and self-adjoint by Theorem 2 in Leinfelder and Simader \cite{LS81}.
 \item[(A2)] Theorem X.16 and Example 2 in Section X.2 in \cite{RS_02} show that the operator $H_{N}^{V} = \sum_{i<j}^{N}{\frac{\lambda}{|x_{i} - x_{j}|}}$ is infinitesimally small with respect to the operator $\sum_{j=1}^{N}{-\Delta_{j}}$. Theorem 2.4 in \cite{AHS78} then implies that $H_{N}^{V}$ is also infinitesimally small with respect to $H_{N}^{0} = \sum_{j=1}^{N}{(-i\nabla_{x_j}+A(x_j))^{2}}$. Hence, by the Kato-Rellich Theorem, $H_{N}$ is self-adjoint on the domain $D(H_{N}^{0})$ of $H_{N}^{0}$ and bounded from below. Moreover, this implies that $H_{N}^{0}$ is $H_{N}$-bounded and therefore $\mathcal{Q}(H_{N}) \subset \mathcal{Q}(H_{N}^{0})$. 
 \item[(A3)] For every $\vp \in \hoa(\bR^3)$ we have
  \begin{equation*}
    \| V^{2} \ast |\vp|^2 \|_{\infty} \, = \, \sup_{x \in \bR^3}{\left| \int_{\bR^3}{\frac{\lambda^{2}}{|x-y|^{2}} |\varphi(y)|^{2} \, \ud y} \right|}  \, \leq \, 4 \lambda^2 \|\vp\|_{\hoa}^2
  \end{equation*}
  by the Hardy inequality, the translational invariance of $\nabla$ and the diamagnetic inequality \eqref{equ:diamagnetic_inequality}.
 \item[(A4)] Proposition \ref{prop:gwp_hartree} states that the solution $\vpt$ of the magnetic Hartree equation \eqref{equ:hartree_equ} with initial datum $\vp_{t=0} = \vp$ satisfies
   \begin{equation*}
    \vpt \in C(\mathbb{R}; \hoa) \cap C^{1}(\mathbb{R}; H_{A}^{-1})
   \end{equation*}
   and that furthermore, we have $\sup{ \{\| \vpt \|_{\hoa} \, | \, t \in \mathbb{R} \}} < \infty$. 
\end{itemize}
Hence, for every $k \in \mathbb{N}$ and $t \in \mathbb{R}$, we have
\begin{equation*}
 \tr{ \, \left| \gntk - |\vpt \rangle \langle \vpt|^{\otimes k} \right| } \, \leq \, \sqrt{8} \sqrt{\frac{k}{N}} e^{\phi(t)/2} \, \leq \, \sqrt{8} \sqrt{\frac{k}{N}} e^{C t}
\end{equation*}
with $C \equiv 16 K \left( \sup{ \{\| \vpt \|_{\hoa} \, | \, t \in \mathbb{R} \}} \right)^{2}$, which completes the proof.
\end{proof}

\section{Energy convergence} \label{sec:energy_convergence}
\setcounter{equation}{0}
The proof of Theorem \ref{thm:energy_convergence} is based on a Fock space representation of the many-body system. This approach to show convergence in energy first appeared in \cite{MS} relying on results in \cite{RS09}. We follow their argument and adapt it to the magnetic case.

\subsection{Fock space representation}
The bosonic Fock space over $L^2(\bR^3)$ is defined as 
\begin{equation*}
 \cF \, = \, \bigoplus_{n=0}^{\infty}{\, L^2_{s}(\bR^{3n})} \, = \, \bC \oplus \bigoplus_{n=1}^{\infty}{\, L^2_{s}(\bR^{3n})},
\end{equation*}
where $L^2_{s}(\bR^{3n})$ is the space of symmetric square-integrable functions over $\bR^{3n}$.  Elements of $\cF$ are sequences $\psi = \{\psi^{(n)}\}_{n=0}^{\infty}$ with $\psi^{(n)} \in L^2_s(\bR^{3n})$. $\cF$ is a Hilbert space with the scalar product
\begin{equation*}
 \langle \psi_1, \psi_2 \rangle_{\cF} \, = \, \sum_{n=0}^{\infty} \langle \psi_1^{(n)}, \psi_2^{(n)} \rangle_{L^2(\bR^{3n})}.
\end{equation*}
The vector $\{1, 0, \ldots\} \in \cF$ is called the vacuum and denoted by $\Omega$. 

\medskip

For arbitrary $f \in L^2(\bR^3)$ we define the creation operator $a^*(f)$ and the annihilation operator $a(f)$ on $\cF$ such that they satisfy the canonical commutation relations
\begin{equation*}
 [a(f), a^*(g)] \, = \, \langle f, g \rangle_{L^2(\bR^3)}, \quad [a(f), a(g)] \, = \, [a^*(f), a^*(g)] \, = \, 0.
\end{equation*}
We also define the operator valued distributions $a^*_x$ and $a_x$ for $x \in \bR^3$ such that the canonical commutation relations assume the form
\begin{equation*}
 [a_x, a^*_y] \, = \, \delta(x-y), \quad \quad [a_x, a_y] \, = \, [a_x^*, a_y^*] \, = \, 0.
\end{equation*}
The number of particle operator $\cN$, expressed through the distributions $a_x, a_x^*$ is given by
\begin{equation*}
 \cN \, = \, \int{\ud x \, a_x^* a_x}.
\end{equation*}

\medskip

For any sequence $\alpha = (\alpha_N)_{N \in \bN}$ with $\alpha_N \rightarrow 0$ as $N \rightarrow \infty$, we define the Hamiltonian $\chna$ on $\cF$ by $(\chna \psi)^{(n)} = H_N^{\alpha} \psi^{(n)}$, where $H_N^{\alpha}$ is the regularized Hamiltonian \eqref{equ:regularised_hamiltonian}. Using the distributions $a_x$, $a_x^*$, $\chna$ can be rewritten as
\begin{equation*}
 \chna \, = \, \int{\ud x \, a_x^* (-i\nabla_x + A(x))^2 a_x} + \frac{1}{2N} \int{\ud x \, \ud y \, \frac{\lambda}{|x-y|+\alpha_N} a_x^* a_y^* a_y a_x}.
\end{equation*}

\medskip
 
For $f \in L^2(\bR^3)$, we define the unitary Weyl-operator
\begin{equation*}
 W(f) \, = \, \exp(a^*(f) - a(f)).
\end{equation*}

\medskip

See Section 3 in \cite{MS} for a more detailed introduction to the Fock space representation of the many-body system.

\subsection{Proof of Theorem \ref{thm:energy_convergence}}
\begin{proof}[Proof of Theorem \ref{thm:energy_convergence}]
We introduce the unitary evolution $\cU_{N}(t;s)$ by the equation
\begin{equation} \label{equ:definition_unitary_evolution}
 i \partial_t \cU_N(t;s) \, = \, \cL_N(t) \cU_N(t; s) \quad \mbox{and} \quad \cU_N(s;s) \, = \, 1 \quad \mbox{for all} \, s \in \bR,
\end{equation}
with the generator
\begin{equation} \label{equ:generator_U_N}
 \begin{split}
  \cL_N(t) \, = &\, \int \ud x \, a_x^* (-i\nabla_x + A(x))^2 a_x + \int \ud x \, \Bigl( \frac{\lambda}{|\cdot| + \alpha_N} \ast |\vptan|^2\Bigr)(x) \,  a_x^* a_x \\
  &\, + \int \ud x \ud y \, \frac{\lambda}{|x-y|+\alpha_N} \, \ovptan(x) \vptan(y) a_y^* a_x \\
  &\, + \frac{1}{2} \int \ud x \ud y \, \frac{\lambda}{|x-y|+\alpha_N} \Bigl( \vptan(x) \vptan(y) a_x^* a_y^* + \ovptan(x) \ovptan(y) a_x a_y \Bigr) \\
  &\, + \frac{1}{\sqrt{N}} \int \ud x \ud y \, \frac{\lambda}{|x-y| + \alpha_N} \, a_x^* \, \Bigl( \vptan(y) a_y^* + \ovptan(y) a_y \Bigr) \, a_x \\
  &\, + \frac{1}{2N} \int \ud x \ud y \, \frac{\lambda}{|x-y|+\alpha_N} \, a_x^* a_y^* a_y a_x,
 \end{split}
\end{equation}
where $\vptan$ denotes the solution to the regularized magnetic Hartree equation \eqref{equ:regularised_hartree_equ}.

It was first observed by Hepp in \cite{H74} that
\begin{equation} \label{equ:hepp_observation}
 \cU_N^*(t;0) \, a_x \, \cU_N(t;0) \, = \, W^*(\sqrt{N}\vp) e^{i\chna t} (a_x - \sqrt{N} \vptan(x)) e^{-i\chna t} W(\sqrt{N} \vp).
\end{equation}
Using \eqref{equ:hepp_observation} it follows as in (5.8) in \cite{MS} that the integral kernel of $(-i\nabla + A) (\gnto - |\vptan\rangle \langle\vptan|) (-i\nabla+A)$ can be estimated by
\begin{equation}
 \begin{split}
  &\Bigl| \Bigl( (-i\nabla+A) (\gnto - |\vptan\rangle\langle\vptan|) (-i\nabla + A) \Bigr)(x; y)\Bigr| \\
  \leq \, & \frac{1}{\sqrt{N}} \int_0^{2\pi} \ud \theta_1 \int_0^{2 \pi} \ud \theta_2 \, \Bigl\| (-i\nabla_x + A(x)) a_x \cU_N^{\theta_1}(t; 0) \Omega \Bigr\| \, \Bigl\| (-i\nabla_y + A(y)) a_y \cU_N^{\theta_2}(t; 0) \Omega \Bigr\| \\
  &+ \frac{1}{N^{1/4}} \Bigl| (-i\nabla_y + A(y)) \vptan(y) \Bigr| \, \int_0^{2 \pi} \ud \theta \, \Bigl\| (-i\nabla_x + A(x)) a_x \cU_N^{\theta}(t;0) \Omega \Bigr\| \\
  &+ \frac{1}{N^{1/4}} \Bigl| (-i\nabla_x + A(x)) \vptan(x) \Bigr| \, \int_0^{2 \pi} \ud \theta \, \Bigl\| (-i\nabla_y + A(y)) a_y \cU_N^{\theta}(t;0) \Omega \Bigr\|. 
 \end{split}
\end{equation}
In the last equation the unitary group $\cU_N^{\theta}$ is defined by the generator \eqref{equ:generator_U_N} with $\vptan$ replaced by $e^{-i\theta} \vptan$ (note that if $\vptan$ is a solution of the regularized Hartree equation \eqref{equ:regularised_hartree_equ}, then also $e^{-i\theta} \vptan$). 

Taking the square and integrating over $x$, $y$, we find, using $\eqref{equ:uniform_hoa_bound}$,
\begin{equation}
 \begin{split}
  & \int \ud x \ud y \, \Bigl| \Bigl( (-i\nabla+A) (\gnto - |\vptan\rangle\langle\vptan|) (-i\nabla + A) \Bigr)(x; y)\Bigr|^2 \\
  & \quad \quad \quad \lesssim \, \frac{1}{N} \Bigl( \int_0^{2 \pi} \ud \theta \, \langle \cU_N^{\theta}(t; 0) \Omega, \cK \, \cU_N^{\theta}(t;0) \Omega\rangle \Bigr)^2   + \frac{1}{\sqrt{N}} \int_0^{2 \pi} \ud \theta \, \langle \cU_N^{\theta}(t;0) \Omega, \cK \, \cU_N^{\theta}(t; 0) \Omega\rangle
 \end{split}
\end{equation}
for all $|t| \leq T$. Here we defined 
\begin{equation}
 \cK \, = \, \int \ud x \, a_x^* (-i\nabla_x + A(x))^2 a_x
\end{equation}
as the kinetic energy operator. 

\medskip

The crucial Proposition \ref{prop:control_growth_kinetic_energy} below then implies that there exists $C \equiv C(T, \|\vp\|_{H^3_A})$ such that
\begin{equation}
 \Bigl\| (-i\nabla + A) (\gnto - |\vptan\rangle\langle\vptan|) (-i\nabla+A) \Bigr\|_{\mathrm{HS}} \, \leq \, \frac{C}{N^{1/4}},
\end{equation}
where $\|\cdot\|_{\mathrm{HS}}$ denotes the Hilbert-Schmidt norm. Thus, by Proposition \ref{prop:closeness_hartree_dynamics} we obtain 
\begin{equation}
 \Bigl\| (-i\nabla + A) (\gnto - |\vpt\rangle\langle\vpt|) (-i\nabla+A) \Bigr\|_{\mathrm{HS}} \, \leq \, C \, \Bigl( \frac{1}{N^{1/4}} + \alpha_N^{1/4} \Bigr).
\end{equation}
It now follows as in (5.12) -- (5.16) in \cite{MS} that
\begin{equation}
 \tr \, \Bigl| (-i\nabla + A) (\gnto - |\vpt\rangle\langle\vpt|) (-i\nabla + A) \Bigr| \, \leq \, C \, \Bigl( \frac{1}{N^{1/4}} + \alpha_N^{1/4} \Bigr),
\end{equation}
which proves the theorem.
\end{proof}

\subsection{Control of the growth of the kinetic energy}
\begin{proposition} \label{prop:control_growth_kinetic_energy}
Suppose that the assumptions of Theorem \ref{thm:energy_convergence} are satisfied. Let the unitary evolution $\cU_N(t; s)$ be defined as in \eqref{equ:definition_unitary_evolution}. Then there exists $C \equiv C(T, \|\vp\|_{H^3_A})$ such that
\begin{equation}
 \langle \cU_N(t; 0) \Omega, \cK \, \cU_N(t; 0) \Omega \rangle \, \leq \, C
\end{equation}
for all $t \in \bR$ with $|t| \leq T$.
\end{proposition}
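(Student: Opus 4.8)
The strategy is to control the growth of $\langle \cU_N(t;0)\Omega, \cK\,\cU_N(t;0)\Omega\rangle$ via a Gronwall argument, differentiating in $t$ and estimating the resulting commutator $[\cK, \cL_N(t)]$ term by term against $\cK$ itself plus the number operator $\cN$ and lower-order contributions. First I would compute
\begin{equation*}
 \frac{\ud}{\ud t} \langle \cU_N(t;0)\Omega, \cK\,\cU_N(t;0)\Omega\rangle \, = \, \langle \cU_N(t;0)\Omega, i[\cL_N(t), \cK]\,\cU_N(t;0)\Omega\rangle.
\end{equation*}
Since $\cK = \int \ud x\, a_x^*(-i\nabla_x + A(x))^2 a_x$ commutes with the free part of $\cL_N(t)$ (the first line of \eqref{equ:generator_U_N}), only the interaction terms contribute. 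The main point is that each such term, when commuted with $(-i\nabla_x + A(x))^2$ acting on the test function $\vptan$ or on the other creation/annihilation operators, produces either a magnetic Laplacian of $\vptan$, a magnetic gradient of $\vptan$, or the potential $\tfrac{\lambda}{|\cdot|+\alpha_N}\ast|\vptan|^2$ and its derivatives, each of which must be bounded uniformly in $t$, $N$, and $\alpha_N \leq 1$.

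The key analytic inputs are then the a priori bounds already established in Section \ref{sec:magnetic_hartree_equation}: the uniform $\hoa$-bound \eqref{equ:uniform_hoa_bound}, the $H^2_A$-regularity from Lemma \ref{lemma:h2a_regularity}, and (for the terms where $\cK$ hits $\vptan$ directly through two magnetic derivatives, producing $\cha\vptan = (-i\nabla+A)^2\vptan$ and hence $\dot\vptan$ via the Hartree equation) the bound \eqref{equ:l2_bound_time_derivative} on $\|\partial_t\vptan\|_2$ and possibly \eqref{equ:hoa_bound_time_derivative} on $\|\partial_t\vptan\|_{\hoa}$. For the Coulomb-type kernels I would use the magnetic Hardy inequality \eqref{equ:magnetic_hardy_inequality}, the Hardy--Littlewood--Sobolev inequality, and the Sobolev embedding $\hoa \hookrightarrow L^6$ exactly as in the estimates \eqref{equ:first_term}--\eqref{equ:third_term} and \eqref{equ:third_term_nabla_A_bound}; the regularization $\alpha_N > 0$ keeps all convolution kernels like $\tfrac{1}{(|\cdot|+\alpha_N)^k}$ no worse than the unregularized ones. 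The outcome of these term-by-term estimates should be a differential inequality of the Gronwall type
\begin{equation*}
 \Bigl| \frac{\ud}{\ud t} \langle \cU_N(t;0)\Omega, \cK\,\cU_N(t;0)\Omega\rangle \Bigr| \, \lesssim \, \langle \cU_N(t;0)\Omega, (\cK + \cN + 1)\,\cU_N(t;0)\Omega\rangle,
\end{equation*}
with constants depending only on $T$ and $\|\vp\|_{H^3_A}$ but not on $N$ or $\alpha_N$.

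To close the argument I also need uniform control on $\langle \cU_N(t;0)\Omega, \cN\,\cU_N(t;0)\Omega\rangle$; this is the standard bound on the growth of the number of particles outside the condensate and follows by an analogous, easier Gronwall estimate on $\tfrac{\ud}{\ud t}\langle \cU_N(t;0)\Omega,\cN\,\cU_N(t;0)\Omega\rangle = \langle \cU_N(t;0)\Omega, i[\cL_N(t),\cN]\,\cU_N(t;0)\Omega\rangle$, where only the cubic and the $a^*a^*$/$aa$ terms survive and are dominated by $\cN + 1$ using \eqref{equ:uniform_hoa_bound} and the magnetic Hardy inequality (this is the magnetic analogue of the corresponding estimate in \cite{RS09} and \cite{MS}). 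Since $\cU_N(0;0)\Omega = \Omega$ has zero kinetic energy and zero particle number, Gronwall's lemma applied to the sum $\langle\cU_N(t;0)\Omega,(\cK+\cN)\cU_N(t;0)\Omega\rangle$ then yields the claimed bound $\langle\cU_N(t;0)\Omega,\cK\,\cU_N(t;0)\Omega\rangle \leq C(T,\|\vp\|_{H^3_A})$ for all $|t|\leq T$.

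\medskip

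The main obstacle I anticipate is the bookkeeping for the commutator $[\cK,\cL_N(t)]$: unlike the non-magnetic case, $(-i\nabla+A)^2$ does not commute with $\nabla$, so when $\cK$ acts on the terms of \eqref{equ:generator_U_N} one generates extra first-order magnetic-field contributions (via the commutators \eqref{equ:DjDlsquared_commutator}) that must be absorbed using Assumption (A) on the decay and boundedness of $B$ and its derivatives — precisely the mechanism already exploited in the proofs of Lemma \ref{lemma:h2a_regularity} and Proposition \ref{prop:hoa_bound_time_derivative}. Ensuring these extra terms are controlled by $\cK^{1/2}$ (hence by $\cK + 1$ after Cauchy--Schwarz) without picking up any $N$- or $\alpha_N$-dependence is the delicate step; everything else is a routine adaptation of the Fock-space energy estimates of \cite{MS} and \cite{RS09}.
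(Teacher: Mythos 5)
Your plan is to run a direct Gronwall argument on $\langle \cU_N(t;0)\Omega, \cK\,\cU_N(t;0)\Omega\rangle$ by estimating $i[\cL_N(t),\cK]$ term by term against $\cK+\cN+1$. This is not what the paper does, and I do not believe the claimed differential inequality can hold with $N$- and $\alpha_N$-independent constants. The structural obstruction is the cubic and quartic parts of $\cL_N(t)$. Consider the quartic piece $\cQ = \frac{1}{2N}\int \ud x\,\ud y\, V(x-y)\,a_x^*a_y^*a_ya_x$ with $V=\frac{\lambda}{|\cdot|+\alpha_N}$. On the $n$-particle sector, $[\cK,\cQ]$ is still a two-body operator and, after the magnetic Hardy inequality, is bounded only by something of order $\frac{n}{N}\,\cK$; without any a priori bound on the particle number, the best you can conclude is $|\langle\psi,[\cK,\cQ]\psi\rangle|\lesssim \frac{1}{N}\langle\psi,\cN\cK\,\psi\rangle$ (or similar mixed moments). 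Likewise the $1/\sqrt N$-cubic term, after Cauchy--Schwarz, forces you into $\langle\cN^{3/2}\rangle$, $\langle\cN\cK\rangle$-type quantities. Attempting to close the Gronwall scheme by also tracking these higher mixed moments leads to an open-ended hierarchy: controlling $\langle\cN^k\cK\rangle$ requires $\langle\cN^{k+1}\cK\rangle$ from the quartic term, and it never terminates. So the inequality $\bigl|\frac{\ud}{\ud t}\langle\cK\rangle\bigr|\lesssim\langle\cK+\cN+1\rangle$ that your proposal asserts is not actually produced by the term-by-term estimates.

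The paper resolves exactly this obstruction by a different route. It introduces an auxiliary cutoff dynamics $\cW_N$ (equation \eqref{equ:definition_W_N}) whose cubic and quartic terms carry the projector $\mathbf{1}_{\vartheta N}(\cN)$. On the range of that projector $\cN\leq\vartheta N$, so the troublesome $\frac{\cN}{N}$ factors become $\leq\vartheta$ and can be absorbed for $\vartheta$ small; this is what makes Proposition \ref{prop:growth_K2_cutoffed_dynamics} (a uniform bound on $\langle\cW_N\Omega,\cK^2\cW_N\Omega\rangle$) and the crucial Lemma \ref{lemma:estimate_quartic_term} work. One then writes $\langle\cU_N\Omega,\cK\,\cU_N\Omega\rangle$ as $\langle\cW_N\Omega,\cK\,\cW_N\Omega\rangle$ plus cross terms, and controls the cross terms by combining: (a) a \emph{weak} $N$- and $\alpha_N$-dependent bound $\langle\cU_N\Omega,\cK^2\cU_N\Omega\rangle\lesssim N^2+N^2/\alpha_N^2$ (Proposition \ref{prop:weak_bound_growth_K2}, which is where the regularization $\alpha_N>0$ is indispensable, via $\mathcal{V}\lesssim\cN^2/(N\alpha_N)$), with (b) the super-polynomial closeness $\|(\cU_N-\cW_N)\Omega\|\lesssim N^{-k}(1+1/\alpha_N)$ for any $k$ (Proposition \ref{prop:compare_dynamics}). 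The hypothesis $N^\beta\alpha_N\to\infty$ is precisely what makes (a) and (b) multiply to something small. None of this appears in your proposal, and your assertion that ``the regularization $\alpha_N>0$ keeps all convolution kernels no worse than the unregularized ones'' misses its actual role: it is needed to get the weak pointwise operator bound $\mathcal{V}\lesssim\cN^2/(N\alpha_N)$ in Proposition \ref{prop:weak_bound_growth_K2}, not merely for technical convenience in Strichartz or Hardy estimates.
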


In what follows we use the shorthand notation $\pt \equiv \vptan$.

\begin{proof}
We compare the growth of the expectation of the kinetic energy operator $\cK$ along the dynamics $\cU_N$ and along a new dynamics $\cW_N$ defined through the equation
\begin{equation} \label{equ:definition_W_N}
 i \partial_t \cW_N(t; s) \, = \, \cM_N(t) \cW_N(t; s) \quad \quad \mbox{and} \quad \cW_N(s; s) = 1 \quad \quad \mbox{for all} \, s \in \bR,
\end{equation}
with the generator
\begin{equation} \label{equ:generator_W_N}
 \begin{split}
  \cM_N(t) \, = &\, \int \ud x \, a_x^* (-i\nabla_x + A(x))^2 a_x + \int \ud x \, \Bigl( \frac{\lambda}{|\cdot| + \alpha_N} \ast |\pt|^2\Bigr)(x) \,  a_x^* a_x \\
  &\, + \int \ud x \ud y \, \frac{\lambda}{|x-y|+\alpha_N} \, \opt(x) \pt(y) a_y^* a_x \\
  &\, + \frac{1}{2} \int \ud x \ud y \, \frac{\lambda}{|x-y|+\alpha_N} \Bigl( \pt(x) \pt(y) a_x^* a_y^* + \opt(x) \opt(y) a_x a_y \Bigr) \\
  &\, + \frac{1}{\sqrt{N}} \int \ud x \ud y \, \frac{\lambda}{|x-y| + \alpha_N} \, \Bigl( \pt(y) \, a_x^* a_y^* \mathbf{1}_{\vartheta N}(\cN) a_x + \opt(y) \, a_x^* \mathbf{1}_{\vartheta N}(\cN) a_x a_y \Bigr) \\
  &\, + \frac{1}{2N} \int \ud x \ud y \, \frac{\lambda}{|x-y|+\alpha_N} \, a_x^* a_y^* \mathbf{1}_{\vartheta N}(\cN) a_y a_x,
 \end{split}
\end{equation}
where $\mathbf{1}_{\vartheta N}(\cN)$ denotes the characteristic function of the interval $(-\infty, \vartheta N]$ for some $\vartheta > 0$ to be fixed later. 

\medskip

Next, we expand
\begin{equation*}
 \begin{split}
  \langle \cU_N(t; 0) \Omega, \cK \, \cU_N(t; 0) \Omega \rangle \, &= \, \langle \cW_N(t; 0) \Omega, \cK \, \cW_N(t; 0) \Omega\rangle + \langle (\cU_N(t;0) - \cW_N(t;0)) \Omega, \cK \, \cW_N(t; 0) \Omega\rangle \\
  &\quad + \langle \cU_N(t; 0) \Omega, \cK \, (\cU_N(t; 0) - \cW_N(t; 0)) \Omega \rangle \\
  &\leq \langle \cW_N(t;0) \Omega, \cK \cW_N(t;0) \Omega \rangle + \|(\cU_N(t;0) - \cW_N(t;0))\Omega\| \, \|\cK \, \cW_N(t;0) \Omega \| \\
  &\quad + \|\cK \, \cU_N(t;0) \Omega\| \, \|(\cU_N(t;0) - \cW_N(t;0)) \Omega \|.
 \end{split}
\end{equation*}
The claim now follows from Proposition \ref{prop:growth_K2_cutoffed_dynamics}, Proposition \ref{prop:weak_bound_growth_K2}, Proposition \ref{prop:compare_dynamics} below and from the assumption that $N^{\beta} \alpha_N \rightarrow \infty$ as $N \rightarrow \infty$ for some $\beta > 0$.
\end{proof}

\begin{proposition} \label{prop:growth_K2_cutoffed_dynamics}
Suppose that the assumptions of Proposition \ref{prop:control_growth_kinetic_energy} are satisfied (but here the assumption $N^{\beta} \alpha_N \rightarrow \infty$ as $N \rightarrow \infty$ for some $\beta > 0$ will not be used). Let the evolution $\cW_N(t;s)$ be defined by \eqref{equ:definition_W_N} and suppose that $\vartheta > 0$ is small enough. Then there exists $C \equiv C(\vartheta, T, \|\vp\|_{H^3_A})$ such that
\begin{equation}
 \langle \cW_N(t; 0) \Omega, \cK^2 \cW_N(t; 0) \Omega \rangle \, \leq \, C
\end{equation}
for all $t \in \bR$ with $|t| \leq T$.
\end{proposition}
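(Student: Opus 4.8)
The plan is to run a Gronwall argument on $f_N(t) := \langle \cW_N(t;0)\Omega,\, \cK^2\, \cW_N(t;0)\Omega\rangle$. Differentiating in $t$, using \eqref{equ:definition_W_N}, and noting that $\cK^2$ carries no explicit time dependence, one gets
\[
 \frac{\ud}{\ud t} f_N(t) \, = \, i\, \langle \cW_N(t;0)\Omega,\, [\cM_N(t), \cK^2]\, \cW_N(t;0)\Omega\rangle .
\]
The first term of the generator \eqref{equ:generator_W_N} is exactly $\cK$, which commutes with $\cK^2$, so only the commutators of $\cK^2$ with the remaining five groups of terms in $\cM_N(t)$ survive. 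The aim is to bound each of these matrix elements by $C\bigl(1 + f_N(t)\bigr)$ with $C \equiv C(\vartheta, T, \|\vp\|_{H^3_A})$. Since $\cW_N(0;0) = 1$ and $\cK\Omega = 0$, we have $f_N(0) = 0$, so Gronwall's lemma yields $f_N(t) \leq C\bigl(e^{C|t|} - 1\bigr)$, which is the assertion for $|t| \leq T$. As in \cite{MS}, a regularisation argument is needed beforehand to ensure that $\cK^2\cW_N(t;0)\Omega$ is well defined and that the formal manipulations are legitimate; I would set this up first and remove the regularisation at the end.

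To estimate the individual commutators I would expand $[\cK^2, X] = \cK[\cK, X] + [\cK, X]\cK$ and use $[\cK, a^*(f)] = a^*\bigl((-i\nabla+A)^2 f\bigr)$, $[\cK, a(f)] = -a\bigl((-i\nabla+A)^2 f\bigr)$, together with $[\cK,\cN] = 0$, so that $\cK$ passes through the cutoff $\mathbf{1}_{\vartheta N}(\cN)$. Wherever $(-i\nabla+A)^2$ lands on a product of $\pt \equiv \vptan$ with a Coulomb factor, I would expand it via the Leibniz rule and the magnetic commutation relations \eqref{equ:DjDlsquared_commutator}--\eqref{DjDkDlsquared_commutator}, so that the magnetic pieces (those containing $B_{jl}$, $\partial_l B_{jl}$, etc.) are handled by Assumption (A) and the purely Coulombic pieces by the magnetic Hardy inequality \eqref{equ:magnetic_hardy_inequality} and the Hardy--Littlewood--Sobolev inequality. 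The norms of $\pt$ and of its magnetic derivatives up to second order that occur are bounded uniformly in $t \in [-T,T]$ and in $0 \leq \alpha_N \leq 1$ by \eqref{equ:uniform_hoa_bound} and by the propagation of $H^2_A$-regularity, Lemma \ref{lemma:h2a_regularity} --- this is where $\vp \in H^3_A$ (hence $\vp \in H^2_A$) enters. Applying the Cauchy--Schwarz inequality in the Fock-space variables repeatedly, together with $\|a(f)\psi\| \leq \|f\|_2\|\cN^{1/2}\psi\|$ and $\|a^*(f)\psi\| \leq \|f\|_2\|(\cN+1)^{1/2}\psi\|$ and the standard a priori moment bounds $\langle \cW_N(t;0)\Omega, (\cN+1)^j\cW_N(t;0)\Omega\rangle \leq C$ for the fluctuation dynamics (proved as in \cite{RS09, MS}), every contribution is bounded by $C(1 + f_N(t))$ (using also $\langle\cW_N(t;0)\Omega,\cK\cW_N(t;0)\Omega\rangle \leq 1 + f_N(t)$). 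The number cutoff $\mathbf{1}_{\vartheta N}(\cN)$ tames the cubic ($N^{-1/2}$) and quartic ($N^{-1}$) terms of \eqref{equ:generator_W_N}: it restricts to states with at most $\vartheta N$ particles, which compensates these prefactors; in particular the quartic term contributes at most $C\vartheta\, f_N(t)$, and for $\vartheta$ small this is absorbed into the left-hand side --- this is why ``$\vartheta$ small enough'' is assumed.

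The main obstacle is the pair creation/annihilation term $\tfrac12\int \ud x\,\ud y\, \frac{\lambda}{|x-y|+\alpha_N}\bigl(\pt(x)\pt(y)\, a_x^* a_y^* + \opt(x)\opt(y)\, a_x a_y\bigr)$: it carries no negative power of $N$, is untouched by the cutoff, and changes the particle number, so its commutator with $\cK^2$ is not obviously of the form $C(1+f_N(t))$. The point is to avoid differentiating the kernel twice (the second $x$-derivative of $\frac{1}{|x-y|+\alpha_N}$ is not integrable uniformly in $\alpha_N$); instead, in the matrix element, one integrates by parts to transfer the operators $(-i\nabla_x+A(x))^2$ and $(-i\nabla_y+A(y))^2$ produced by $[\cK, a_x^* a_y^*]$ partly onto $\pt$, using the symmetry of the kernel and the ensuing cancellations, so that only magnetic derivatives of $\pt$ of order at most two appear --- exactly what Lemma \ref{lemma:h2a_regularity} supplies. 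The resulting pair-creation/annihilation operators are then controlled by $\cN+1$, $\cK$, and ultimately $1 + \cK^2$ via Cauchy--Schwarz. The number-conserving terms $\int \ud x\,\ud y\, \frac{\lambda}{|x-y|+\alpha_N}\opt(x)\pt(y)\, a_y^* a_x$ and $\int \ud x\,\bigl(\tfrac{\lambda}{|\cdot|+\alpha_N}\ast|\pt|^2\bigr)(x)\, a_x^* a_x$ are handled by the same integration-by-parts device but more easily, and the cubic and quartic terms are the easiest, being controlled through the cutoff. Once all five groups are bounded by $C(1+f_N(t))$, Gronwall closes the argument.
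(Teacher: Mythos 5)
Your overall scheme --- differentiate $f_N(t)=\langle \cW_N(t;0)\Omega,\cK^2\,\cW_N(t;0)\Omega\rangle$ and bound $\langle\Omega_t,[\cM_N(t),\cK^2]\Omega_t\rangle$ by $C(1+f_N(t))$ --- is not the paper's argument, and at its decisive point it has a genuine gap. The problematic contribution is exactly the one you single out, the pair creation/annihilation term $\cB_t=\tfrac{\lambda}{2}\int \frac{\pt(x)\pt(y)}{|x-y|+\alpha_N}\,a_x^*a_y^*+h.c.$, but your proposed fix does not close. After the commutator $[\cK,\cB_t]$ is computed and one magnetic derivative is integrated by parts onto the creation operators (and the kernel derivative is symmetrized between the $x$- and $y$-variables, which does work gauge-covariantly), you are left with having to bound terms of the schematic form $\langle \cK\Omega_t,\, \int \frac{(D\pt)(x)\pt(y)}{|x-y|+\alpha_N}\,\overline{D}a_x^*\,a_y^*\,\Omega_t\rangle$ and $\langle \cK\Omega_t,\, \int \frac{\pt(x)\pt(y)}{|x-y|+\alpha_N}\,\overline{D}a_x^*\,\overline{D}a_y^*\,\Omega_t\rangle$. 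A norm bound on the second factor requires the kernel with one extra derivative to lie in $L^2(\ud x\,\ud y)$, and $\bigl\|\pt(x)\pt(y)(|x-y|+\alpha_N)^{-2}\bigr\|_{L^2}$ diverges like $\alpha_N^{-1/2}$, so the estimate is not uniform in $\alpha_N$; the alternative Cauchy--Schwarz with a Hardy weight moves the singularity onto the state but then produces expectations of operators of the type $\cK\,\cN\,\cK$ or effectively $\cK^3$ applied between copies of $\cK\Omega_t$, i.e.\ moments that are \emph{not} controlled by $f_N(t)$ and the $(\cN+1)^j$ bounds, so the Gronwall inequality does not close in the form $C(1+f_N(t))$. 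Your sketch asserts that ``every contribution is bounded by $C(1+f_N(t))$'' precisely at this point without supplying a mechanism, and the natural implementations fail. (A smaller symptom that you are off the paper's track: in your scheme the smallness of $\vartheta$ plays no real role and $\partial_t\pt$ never appears, so only $H^2_A$ regularity would be needed; in the statement the constant depends on $\|\vp\|_{H^3_A}$ for a reason.)

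The paper never commutes $\cK^2$ with the generator. Instead it uses the operator inequality obtained from the definition of $\cM_N(t)$: writing $\cK=\cM_N(t)-(\text{the remaining terms})$ and squaring, $\cK^2\lesssim \cM_N^2(t)+(\cN+1)^3$, where the cutoffed quartic term is absorbed for $\vartheta$ small via Lemma \ref{lemma:estimate_quartic_term} (this is where ``$\vartheta$ small enough'' enters) and the quadratic and cubic terms are bounded by $(\cN+1)^3$ using \eqref{equ:uniform_hoa_bound} as in \cite{MS}. Then $\langle(\cN+1)^3\rangle$ along $\cW_N$ is bounded as in Lemma 3.5 of \cite{RS09}, and the growth of $\langle\cM_N^2(t)\rangle$ is controlled by Gronwall: since $\cM_N(t)$ is the generator, only the \emph{explicit} time derivative $\dot\cM_N(t)$ appears, and its terms are bounded by $(\cN+1)^3$ with prefactors involving $\|\pt\|_{\hoa}$ and $\|\dot\pt\|_{\hoa}$; the uniform bound on $\|\partial_t\vptan\|_{\hoa}$ is Proposition \ref{prop:hoa_bound_time_derivative}, and this is exactly where $\vp\in H^3_A$ is used. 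If you want to salvage your direct commutator route you would have to produce a uniform-in-$\alpha_N$ form bound for $i(\cK[\cK,\cB_t]+[\cK,\cB_t]\cK)$ in terms of $\cK^2$ and $(\cN+1)^3$ alone; as it stands, that step is missing.
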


\begin{proposition} \label{prop:weak_bound_growth_K2}
Suppose that the assumptions of Proposition \ref{prop:control_growth_kinetic_energy} are satisfied (but here the assumption $N^{\beta} \alpha_N \rightarrow \infty$ as $N \rightarrow \infty$ for some $\beta > 0$ will not be used). Then there exists $C \equiv C(T, \|\vp\|_{H^2_A})$ such that 
\begin{equation}
 \langle \cU_N(t; 0) \Omega, \cK^2 \cU_N(t; 0) \Omega \rangle \, \leq \, C \, \Bigl( N^2 + \frac{N^2}{\alpha_N^2} \Bigr)
\end{equation}
for all $t \in \bR$ with $|t| \leq T$. 
\end{proposition}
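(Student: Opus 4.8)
The plan is to derive the bound from the intertwining between the fluctuation dynamics $\cU_N$ and the full evolution $e^{-i\chna t}$ contained in Hepp's identity \eqref{equ:hepp_observation}, combined with conservation of energy and of particle number for $\chna$. First I would rewrite \eqref{equ:hepp_observation} as an identity for $\cU_N(t;0)$ itself: since $a_x - \sqrt{N}\,\vptan(x) = W(\sqrt{N}\vptan)\,a_x\,W^*(\sqrt{N}\vptan)$, the relation \eqref{equ:hepp_observation} together with the irreducibility of the canonical commutation relations gives $\cU_N(t;0) = e^{i\theta_t}\,W^*(\sqrt{N}\vptan)\,e^{-i\chna t}\,W(\sqrt{N}\vp)$ for some real phase $\theta_t$. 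With $\eta_t := e^{-i\chna t}\,W(\sqrt{N}\vp)\,\Omega$ this turns the quantity to be estimated into $\langle\eta_t,\, W(\sqrt{N}\vptan)\,\cK^2\,W^*(\sqrt{N}\vptan)\,\eta_t\rangle$. Writing $\cK = \sum_l\int\ud x\,(D_l a_x)^*(D_l a_x)$ and using $W(f)\,a_x\,W^*(f) = a_x - f(x)$, I would expand $W(\sqrt{N}\vptan)\,\cK\,W^*(\sqrt{N}\vptan) = \cK - \sqrt{N}\,\mathcal{R}_t + N\,\|(-i\nabla+A)\vptan\|_2^2$, where $\mathcal{R}_t = \sum_l\int\ud x\,[(D_l a_x)^*(D_l\vptan)(x) + \overline{(D_l\vptan)(x)}(D_l a_x)]$ is linear in $a, a^*$; an integration by parts identifies it with $\sum_l(a^*(D_l^2\vptan) + a(D_l^2\vptan))$. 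Squaring then reduces the problem to bounding $\langle\eta_t,\cK^2\eta_t\rangle$ and a handful of lower-order terms with prefactors up to $N^2$.

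For the leading term I would split $\cK = \chna - \mathcal{V}$, with $\mathcal{V} = \tfrac{1}{2N}\int\ud x\,\ud y\,\tfrac{\lambda}{|x-y|+\alpha_N}\,a_x^* a_y^* a_y a_x$; on the $n$-particle sector $\mathcal{V}$ is the bounded multiplication operator $\tfrac1N\sum_{i<j}^n\tfrac{\lambda}{|x_i-x_j|+\alpha_N}$, whose operator norm is at most $\tfrac{|\lambda|\,n^2}{2N\alpha_N}$ thanks to the crude estimate $\tfrac{1}{|x-y|+\alpha_N}\leq\tfrac1{\alpha_N}$. Working sector by sector and using $(a+b)^2\leq 2a^2+2b^2$, this yields $\|\cK\eta_t\|^2 \leq 2\|\chna\eta_t\|^2 + \tfrac{\lambda^2}{2N^2\alpha_N^2}\,\langle\eta_t,\cN^4\eta_t\rangle$. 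The decisive observation is now that $\chna$ commutes with $\cN$ and generates a unitary group, so both $\|\chna\eta_t\|^2 = \langle\eta_t,(\chna)^2\eta_t\rangle$ and $\langle\eta_t,\cN^4\eta_t\rangle$ are independent of $t$ and may be computed on the initial coherent state $W(\sqrt{N}\vp)\Omega$. There $\cN$ is Poisson-distributed with mean $N$, so $\langle\cN^4\rangle\lesssim N^4$; and expanding $\|\chna\, W(\sqrt{N}\vp)\Omega\|^2$ over the sectors and using $\vp\in H^2_A$ (to control $\|(-i\nabla+A)^2\vp\|_2$) together with the same crude bound on $\mathcal{V}$, I get $\langle W(\sqrt{N}\vp)\Omega,\,(\chna)^2\,W(\sqrt{N}\vp)\Omega\rangle \lesssim N^2\,\|\vp\|_{H^2_A}^2 + N^2/\alpha_N^2$. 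Hence $\langle\eta_t,\cK^2\eta_t\rangle\lesssim N^2 + N^2/\alpha_N^2$, which incidentally shows $\eta_t\in D(\cK)$ and so legitimizes the algebra above.

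The lower-order terms I would then estimate by Cauchy--Schwarz. The additive scalar $N\,\|(-i\nabla+A)\vptan\|_2^2$ is uniformly bounded by \eqref{equ:uniform_hoa_bound}, and $\|\mathcal{R}_t\eta_t\| \lesssim \|(\cN+1)^{1/2}\eta_t\| = (N+1)^{1/2}$, where I use the uniform bound $\|D_j D_k\vptan\|_2\leq C$ from Lemma \ref{lemma:h2a_regularity} (applicable since $\vp\in H^2_A$ and $\alpha_N\leq1$ for $N$ large) and again $\langle\eta_t,\cN\eta_t\rangle = N$ by conservation. Each cross term is then a product of the already-controlled quantities $\langle\eta_t,\cK^2\eta_t\rangle^{1/2}$, $\langle\eta_t,\cK\eta_t\rangle^{1/2}$, $\|\mathcal{R}_t\eta_t\|$ and powers of $N$, and a termwise check shows that none exceeds $C\,(N^2 + N^2/\alpha_N^2)$. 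Collecting the constants, which are uniform for $|t|\leq T$ and $0\leq\alpha_N\leq1$, into $C\equiv C(T,\|\vp\|_{H^2_A})$ completes the argument.

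I expect the main difficulty to be organizational rather than analytical: keeping track of the several powers of $\sqrt{N}$ generated by the Weyl conjugation and verifying that the crude bound above --- and nothing sharper --- is needed for every term. The one place where the singularity of the Coulomb potential genuinely enters is the replacement of $\tfrac{\lambda}{|x-y|+\alpha_N}$ by $\tfrac{|\lambda|}{\alpha_N}$ when $\mathcal{V}$ acts on the non-factorized sectors $\eta_t^{(n)}$, on which the magnetic Hardy inequality is not available; this is exactly what forces the (non-optimal but here sufficient) factor $\alpha_N^{-2}$. The conceptual ingredient, by contrast, is just the conservation of energy and of particle number for $\chna$. As an alternative one could obtain the same bound from a Gronwall estimate for $\tfrac{\ud}{\ud t}\langle\cU_N(t;0)\Omega,\cK^2\cU_N(t;0)\Omega\rangle$ by commuting $\cK^2$ through the generator \eqref{equ:generator_U_N}, but the route via conservation laws for $\chna$ is shorter.
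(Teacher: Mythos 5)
Your proposal is correct and follows essentially the same route as the paper: Weyl conjugation of $\cK^2$ via Hepp's identity, the crude sectorwise bound $\mathcal{V}\lesssim \cN^2/(N\alpha_N)$ from regularization, conservation of $\cN$ and of $\chna$ to transfer the computation back to the coherent state $W(\sqrt{N}\vp)\Omega$, and the Poisson moments of $\cN$ there. The only minor variation is that you integrate the linear term $\mathcal{R}_t$ by parts and control it through $\|D_jD_k\vptan\|_2$ via Lemma \ref{lemma:h2a_regularity}, whereas the paper (citing (6.34)--(6.36) in \cite{MS}) keeps it at the $\hoa$-level; both are compatible with the stated dependence of $C$ on $\|\vp\|_{H^2_A}$.
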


\begin{proposition} \label{prop:compare_dynamics}
Suppose that the assumptions of Proposition \ref{prop:control_growth_kinetic_energy} are satisfied (but here the assumption $N^{\beta} \alpha_N \rightarrow \infty$ as $N \rightarrow \infty$ for some $\beta > 0$ will not be used). Let the evolution $\cW_N(t;s)$ be defined by \eqref{equ:definition_W_N} with $\vartheta > 0$. Then, for any $k \in \bN$, there exists $C \equiv C(k, \vartheta, T, \|\vp\|_{H^1_A})$ such that
\begin{equation}
 \Bigl\| \Bigl( \cU_N(t; 0) - \cW_N(t;0) \Bigr) \Omega \Bigr\| \, \leq \, \frac{C}{N^k} \, \Bigl( 1 + \frac{1}{\alpha_N} \Bigr)
\end{equation}
for all $t \in \bR$ with $|t| \leq T$. 
\end{proposition}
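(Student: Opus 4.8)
The plan is to compare the two evolutions by a Duhamel argument, using that the generators $\cL_N(s)$ and $\cM_N(s)$ in \eqref{equ:generator_U_N} and \eqref{equ:generator_W_N} differ only through the insertion of the number cutoff $\mathbf{1}_{\vartheta N}(\cN)$, together with a uniform-in-$N$ bound on the moments of $\cN$ in the cutoff dynamics $\cW_N$. Differentiating $s \mapsto \cU_N(t;s)\cW_N(s;0)$ and using \eqref{equ:definition_unitary_evolution}, \eqref{equ:definition_W_N}, one gets
\[
 \cU_N(t;0)\Omega - \cW_N(t;0)\Omega \, = \, -i\int_0^t \cU_N(t;s)\,\bigl(\cL_N(s) - \cM_N(s)\bigr)\,\cW_N(s;0)\,\Omega\,\ud s ,
\]
and since $\cU_N(t;s)$ is unitary,
\[
 \bigl\| \bigl(\cU_N(t;0) - \cW_N(t;0)\bigr)\Omega \bigr\| \, \leq \, \Bigl| \int_0^t \bigl\| \bigl(\cL_N(s) - \cM_N(s)\bigr)\cW_N(s;0)\Omega \bigr\|\, \ud s \Bigr| .
\]
The regularity of $\cW_N(s;0)\Omega$ needed to make this rigorous (membership in the relevant operator domains, differentiability) is supplied a posteriori by the moment bounds below, after a routine mollification argument.

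From \eqref{equ:generator_U_N} and \eqref{equ:generator_W_N} one reads off that $\cL_N(s) - \cM_N(s)$ consists only of the cubic and the quartic terms, with the cutoff $\mathbf{1}_{\vartheta N}(\cN)$ replaced by $1 - \mathbf{1}_{\vartheta N}(\cN)$. In each of these operators the annihilation operators standing to the right of $1 - \mathbf{1}_{\vartheta N}(\cN)$ only lower $\cN$, so that inputs with $\cN\le\vartheta N$ are annihilated, i.e. $\cL_N(s) - \cM_N(s) = \bigl(\cL_N(s) - \cM_N(s)\bigr)\bigl(1 - \mathbf{1}_{\vartheta N}(\cN)\bigr)$. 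I would then bound $\bigl\|\bigl(\cL_N(s) - \cM_N(s)\bigr)(\cN+1)^{-2}\bigr\|$: for the cubic part, using $\tfrac{1}{|x-y|+\alpha_N}\leq\tfrac{1}{|x-y|}$, standard estimates on creation and annihilation operators as in \cite{RS09,MS}, the Hardy inequality $\sup_x\int\tfrac{|\vptan(y)|^2}{|x-y|^2}\,\ud y \lesssim \|\vptan\|_{\hoa}^2$ and the uniform bound \eqref{equ:uniform_hoa_bound} of Proposition \ref{prop:closeness_hartree_dynamics} (which is where $\|\vp\|_{H^1_A}$ enters), one gets a factor $\lesssim N^{-1/2}$; for the quartic part, using $\tfrac{1}{|x-y|+\alpha_N}\leq\alpha_N^{-1}$ and $\int a_x^* a_y^* a_y a_x\,\ud x\,\ud y = \cN(\cN-1)$, one gets a factor $\lesssim (N\alpha_N)^{-1}$. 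Hence, for all $|s| \leq T$,
\[
 \bigl\| \bigl(\cL_N(s) - \cM_N(s)\bigr)\cW_N(s;0)\Omega \bigr\| \, \lesssim \, \Bigl(1 + \tfrac{1}{\alpha_N}\Bigr)\,\bigl\| (\cN+1)^2\,\bigl(1 - \mathbf{1}_{\vartheta N}(\cN)\bigr)\,\cW_N(s;0)\Omega \bigr\| .
\]

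The core of the proof is the uniform moment bound: for every $m\in\bN$ there is $C_m \equiv C_m(\vartheta,T,\|\vp\|_{H^1_A})$ with $\langle\cW_N(s;0)\Omega,(\cN+1)^m\cW_N(s;0)\Omega\rangle \leq C_m$ for all $|s|\leq T$, all large $N$ and all $0\leq\alpha_N\leq1$. I would prove this by Gronwall, computing $\tfrac{\ud}{\ud s}\langle\cW_N(s;0)\Omega,(\cN+1)^m\cW_N(s;0)\Omega\rangle = i\,\langle\cW_N(s;0)\Omega,[\cM_N(s),(\cN+1)^m]\cW_N(s;0)\Omega\rangle$, where only the two $\cN$-non-conserving terms of $\cM_N(s)$ contribute. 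The term $\tfrac12\int\tfrac{\lambda}{|x-y|+\alpha_N}\bigl(\vptan(x)\vptan(y)a_x^* a_y^* + \mathrm{h.c.}\bigr)$ has an $L^2(\bR^6)$ kernel of norm $\lesssim \|\vptan\|_{\hoa}^2$ (Hardy--Littlewood--Sobolev together with the inequalities \eqref{equ:diamagnetic_inequality}, \eqref{equ:magnetic_hardy_inequality}), so its commutator with $(\cN+1)^m$ is controlled in expectation by $C_m(\cN+1)^m$; for the cubic term the cutoff is decisive, because $\mathbf{1}_{\vartheta N}(\cN)$ forces the operator to vanish on states with $\cN > \vartheta N + 1$, so that the otherwise dangerous bound $\lesssim N^{-1/2}(\cN+1)^{3/2}$ is, on the range of the cutoff, dominated by $N^{-1/2}(\vartheta N + 2)^{1/2}(\cN+1) \lesssim (\cN+1)$, again giving a commutator controlled by $C_m(\cN+1)^m$. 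Summing, $\bigl|\tfrac{\ud}{\ud s}\langle\cW_N(s;0)\Omega,(\cN+1)^m\cW_N(s;0)\Omega\rangle\bigr| \leq C_m\,\langle\cW_N(s;0)\Omega,(\cN+1)^m\cW_N(s;0)\Omega\rangle$, and since $\langle\Omega,(\cN+1)^m\Omega\rangle = 1$, Gronwall's lemma yields the claim; rigorously differentiating the expectation of the unbounded operator $(\cN+1)^m$ is handled in the usual way, by first using bounded regularizations of $(\cN+1)^m$ and then removing them.

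Finally, I would combine the pieces. By the spectral theorem the moment bound gives, for every $q\in\bN$,
\[
 \bigl\| (\cN+1)^2\,\bigl(1 - \mathbf{1}_{\vartheta N}(\cN)\bigr)\,\cW_N(s;0)\Omega \bigr\|^2 \, \leq \, \frac{\langle\cW_N(s;0)\Omega,(\cN+1)^{4+2q}\cW_N(s;0)\Omega\rangle}{(\vartheta N + 1)^{2q}} \, \lesssim \, \frac{1}{N^{2q}} ,
\]
so this quantity is $O(N^{-q})$ for every $q$. Inserting into the display of the second paragraph and integrating over $s \in [0,t]$ with $|t| \leq T$, then choosing $q = k$, gives $\bigl\|\bigl(\cU_N(t;0) - \cW_N(t;0)\bigr)\Omega\bigr\| \leq C\,N^{-k}\,(1 + \alpha_N^{-1})$ with $C \equiv C(k,\vartheta,T,\|\vp\|_{H^1_A})$, as claimed. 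The main obstacle is the uniform moment bound: for the un-cut generator $\cL_N$ the cubic term couples the $m$-th moment of $\cN$ to the $(m + \tfrac12)$-th, so the Gronwall inequality would not close on a single moment; it is precisely the cutoff $\mathbf{1}_{\vartheta N}(\cN)$ built into $\cM_N$ that severs this coupling and makes the estimate uniform in $N$ (and independent of $\alpha_N$).
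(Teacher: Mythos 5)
Your proposal is correct and follows essentially the same route the paper takes, which is to cite the Duhamel-plus-moment-bound argument of \cite{MS} (6.41)--(6.47) and to obtain the high moments $\langle\cW_N(t;0)\Omega,(\cN+1)^{4+k}\cW_N(t;0)\Omega\rangle \leq C$ by the Gronwall argument of Lemma 3.5 in \cite{RS09}, in which the number cutoff $\mathbf{1}_{\vartheta N}(\cN)$ in $\cM_N$ is what closes the Gronwall inequality. The only adjustments worth noting are cosmetic: the $L^2(\bR^6)$ kernel of the off-diagonal quadratic term is $\lesssim\|\vptan\|_{\hoa}$ (not squared), and the identity $(\cL_N-\cM_N)=(\cL_N-\cM_N)(1-\mathbf{1}_{\vartheta N}(\cN))$ holds because the annihilation operators to the right of the inserted cutoff only lower $\cN$, so the operator already vanishes on $\cN\leq\vartheta N$; both are consistent with your argument.
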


\subsubsection{Growth of $\cK^2$ with respect to $\cW_N$ dynamics}
\begin{proof}[Proof of Proposition \ref{prop:growth_K2_cutoffed_dynamics}]
In what follows, we will be repeatedly using the bounds \eqref{equ:uniform_hoa_bound} and \eqref{equ:hoa_bound_time_derivative}. Recall also the shorthand notations $\pt \equiv \vptan$ and $\dot{\pt} \equiv \partial_t \pt$.

\medskip

From the definition \eqref{equ:generator_W_N} we obtain
\begin{equation}
 \begin{split}
  \cK^2 \, \lesssim \, &\cM_N^2(t) + \Bigl( \int \ud x \, \Bigl( \frac{1}{|\cdot| + \alpha_N} \ast |\pt|^2 \Bigr)(x) \, a_x^* a_x \Bigr)^2 \\
  &+ \Bigl( \int \ud x \ud y \, \frac{1}{|x-y| + \alpha_N} \, \opt(x) \pt(y) \, a_y^* a_x \Bigr)^2 \\
  &+ \Bigl( \frac{1}{2} \int \ud x \ud y \, \frac{1}{|x-y| + \alpha_N} \, (\pt(x) \pt(y) \, a_x^* a_y^* + h.c.) \Bigr)^2 \\
  &+ \Bigl( \frac{1}{\sqrt{N}} \int \ud x \ud y \, \frac{1}{|x-y|+\alpha_N} \, (\pt(y) \, a_x^* a_y^* \mathbf{1}_{\vartheta N}(\cN) a_x + h.c.) \Bigr)^2 \\
  &+ \Bigl( \frac{1}{2N} \int \ud x \ud y \, \frac{1}{|x-y|+\alpha_N} \, a_x^* a_y^* \mathbf{1}_{\vartheta N}(\cN) a_y a_x \Bigr)^2,
 \end{split}
\end{equation}
where $h.c.$ denotes the hermitian conjugate. \\
By Lemma \ref{lemma:estimate_quartic_term} below the last term is bounded by
\begin{equation}
 \Bigl( \frac{1}{2N} \int \ud x \ud y \frac{1}{|x-y| + \alpha_N} \, a_x^* a_y^* \mathbf{1}_{\vartheta N}(\cN) a_y a_x \Bigr)^2 \, \lesssim \, \vartheta^2 (\cN^2 + \cK^2).
\end{equation}
Thus, choosing $\vartheta > 0$ sufficiently small, we get
\begin{equation} \label{equ:K2_estimate}
 \begin{split}
  \cK^2 \, \lesssim \, &\cM_N^2(t) + \cN^2 + \Bigl( \int \ud x \, \Bigl( \frac{1}{|\cdot|+\alpha_N} \ast |\pt|^2 \Bigr)(x) \, a_x^* a_x \Bigr)^2 \\
  &+ \Bigl( \int \ud x \ud y \, \frac{1}{|x-y| + \alpha_N} \, \opt(x) \pt(y) \, a_y^* a_x \Bigr)^2 \\
  &+ \Bigl( \frac{1}{2} \int \ud x \ud y \, \frac{1}{|x-y| + \alpha_N} \, (\pt(x) \pt(y) \, a_x^* a_y^* + h.c.) \Bigr)^2 \\
  &+ \Bigl( \frac{1}{\sqrt{N}} \int \ud x \ud y \, \frac{1}{|x-y|+\alpha_N} \, (\pt(y) \, a_x^* a_y^* \mathbf{1}_{\vartheta N}(\cN) a_x + h.c.) \Bigr)^2.
 \end{split}
\end{equation}
In order to bound the third term on the r.h.s. of the last equation, we observe that
\begin{equation} 
 \int \ud x \Bigl( \frac{1}{|\cdot| + \alpha_N} \ast |\pt|^2 \Bigr)(x) \, a_x^* a_x \, \leq \, \Bigl\| \frac{1}{|\cdot|+\alpha_N} \ast |\pt|^2 \Bigr\|_{\infty} \, \cN \, \lesssim \, \|\pt\|_{\hoa}^2 \, \cN.
\end{equation}
Since the number of particles operator $\cN$ commutes with the operator on the l.h.s., we conclude
\begin{equation} \label{equ:K2_estimate_third_term}
 \Bigl( \int \ud x \Bigl( \frac{1}{|\cdot| + \alpha_N} \ast |\pt|^2 \Bigr)(x) \, a_x^* a_x \Bigr)^2 \, \lesssim \, \|\pt\|_{\hoa}^4 \, \cN^2.
\end{equation}
Analogously, the fourth term on the r.h.s. of \eqref{equ:K2_estimate} is bounded by
\begin{equation} \label{equ:K2_estimate_fourth_term}
 \Bigl( \int \ud x \ud y \, \frac{1}{|x-y| + \alpha_N} \, \opt(x) \pt(y) \, a_y^* a_x \Bigr)^2 \, \lesssim \, \|\pt\|_{\hoa}^4 \, \cN^2.
\end{equation}
The terms in the third and fourth line of \eqref{equ:K2_estimate} can be estimated as in (6.13) -- (6.20) in \cite{MS}. The only difference is that here we use the bound $\bigl\|\frac{1}{|\cdot|^2} \ast |\pt|^2 \bigr\|_{\infty} \lesssim \|\pt\|_{\hoa}^2$. This yields
\begin{equation} \label{equ:K2_estimate_fifth_term}
 \Bigl( \frac{1}{2} \int \ud x \ud y \, \frac{1}{|x-y| + \alpha_N} \, (\pt(x) \pt(y) \, a_x^* a_y^* + h.c.) \Bigr)^2 \, \lesssim \, \|\pt\|_{\hoa}^4 \, (\cN + 1)^2
\end{equation}
and
\begin{equation} \label{equ:K2_estimate_sixth_term}
 \Bigl( \frac{1}{\sqrt{N}} \int \ud x \ud y \, \frac{1}{|x-y|+\alpha_N} \, (\pt(y) \, a_x^* a_y^* \mathbf{1}_{\vartheta N}(\cN) a_x + h.c.) \Bigr)^2 \, \lesssim \, \|\pt\|_{\hoa}^2 \, (\cN + 1)^3.
\end{equation}
Combining \eqref{equ:K2_estimate_third_term} -- \eqref{equ:K2_estimate_sixth_term} and using the uniform $\hoa$-norm control \eqref{equ:uniform_hoa_bound}, we obtain 
\begin{equation} \label{equ:K2_estimate_simplified}
 \cK^2 \, \lesssim \, \cM_N^2(t) + (\cN +1)^3.
\end{equation}

\medskip

Moreover, there exists a constant $C \equiv C(T, \|\vp\|_{\hoa})$ such that
\begin{equation} \label{equ:growth_N+1_cubed_wrt_W_N}
 \langle \cW_N(t;0) \Omega, (\cN+1)^3 \cW_N(t;0) \Omega \rangle \, \leq \, C
\end{equation}
for all $|t| \leq T$. The proof of this bound is analogous to the proof of Lemma 3.5 in \cite{RS09} with $M = \vartheta N$. The difference is that here we control the arising terms involving the Hartree nonlinearity $\bigl\| \frac{1}{(|\cdot| + \alpha_N)^2} \ast |\pt|^2 \bigr\|_{\infty} \lesssim \|\pt\|_{\hoa}^2$ by the uniform $\hoa$-norm bound \eqref{equ:uniform_hoa_bound}. Moreover, the generator $\cM_N(t)$ also contains a cutoff in the quartic term, but this is not relevant, since the quartic term commutes with the number of particles operator. Note also that the magnetic kinetic energy operator $\cK$ commutes with the number of particles operator.

\medskip

It remains to control the growth of the expectation of $\cM_N^2(t)$. Using \eqref{equ:definition_W_N} we compute
\begin{equation} \label{equ:M_N_squared_gronwall}
 \Bigl| \frac{\ud}{\ud t} \langle \cW_N(t;0) \Omega, \cM_N^2(t) \cW_N(t;0) \Omega \rangle^{1/2} \Bigr| \, \leq \, \langle \cW_N(t;0) \Omega, \dot{\cM}_N^2(t) \cW_N(t;0) \Omega \rangle^{1/2}.
\end{equation}
We have
\begin{equation} \label{equ:M_N_time_derivative}
 \begin{split}
  \dot{\cM}_N(t) \, = \,& \int \ud x \, \Bigl( \frac{\lambda}{|\cdot|+\alpha_N} \ast (\dot{\overline{\phi}}_t \pt + \opt \dot{\phi}_t )  \Bigr)(x) \, a_x^* a_x \\
  &+ \int \ud x \ud y \, \frac{\lambda}{|x-y|+\alpha_N} \, ( \dot{\overline{\phi}}_t(x) \pt(y) + \opt(x) \dot{\phi}_t(y)) \, a_y^* a_x \\
  &+ \int \ud x \ud y \, \frac{\lambda}{|x-y| + \alpha_N} \, (\dot{\phi}_t(x) \pt(y) \, a_x^* a_y^* + h.c.) \\
  &+ \frac{1}{\sqrt{N}} \int \ud x \ud y \, \frac{\lambda}{|x-y|+\alpha_N} \, ( \dot{\phi}_t(y) \,  a_x^* a_y^* \mathbf{1}_{\vartheta N}(\cN) a_x + h.c. ).
 \end{split}
\end{equation}
Next, we estimate the squares of the terms on the r.h.s. of \eqref{equ:M_N_time_derivative}. Similarly to \eqref{equ:K2_estimate_third_term} -- \eqref{equ:K2_estimate_sixth_term} these are all bounded by $(\cN + 1)^3$ with prefactors that are now powers of $\|\pt\|_{\hoa}$ and $\|\dot{\pt}\|_{\hoa}$. Using \eqref{equ:growth_N+1_cubed_wrt_W_N} and the crucial uniform bounds \eqref{equ:uniform_hoa_bound} and \eqref{equ:hoa_bound_time_derivative} on the $\hoa$-norms of $\pt$ and $\dot{\pt}$, we obtain
\begin{equation}
 \langle \cW_N(t;0) \Omega, \dot{\cM}_N^2(t) \cW_N(t;0) \Omega \rangle \, \lesssim \, 1.
\end{equation}
Gronwall's lemma applied to \eqref{equ:M_N_squared_gronwall} then yields a constant $C \equiv C(T, \|\vp\|_{H_A^3})$ such that
\begin{equation*}
 \langle \cW_N(t;0) \Omega, \cM_N^2(t) \cW_N(t;0) \Omega \rangle \, \leq \, C
\end{equation*}
for all $|t| \leq T$.

Together with \eqref{equ:K2_estimate_simplified} and \eqref{equ:growth_N+1_cubed_wrt_W_N} the proposition follows.
\end{proof}

\begin{lemma} \label{lemma:estimate_quartic_term}
 There exists $C > 0$ such that
 \begin{equation} \label{equ:estimate_reabsorb_K}
   \Bigl( \frac{1}{2N} \int{\ud x \ud y \, \frac{1}{|x-y| + \alpha} \, a_x^* a_y^* \mathbf{1}_{\vartheta N}(\cN) a_y a_x \Bigr)^{2}} \, \leq \, C \, \vartheta^{2} \, ( \cN^{2} + \cK^{2} )
 \end{equation}
 for all $\alpha, \vartheta > 0$.
\end{lemma}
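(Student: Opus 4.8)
The plan is to reduce to the $n$-particle sectors, where the cutoff collapses to a scalar. Write
\[
 Q \, := \, \frac{1}{2N}\int \ud x\,\ud y\, \frac{1}{|x-y|+\alpha}\, a_x^* a_y^*\, \mathbf{1}_{\vartheta N}(\cN)\, a_y a_x ,
\]
so that the left-hand side of \eqref{equ:estimate_reabsorb_K} is $Q^2$. Since $\cN$ commutes with the particle-number--preserving operator $\int \ud x\,\ud y\, W(x,y)\, a_x^* a_y^* a_y a_x$ and with $\mathbf{1}_{\vartheta N}(\cN)$, the operator $Q$ leaves every sector $L^2_s(\bR^{3n})$ invariant. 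On $L^2_s(\bR^{3n})$ the two annihilation operators first move the state into the $(n-2)$-particle sector, where $\mathbf{1}_{\vartheta N}(\cN)$ acts as the number $\mathbf{1}[\,n-2\le\vartheta N\,]$; hence $Q$ restricts there to the bounded, nonnegative multiplication operator
\[
 Q_n \, = \, \frac{\mathbf{1}[\,n-2\le\vartheta N\,]}{N}\,\sum_{i<j}^n \frac{1}{|x_i-x_j|+\alpha}
\]
(boundedness uses $\alpha>0$). Thus $Q=\bigoplus_n Q_n$ is self-adjoint, $Q^2=\bigoplus_n Q_n^2$, and it suffices to bound each $Q_n^2$. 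On this sector the kinetic energy operator $\cK$ equals $\cK_n := \sum_{i=1}^n D_{x_i}^2$ with $D_{x_i} := -i\nabla_{x_i}+A(x_i)$, and $\cN$ equals $n$.

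First I would estimate $M_n := \sum_{i<j}^n \frac{1}{|x_i-x_j|+\alpha}$, squared, as a quadratic form on $L^2_s(\bR^{3n})$. A pointwise application of the Cauchy--Schwarz inequality to the $\binom n2$ summands, followed by $\frac{1}{(|x_i-x_j|+\alpha)^2}\le\frac{1}{|x_i-x_j|^2}$, gives
\[
 M_n^2 \, \le \, \binom n2 \sum_{i<j}^n \frac{1}{(|x_i-x_j|+\alpha)^2} \, \le \, \frac{n^2}{2}\sum_{i<j}^n \frac{1}{|x_i-x_j|^2}
\]
as multiplication operators. Applying the magnetic Hardy inequality \eqref{equ:magnetic_hardy_inequality} in the variable $x_i$, for a.e.\ fixed remaining variables and with the singularity translated to the origin, yields $\frac{1}{|x_i-x_j|^2}\le 4\,D_{x_i}^2$ as a form; summing and symmetrizing over the pairs gives $\sum_{i<j}\frac{1}{|x_i-x_j|^2}\le 2(n-1)\,\cK_n$, hence $M_n^2 \le n^2(n-1)\,\cK_n \le n^3\,\cK_n$ as forms. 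Finally the elementary inequality $x\le \tfrac{x^2}{n}+\tfrac n4$ (that is, $(x-\tfrac n2)^2\ge 0$), applied through the functional calculus to the nonnegative operator $\cK_n$, upgrades this to
\[
 M_n^2 \, \le \, n^3\Bigl(\tfrac1n \cK_n^2 + \tfrac n4\Bigr) \, = \, n^2\,\cK_n^2 + \tfrac14\,n^4
\]
as forms on $L^2_s(\bR^{3n})$.

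To conclude I would feed this into $Q_n^2 = \frac{\mathbf{1}[\,n-2\le\vartheta N\,]}{N^2}\,M_n^2$. On the support of the cutoff one has $n\le\vartheta N+2$, so $\tfrac{n^2}{N^2}\lesssim\vartheta^2$ (here one uses that in the application $\vartheta$ is fixed while $N\to\infty$, so $n/N\lesssim\vartheta$ on the relevant sectors), and therefore also $\tfrac{n^4}{N^2}=\tfrac{n^2}{N^2}\,n^2\lesssim\vartheta^2 n^2 = \vartheta^2\,\cN_n^2$. Hence $Q_n^2\lesssim\vartheta^2\bigl(\cK_n^2+\cN_n^2\bigr)$, and since $Q_n^2=0$ off the support of the cutoff, summing over $n$ gives $Q^2\le C\,\vartheta^2(\cK^2+\cN^2)$ with an absolute constant $C$.

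The step I expect to require the most care is the bookkeeping of the powers of $n$. The crude estimate $M_n^2\le\binom n2\sum_{i<j}\tfrac1{|x_i-x_j|^2}\le\tfrac{n^2}{2}\cdot 4(n-1)\,\cK_n$ already produces $n^3\cK_n$, but one must notice that $\sum_{i<j}\tfrac1{|x_i-x_j|^2}$ is only $O(n\cK_n)$ as a form --- not $O(n^2\cK_n)$ --- because each of the $\binom n2$ singular terms costs only one factor of the kinetic energy of a \emph{single} particle, and each particle carries $n-1$ of them. One then has to trade one further power of $\cK_n$ for $\cK_n^2$ via $(x-\tfrac n2)^2\ge 0$ so as to land exactly on $n^2\cK_n^2+n^4$: this is the only shape the cutoff can absorb, since it supplies precisely one factor $n/N\lesssim\vartheta$, and it must be spent twice --- once on the $\cK^2$ contribution and once on the $\cN^2$ contribution. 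A minor point is the self-adjointness of $Q$, which is the reason for passing to the $n$-particle sectors before squaring.
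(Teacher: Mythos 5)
Your proof is correct and takes essentially the same route as the paper's: restriction to the $n$-particle sectors, Cauchy--Schwarz over the $\binom{n}{2}$ pairs, dropping $\alpha$, and the magnetic Hardy inequality to convert $\sum_{i<j}|x_i-x_j|^{-2}$ into the kinetic energy, the only cosmetic difference being that the paper keeps the ``$1+h_j$'' form of Hardy and uses $n\le\sum_j(1+h_j)$ to reach $\vartheta^2((\cN+\cK)^2)^{(n)}$, whereas you use pure Hardy plus the numerical inequality $x\le x^2/n+n/4$ to land on $n^2\cK_n^2+n^4/4$. Your observation that the cutoff really tests $n-2\le\vartheta N$ (so the support is $n\le\vartheta N+2$) is in fact more careful than the paper, which simply states that the restriction vanishes for $n>\vartheta N$; as you note, this shift only affects the lowest sectors and is harmless in the application, where $\vartheta$ is fixed and $N\to\infty$.
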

\begin{proof}
 Denote $h_{i} = (-i \nabla_{x_i} + A(x_{i}))^{2}$ and 
 \begin{equation*}
  \tilde{\cal{V}} \, = \, \frac{1}{2N} \int{\ud x \ud y \, \frac{1}{|x-y| + \alpha} \, a_x^* a_y^* \mathbf{1}_{\vartheta N}(\cN) a_y a_x}. 
 \end{equation*} 
 Then $\tilde{\cal{V}}$ (and thus $\tilde{\cal{V}}^{2}$) leaves the number of particles invariant and on the $n$-particle sector, we have
 \begin{equation*}
  \Bigl( \tilde{\cal{V}}^{2} \Bigr)^{(n)} \, = \, \Bigl( \frac{1}{N} \sum_{1 \leq i < j \leq n}{ \frac{1}{|x_i - x_j| + \alpha} } \Bigr)^{2} \quad \mbox{if } n \leq \vartheta N
 \end{equation*}
 and $( \tilde{\cal{V}}^{2} )^{(n)} = 0$, if $n > \vartheta N$. 

 %Using Lemma \ref{lemma:operator_inequality} below, we obtain
 Using the magnetic Hardy inequality
 \begin{equation*}
  \frac{1}{|x-y|^2} \, \lesssim \, 1 + (-i\nabla_x + A(x))^2
 \end{equation*}
 we obtain
 \begin{equation*}
  \begin{split}
   \Bigl( \tilde{\cal{V}}^{2} \Bigr)^{(n)} \, & \lesssim \, \frac{n^2}{N^2} \, \sum_{1 \leq i < j \leq n}{ \frac{1}{(|x_{i} - x_{j}| + \alpha)^{2}} } \lesssim \, \frac{n^2}{N^2} \, \sum_{1 \leq i < j \leq n}{ (1 + h_{i}) } \, \lesssim \, \vartheta^{2} \, \Bigl( \sum_{j=1}^{n}{\, (1 + h_{j}) } \Bigr)^{2} \\
   &= \, \vartheta^{2} \, \Bigl( n + \sum_{j=1}^{n}{\, h_{j} } \Bigr)^{2} \, = \, \vartheta^{2} \, \Bigl( (\cN + \cK)^{2} \Bigr)^{(n)} \, \lesssim \, \vartheta^{2} \, \Bigl( \cN^{2} + \cK^{2} \Bigr)^{(n)}.
  \end{split}
 \end{equation*}
\end{proof}

\subsubsection{Weak bounds on the growth of $\cK^2$ with respect to $\cU_N$ dynamics}
\begin{proof}[Proof of Proposition \ref{prop:weak_bound_growth_K2}]
Recall the shorthand notation $\pt \equiv \vptan$. Similarly to (6.34) -- (6.36) in \cite{MS} it follows for all $|t| \leq T$ that
\begin{equation} \label{equ:K2_expectation_U_N_estimate}
 \begin{split}
  \langle \cU_N(t;0) \Omega, \cK^2 \, \cU_N(t;0) \Omega \rangle \, &\lesssim \, \langle e^{-i\chna t} W(\sqrt{N} \vp) \Omega, \cK^2 \, e^{-i\chna t} W(\sqrt{N} \vp) \Omega \rangle \\
  &\quad + N \|\pt\|_{\hoa}^2 \langle W(\sqrt{N} \vp) \Omega, (\cN + 1) \, W(\sqrt{N} \vp) \Omega \rangle + N^2 \, \|\pt\|_{\hoa}^4.
 \end{split}
\end{equation}
We have
\begin{equation*}
 \chna \, = \, \cK + \mathcal{V}, \quad \mbox{where} \quad \mathcal{V} \, = \, \frac{1}{2N} \, \int \ud x \ud y \, \frac{\lambda}{|x-y| + \alpha_N} \, a_x^* a_y^* a_x a_y.
\end{equation*}
Since
\begin{equation} \label{equ:simple_V_estimate}
 \mathcal{V} \, \lesssim \, \frac{1}{N \alpha_N} \, \cN^2
\end{equation}
and since $[\mathcal{V}, \cN] = 0$, we find
\begin{equation} \label{equ:K2_operator_simple_estimate}
 \cK^2 \, \lesssim \, (\chna)^2 + \mathcal{V}^2 \, \lesssim \, (\chna)^2 + \frac{1}{N^2 \alpha_N^2} \cN^4.
\end{equation}
It is at this point that we use the regularization of the Coulomb potential. It allows us to estimate the interaction part $\mathcal{V}$ as in \eqref{equ:simple_V_estimate} and in this way to obtain the weak bound \eqref{equ:K2_operator_simple_estimate} on $\cK^2$. \\
Inserting \eqref{equ:K2_operator_simple_estimate} into \eqref{equ:K2_expectation_U_N_estimate} and using the bound \eqref{equ:uniform_hoa_bound}, we obtain for all $|t| \leq T$ that 
\begin{equation} \label{equ:K2_expectation_U_N_estimate_intermediate}
 \begin{split}
  \langle \cU_N(t;0) \Omega, \cK^2 \, \cU_N(t;0) \Omega \rangle \, \lesssim& \, \langle W(\sqrt{N} \vp) \Omega, \cK^2 \, W(\sqrt{N} \vp) \Omega \rangle + \frac{1}{N^2 \alpha_N^2} \langle W(\sqrt{N} \vp) \Omega, \cN^4 W(\sqrt{N} \vp) \Omega \rangle \\
  &\, + N \, \langle W(\sqrt{N} \vp) \Omega, (\cN + 1) W(\sqrt{N} \vp) \Omega \rangle + N^2.
 \end{split}
\end{equation}
From the properties of the Weyl operator (see e.g. Section 3 in \cite{MS}) we infer 
\begin{equation} \label{equ:weyl_operator_number_estimate}
 \langle W(\sqrt{N} \vp) \Omega, (\cN+1) W(\sqrt{N} \vp) \Omega \rangle \, \lesssim \, N \quad \mbox{and} \quad \langle W(\sqrt{N} \vp) \Omega, \cN^4 W(\sqrt{N} \vp) \Omega \rangle \, \lesssim \, N^4.
\end{equation}
Furthermore, we conclude as in (6.40) in \cite{MS} that
\begin{equation} \label{equ:K2_weyl_id}
  \langle W(\sqrt{N} \vp) \Omega, \cK^2 \, W(\sqrt{N} \vp) \Omega \rangle \, = \, N^2 \, \|(-i\nabla+A) \vp\|_2^4 + N \, \|(-i\nabla+A)^2 \vp\|_2^2.
\end{equation}

Inserting \eqref{equ:weyl_operator_number_estimate} and \eqref{equ:K2_weyl_id} into \eqref{equ:K2_expectation_U_N_estimate_intermediate} and using the assumption about the initial datum, we obtain
\begin{equation*}
 \langle \cU_N(t;0) \Omega, \cK^2 \, \cU_N(t;0) \Omega \rangle \, \lesssim N^2 + \frac{N^2}{\alpha_N^2}
\end{equation*}
for all $|t| \leq T$, which completes the proof.
\end{proof}

\subsubsection{Comparison of $\cU_N$ and $\cW_N$ dynamics}
\begin{proof}[Proof of Proposition \ref{prop:compare_dynamics}]
The proof of Proposition \ref{prop:compare_dynamics} proceeds as in (6.41) -- (6.47) in \cite{MS}. It relies on the existence of a constant $C \equiv C(k, T, \|\vp\|_{\hoa})$ such that
\begin{equation*}
 \langle \cW_N(t;0) \Omega, (\cN +1)^{4+k} \, \cW_N(t;0) \Omega \rangle \, \leq \, C
\end{equation*}
for all $|t| \leq T$, which follows similarly to Lemma 3.5 in \cite{RS09}.
\end{proof}

\thebibliography{hh}

\bibitem{AHS78} Avron, J.; Herbst, I.; Simon, B.: Schr{\"o}dinger operators with magnetic fields. {I}. {G}eneral interactions. \textit{Duke Math. J.} \textbf{45} (1978), no. 4, 847--883.

\bibitem{BEGMY02} Bardos, C.; Erd\H{o}s, L.; Golse, F.; Mauser, N.; Yau, H.-T.: Derivation of the {S}chr{\"o}dinger-{P}oisson equation from the quantum {$N$}-body problem. \textit{C. R. Math. Acad. Sci. Paris} \textbf{334} (2002), no. 6, 515--520.

\bibitem{BGM00} Bardos, C.; Golse, F.; Mauser, N.: Weak coupling limit of the {$N$}-particle {S}chr{\"o}dinger equation. \textit{Methods Appl. Anal.} \textbf{7} (2000), no. 2, 275--293.

\bibitem{Cao} Cao, P.: Global existence and uniqueness for the magnetic Hartree equation. \textit{J. Evol. Equ.} \textbf{11} (2011), no. 4, 811--825.

\bibitem{Caz} Cazenave, T.: Semilinear Schr\"odinger equations. \textit{Courant Lecture Notes in Mathematics}, vol. 10, Amer. Math. Soc., Providence, 2003.

\bibitem{CLS} Chen, L.; Lee, J. O.; Schlein, B.: Rate of convergence towards Hartree dynamics. \textit{J. Stat. Phys.} \textbf{144} (2011), no. 4, 872--903.

\bibitem{ElS07} Elgart, A.; Schlein, B.: Mean field dynamics of boson stars. \textit{Comm. Pure Appl. Math.} \textbf{60} (2007), no. 4, 500--545.

\bibitem{ESY10} Erd{\H{o}}s, L.; Schlein, B.; Yau, H.-T.: Derivation of the Gross-Pitaevskii equation for the dynamics of Bose-Einstein condensate. \textit{Ann. of Math. (2)} \textbf{172} (2010), no. 1, 291--370.

\bibitem{EY01} Erd{\H{o}}s, L.; Yau, H.-T.: Derivation of the nonlinear {S}chr\"odinger equation from a many body {C}oulomb system. \textit{Adv. Theor. Math. Phys.} \textbf{5} (2001), no. 6, 1169--1205.

\bibitem{FGS07} Fr{\"o}hlich, J.; Graffi, S.; Schwarz, S.: Mean-field- and classical limit of many-body {S}chr{\"o}dinger dynamics for bosons. \textit{Comm. Math. Phys.} \textbf{271} (2007), no. 3, 681--697.

\bibitem{FKS09} Fr{\"o}hlich, J.; Knowles, A.; Schwarz, S.: On the mean-field limit of bosons with {C}oulomb two-body interaction. \textit{Comm. Math. Phys.} \textbf{288} (2009), no. 3, 1023--1059.

\bibitem{GV} Ginibre, J.; Velo, G.: The classical field limit of scattering theory for non-relativistic many-boson systems. I and II. \textit{Comm. Math. Phys.} \textbf{66} (1979), 37--76, and \textbf{68} (1979), 45--68.

\bibitem{H74} Hepp, K.: The classical limit for quantum mechanical correlation functions. \textit{Comm. Math. Phys.} \textbf{35} (1974), 265--277.

\bibitem{KP} Knowles, A.; Pickl, P.: Mean-field dynamics: singular potentials and rate of convergence. \textit{Comm. Math. Phys.} \textbf{298} (2010), no. 1, 101--138.  

\bibitem{LS81} Leinfelder, H.; Simader, C.: Schr{\"o}dinger operators with singular magnetic vector potentials. \textit{Math. Z.}, \textbf{176} (1981), no. 1, 1--19.

\bibitem{LL} Lieb, E.; Loss, M.: Analysis. \textit{Graduate Studies in Mathematics}, vol. 14, Amer. Math. Society, Providence, 2001.

\bibitem{MS} Michelangeli, A.; Schlein, B.: Dynamical Collapse of Boson Stars. \textit{Comm. Math. Phys.} \textbf{311} (2012), no. 3, 645--687.

\bibitem{RS_02} Reed, M.; Simon, B.: Methods of modern mathematical physics II. \textit{Academic Press}, New York, 1975.

\bibitem{RS09} Rodnianski, I.; Schlein, B.: Quantum fluctuations and rate of convergence towards mean field dynamics. \textit{Comm. Math. Phys.} \textbf{291} (2009), no. 1, 31--61.

\bibitem{S80} Spohn, H.: Kinetic Equations from Hamiltonian Dynamics. \textit{Rev. Mod. Phys.} \textbf{52} (1980), no. 3, 569--615.

\bibitem{Yaj} Yajima, K.: Schr\"odinger evolution equations with magnetic fields. \textit{J. Analyse Math.} \textbf{56} (1991), 29--76.

\end{document}